\documentclass[runningheads, envcountsame, a4paper]{llncs}
\usepackage{amsfonts,amssymb,amsmath}
\usepackage[T2A]{fontenc}
\usepackage[utf8]{inputenc}        

\usepackage{microtype}

\usepackage[hyphens]{url}
\usepackage{verbatim}


\let\leq\leqslant
\let\geq\geqslant

\def\poly{\mathop{\mathrm{poly}}\nolimits}  

\def\leGen#1#2{\mathop{\leq^{\mathrm{#2}}_{\mathrm{#1}}}}

\def\lelog{\leGen{log}{}}

\def\lerat{\mathop{\leq_{\mathrm{rat}}}}

\def\nreg{\mathrm{NRR}}
\def\Per{\mathrm{Per}}
\def\Primes{\mathrm{PRIMES}}

\let\es\varnothing

\def\NN{\mathbb N}

\let\epsilon\varepsilon
\let\eps\varepsilon
\let\al\alpha
\def\bal{\boldsymbol{\alpha}}

\let\rendmarker\vartriangleleft

\def\R{{\cal R}}
\def\A{{\cal A}}
\def\B{{\cal B}}

\def\T{{\cal T}}

\def\0{\mathsf{false}}
\def\1{\mathsf{true}}
\def\start{\mathsf{start}}


\def\PP{{\mathbf{P}}}
\def\NP{{\mathbf{NP}}}
\def\REG{{\mathsf{REG}}}

\def\CFL{{\mathsf{CFL}}}

\def\DCFL{{\mathsf{DCFL}}}
\def\DSA{{\mathsf{DSA}}}
\def\SA{{\mathsf{SA}}}
\def\CVP{{\mathsf{CVP}}}
\def\SAT{{\mathsf{3\text{-}SAT}}}
\def\SASAT{{\mathsf{SA\text{-}SAT}}}
\def\SACVP{{\text{\sf{SA\text{-}CVP}}}}
\def\SAPROT{{\text{\sf{SA\text{-}PROT}}}}
\def\PROT{\mathsf{PROT}}
\def\op{\mathop{\mathrm{op}}\nolimits}
\def\Ops{\mathbf{Ops}}

\def\PSPACE{{\mathbf{PSPACE}}}
\def\CFL{{\mathsf{CFL}}}
\def\C{\mathbf{C}}

\def\ceps{\cancel{\eps}}

\def\wreps{{\eps\textbf{-write}}}
\def\wrceps{{\ceps\textbf{-write}}}

\title{On Computational Complexity of Set Automata\thanks{ The study
    has been funded by the Russian Academic Excellence Project
    '5-100'. Supported in     part by RFBR grants 16--01--00362 and
    17--01--00300.}}

\author{A. Rubtsov\inst{1}\inst{2}
\and
M. Vyalyi\inst{3}\inst{1}\inst{2}}
\institute{
 National Research University Higher School of Economics\\
\and Moscow Institute of Physics and Technology\\
\and Dorodnicyn Computing Centre, FRC CSC RAS\\
\email{\{rubtsov99, vyalyi\}@gmail.com}
}

\def\tp{\textbf{test+}}
\def\test{\textbf{test}}
\def\tm{\textbf{test-}}
\def\ins{\textbf{in}}
\def\out{\textbf{out}}
\def\wr{\textbf{write}}
\def\beg{\textbf{begin}}
\def\ends{\textbf{end}}

\usepackage{mathrsfs}

\def\CL{{\mathscr L}}

\usepackage[normalem]{ulem}
\usepackage{color}
\usepackage{xcolor}

\usepackage{enumerate}

\usepackage{marginnote}

\definecolor{WildStrawberry}{HTML}{FF43A4}
\definecolor{DarkGray}{HTML}{666666}

\usepackage{ifthen}

\newcommand{\malert}[2][]{\marginpar{\color{WildStrawberry}\ifthenelse{\equal{#1}{}}{\vspace{-2ex}}{\vspace{-2ex}\vspace{#1}}  #2}}

\newcommand{\mcomment}[2][]{\marginpar{\ifthenelse{\equal{#1}{}}{\vspace{-2ex}}{\vspace{-2ex}\vspace{#1}} \color{DarkGray} #2}}
\newcommand\reduline{\bgroup\markoverwith
      {\textcolor{red}{\rule[-0.5ex]{2pt}{0.4pt}}}\ULon}

\usepackage{dashrule}

\newcommand\alertuline{\bgroup\markoverwith
      {\color{WildStrawberry}{\hdashrule[-0.5ex]{2pt}{0.4pt}{1pt}}}\ULon}

\newcommand\commentuline{\bgroup\markoverwith
			      {\color{DarkGray}{\hdashrule[-0.5ex]{2pt}{0.4pt}{1pt}}}\ULon}





	\usepackage{mathtools} 
    \usepackage[makeroom]{cancel}
	\def\ceps{\cancel{\eps}}

\begin{document}

\maketitle

\begin{abstract}
We consider a computational model which is known as set automata.

The set automata are one-way finite automata
 with an additional storage---the set.  
		There are two kinds of set automata---the deterministic and the nondeterministic ones. We denote them as DSA and NSA respectively. The model was introduced by M. Kutrib, A. Malcher, M. Wendlandt in 2014 in \cite{KutribDLT2014} and \cite{KutribDCFS2014}. 
It was shown that
 DSA-languages look similar to DCFL due to their closure properties  and NSA-languages look similar to CFL due to their undecidability
properties.

		In this paper, which is an extended version of the conference paper  \cite{RV17}, we show that this similarity is natural: we prove that languages recognizable by NSA form a rational cone, so as CFL.
		
The main topic of this paper is computational complexity: we prove
that 
\begin{itemize}
\item [--] languages recognizable by DSA belong to $\PP$ and there are
$\PP$-complete languages among them;
\item [--]  languages recognizable
by NSA are in $\NP$ and there are
$\NP$-complete languages among them;
\item [--] the word membership problem is $\PP$-complete for DSA
  without $\eps$-loops and $\PSPACE$-complete for general DSA;
\item [--] the emptiness problem is in $\PSPACE$ for NSA and,
  moreover, it is $\PSPACE$-complete for DSA.
\end{itemize}
\end{abstract} 
	  
\section{Introduction}\label{Intro}	              
We consider a computational model which is known as set automata. A
set automaton is a one-way finite automaton equipped with an
additional storage---the set ${\mathbb S}$---which is accessible through the
work tape. On processing of a word, the set automaton can write a
word~$z$ on the work tape and perform one of the following
operations: the operation $\ins$ inserts the word $z$ into the set
${\mathbb S}$, the operation $\out$ removes the word $z$ from the set ${\mathbb S}$ if
${\mathbb S}$ contains $z$, and the operation $\test$ checks whether $z$
belongs to ${\mathbb S}$.  After each operation the work tape is
erased.

There are two kinds of set automata---the deterministic and the nondeterministic ones. We denote them as DSA and NSA respectively. 

If determinism or nondeterminism of an automaton is not significant, we use abbreviation SA, and we refer to the class of languages recognizable by (N)SA as $\SA$. We denote as $\DSA$ the class of languages recognizable by DSA.

\subsection{The Definition, Known Properties and Examples}

We start with formal definitions. A set automaton $M$
 is defined by a tuple
$$M = \langle S, \Sigma, \Gamma, \vartriangleleft, \delta, s_0, F \rangle, \text{where}$$
	\begin{itemize}
	\item $S$ is the finite set of states;
	\item $\Sigma$ is the finite alphabet of the input tape;
	\item	$\Gamma$  is the finite alphabet of the work tape;
	\item $\vartriangleleft \not\in \Sigma$ is the right endmarker;
	\item $s_0 \in S$ is the initial state;
	\item $F \subseteq S$ is the set of accepting states;
	\item $\delta $  is the transition relation:
\end{itemize}
$$\delta \subseteq S \times (\Sigma \cup \{\varepsilon, \vartriangleleft\}) \times \left[S \times (\Gamma^*\cup \{\textbf{in}, \textbf{out}\}) \cup S \times \{ \textbf{test} \} \times S \right]. $$

In the deterministic case $\delta$ is the function
$$\delta : S \times (\Sigma \cup \{\varepsilon, \vartriangleleft\}) \to \left[S \times (\Gamma^*\cup \{\textbf{in}, \textbf{out}\}) \cup S \times \{ \textbf{test} \} \times S \right]. $$
                                                                                                      
As usual, if $\delta(s, \eps)$ is defined, then $\delta(s,a)$ is not defined 
for every $a \in \Sigma$.

A \emph{configuration} of $M$ is a tuple $ (s,v,z, {\mathbb S})$ consisting of the state $s\in S$, the unprocessed part of the input tape $v\in\Sigma^*$, the content of the work tape $z\in\Gamma^*$, and the content of the set ${\mathbb S}\subset \Gamma^*$.  The transition relation determines the action of $M$ on configurations. We use $\vdash$ notation for this action. It is defined as follows
\begin{align}
	(s, xv, z, {\mathbb S} ) &\vdash (s^\prime, v, zz^\prime, {\mathbb S})  &&\text{if } (s, x, (s^\prime,z^\prime))\in\delta,
&&z'\in\Gamma^*;\\  
  (s, xv, z, {\mathbb S} ) &\vdash (s^\prime, v, \varepsilon, {\mathbb S}\cup\{z\}) &&\text{if } (s, x,(s^\prime,\textbf{in}))\in\delta;\label{in-query}\\ 
  (s, xv, z, {\mathbb S} ) &\vdash (s^\prime, v, \varepsilon, {\mathbb S}\setminus\{z\}) &&\text{if } (s, x,(s^\prime,\textbf{out}))\in\delta;\label{out-query}\\
  (s, xv, z, {\mathbb S} ) &\vdash (s_{+}, v, \varepsilon, {\mathbb S}) &&\text{if } (s, x,(s_+,\textbf{test},s_{-}))\in\delta,&& z \in {\mathbb S};\label{testp-query}\\
  (s, xv, z, {\mathbb S} ) &\vdash (s_{-}, v, \varepsilon, {\mathbb S}) &&\text{if } (s, x, (s_+,\textbf{test},s_{-}))\in\delta,&&z \not\in {\mathbb S}.\label{testm-query}  
\end{align}
Operations with the set (transitions~(\ref{in-query}--\ref{testm-query}) above) are called \emph{query operations}. A~word $z$ in a configuration to which a query is applied (the content of the work tape) is called a \emph{query word}.

We call a configuration \emph{accepting} if the state of the configuration is  accepting (belongs to the set $F$) and the word is processed till the endmarker. 
 So the accepting configuration
has the form $(s_f, \varepsilon, z, {\mathbb S})$, where $s_f\in F$.

The~set automaton
 accepts a word $w$ if there exists a run from the initial configuration  $(q_0, w\vartriangleleft, \varepsilon, \varnothing)$ to some accepting one.

  Set automata were introduced by M.~Kutrib, A.~Malcher, M.~Wendlandt
  in 2014 in \cite{KutribDLT2014} and \cite{KutribDCFS2014}. The
  results of these conference papers are covered by the journal
  paper~\cite{KutribSApaper2016}, so in the sequel we give references to the journal variant.

We recall briefly  results from~\cite{KutribSApaper2016} about  structural and decidability properties of $\DSA$. They are presented in the  tables, see Fig.~\ref{SAtables}.
 In the first table we list decidability problems: emptiness, regularity, equality to a regular language and finiteness.  
In  the tables, $R$ denotes an arbitrary regular language.
 The second table describes the structural properties: $L$, $L_1$ and $L_2$ are languages from the corresponding classes; we write $+$ 
in a cell if the class is closed under the operation, otherwise we write $-$.

	{
		\setlength{\tabcolsep}{7pt}     
		\renewcommand{\arraystretch}{1.2} 
		\setlength\columnsep{1em}
		\begin{figure}[t]
		   \begin{multicols}{2}

		   	 \begin{tabular}{|c|c|c|c|} 
					\hline
					 \phantom{!} & $\DSA$ & $\CFL$ & $\DCFL$ \\ \hline
					$L \stackrel{?}{=} \es $  & + & + & + \\ \hline
					$L \stackrel{?}{\in} \REG$   & \textbf{+} & $-$ & + \\ \hline
					$L \stackrel{?}{=} R $    & \textbf{+} & $-$ & + \\ \hline
					$|L| \stackrel{?}{<} \infty$    & \textbf{+} & $+$ & + \\ \hline
				\end{tabular}  
				\smallskip

				\begin{tabular}{|c|c|c|c|}
				 \hline 	  
				  &  $\DSA$ & $\CFL$ & $\DCFL$ \\\hline
				$L_1 \cdot L_2 $ 	  & $-$ & $+$ & $-$ \\\hline
				$L_1 \cup L_2$	   	  & $-$ & $+$ & $-$ \\\hline
				$L_1 \cap L_2$	   	  & $-$ & $-$ & $-$ \\\hline
				$\Sigma^*\setminus L$ & $+$ & $-$ & $+$  \\\hline
				  $L \cup R$	   	  & $+$ & $+$ & $+$ \\\hline
					$L \cap R$	   	  & $+$ & $+$ & $+$ \\\hline
				\end{tabular}   	
		   \end{multicols}  		   		  
			
			\caption{Structural and decidability properties }\label{SAtables}
	\end{figure}}

From Fig.~\ref{SAtables} one can see that $\DSA$ languages look similar to $\DCFL$. Let us consider an example of a DSA-recognizable language that is not a DCFL (and not even a CFL).

\begin{example}\label{PerExample}
	  Denote $\Sigma_k = \{0, 1, \ldots k-1\}$. We define $\Per_k = \{ (w\#)^n\,|\, w \in \Sigma_k^*, n \in \NN  \} $ to be the language of repetitions of words over $\Sigma_k$ 
separated by  the delimiter~$\#$.
For any $k$
 there exists DSA $M$ recognizing $\Per_k$.
\end{example}
\begin{proof}
 Firstly $M$ copies the letters from $\Sigma_k$ on the work tape until it meets $\#$. Then it performs 
the operation $\ins$
 on $\#$.
So, after processing of the prefix $w\#$,  the set ${\mathbb S}$
contains~$w$. 
Then, $M$ copies letters from $\Sigma_k$ on the work tape and performs 
the operation $\test$
 on each symbol~$\#$ until reaches~$\rendmarker$.
 DSA $M$ accepts the input iff 
all tests are positive; an accepted input
looks like $w\#w\#\ldots w\# \in \Per_k$. 
\qed\end{proof}

One can naturally assume that $\DSA$ class contains $\DCFL$ (or even $\CFL$), but this assumption is false. As was shown in~\cite{KutribSApaper2016}, 
the language
 $\{ w\#w^R \mid  w \in \Sigma^*, |\Sigma| \geq 2 \}$ is not recognizable by any DSA.
\begin{theorem}[\cite{KutribSApaper2016}]\label{DCFLDSA}
		The classes $\DCFL$ and $\DSA$ are incomparable. 
\end{theorem}

The undecidability results for NSA
 are based on the fact that NSA can accept the set of invalid computations of a Turing machine.
\begin{theorem}[\cite{KutribSApaper2016}]\label{NSAundecidability}
For NSA the questions of universality, equivalence with regular sets, equivalence, inclusion, and regularity are not 
semi-decidable. Furthermore, it is not semi-decidable whether the language accepted by some NSA belongs to $\DSA$.
\end{theorem}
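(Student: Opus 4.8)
The plan is to reduce each undecidability claim to a known undecidable (or non-semi-decidable) problem for Turing machines, using the standard "invalid computations" technique, since the theorem is explicitly attributed to the fact that NSA can accept the set of invalid computations of a Turing machine. Fix a Turing machine $T$ and an input $x$. A \emph{computation history} of $T$ on $x$ is a word $C_0 \# C_1 \# \cdots \# C_t$ over a suitable alphabet, where each $C_i$ encodes a configuration of $T$ and $C_{i+1}$ follows from $C_i$ by one legal transition. The key lemma I would establish first is that the language $\mathrm{INVALID}_{T,x}$ of all words that are \emph{not} valid accepting computation histories of $T$ on $x$ is recognizable by some NSA, and moreover that an NSA for it can be constructed effectively from $T$ and $x$. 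This is the technical heart of the construction.

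To build this NSA I would decompose "invalidity" into a disjunction of easily checkable defects: the word is not of the correct syntactic shape (wrong initial configuration $C_0$, no accepting final configuration, or malformed block structure), or some consecutive pair $C_i \# C_{i+1}$ violates the transition relation. The syntactic checks are regular and handled by the finite control. For the consecutive-pair check, the NSA guesses nondeterministically the position of a faulty block boundary, copies $C_i$ onto the work tape and inserts it into the set via $\ins$, then reads $C_{i+1}$ and uses $\test$ to detect a local mismatch. Because a single Turing step alters the tape only in a bounded window around the head, the comparison reduces to checking finitely many local windows; the set $\mathbb{S}$ serves exactly as the mechanism for comparing a symbol written far apart in the input against what the transition function predicts. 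This mirrors the classical CFL proof that invalid computations form a context-free language, with the set automaton's $\ins$/$\test$ queries playing the role that nondeterministic matching plays for pushdown automata.

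Given the lemma, the individual parts of the theorem follow by standard reductions, which I would carry out in sequence. For \emph{universality}: $\mathrm{INVALID}_{T,x}$ equals $\Sigma^*$ precisely when $T$ does not accept $x$, so a semi-decision procedure for NSA-universality would semi-decide the complement of the halting problem, which is impossible. For \emph{equivalence with a regular set} and for \emph{equivalence} in general: take the fixed regular language $\Sigma^*$ (itself trivially NSA-recognizable) and reduce from universality. For \emph{inclusion}: $\Sigma^* \subseteq L(M)$ is again universality. For \emph{regularity}: here I would arrange that $\mathrm{INVALID}_{T,x}$ is regular iff $T$ does not accept $x$, using that when there is exactly one accepting history the language of valid histories is a single word (hence the invalid language is cofinite, thus regular), whereas in the non-accepting case invalidity is universal; the standard trick is to pad so that the presence of a genuine accepting computation forces non-regularity. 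The final claim, non-semi-decidability of membership in $\DSA$, follows the same template once I observe that $\Sigma^*$ lies in $\DSA$ while the valid-computation language can be forced out of $\DSA$, so $L(M) \in \DSA$ becomes equivalent to non-acceptance.

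The main obstacle I expect is the NSA construction itself, specifically verifying that a \emph{single} pass with the set suffices to detect a transition violation between adjacent configurations. Unlike a pushdown stack, the set has no inherent ordering, so I must be careful that the guessed position, the $\ins$ of the earlier block, and the $\test$ against the later block are coordinated by the one-way finite control without the ability to revisit input; I would handle this by having the automaton insert, for the guessed boundary, position-tagged symbols of $C_i$ and then query position-tagged symbols of $C_{i+1}$, so that a mismatch in the bounded update window yields a failed or spurious $\test$. Getting the encoding of configurations and the tagging scheme right, so that exactly the illegal histories are accepted and all legal ones rejected along every run, is the delicate part; the reductions to the halting problem in the second half are then routine.
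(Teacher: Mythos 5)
First, a caveat on the comparison: the paper does not prove Theorem~\ref{NSAundecidability} at all --- it is imported from \cite{KutribSApaper2016}, and the surrounding text only records that the proofs there rest on the fact that an NSA can accept the set of invalid computations of a Turing machine. Your overall strategy is therefore exactly the intended one, and your reductions for universality, equivalence (with a regular set or with another NSA) and inclusion are correct as stated: they all collapse to deciding $L(M)\stackrel{?}{=}\Sigma^*$, whose semi-decidability would semi-decide non-acceptance of $T$ on $x$, which is $\Pi_1$-complete.

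There are, however, three concrete gaps. (1) The core construction: ``insert position-tagged symbols of $C_i$'' cannot be taken literally, because after every $\ins$ the work tape is erased, so to insert a pair $(j,c)$ for each position $j$ you would have to regenerate a length-$j$ tag from scratch, which a finite control cannot count out. The working version inserts a \emph{single} tally: guess one position $j$, write one work-tape symbol per input symbol while scanning $C_i$ up to $j$ so that the tape holds $a^j$, keep the constant-size window of $C_i$ around $j$ in the finite control, perform $\ins$, then rebuild a tally the same incremental way while scanning $C_{i+1}$ and $\test$ at the guessed matching position --- exactly the mechanism of Example~\ref{PerExample}. You gesture at this but do not pin it down, and the plural ``position-tagged symbols'' reads as the non-working variant. (2) Regularity: as you yourself observe, with a deterministic $T$ and a fixed input the valid history is a single word, so the invalid language is cofinite, hence regular \emph{whether or not} $T$ accepts; the reduction as sketched proves nothing. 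The standard repair (make the valid set infinite, e.g.\ by letting the accepting configuration repeat or by taking computations over all inputs of a machine that accepts either nothing or infinitely many inputs, then a Myhill--Nerode argument on the complement) is not throwaway ``padding''; it is the entire content of that case and must be carried out. (3) The $\DSA$ claim is asserted rather than argued: you need a concrete NSA-recognizable witness outside $\DSA$ (e.g.\ $\{w\#w^R\}$ from the discussion preceding Theorem~\ref{DCFLDSA}) spliced behind the invalid-computation language, together with closure of $\DSA$ under intersection with regular sets and under left quotient by a fixed word, so that $L(M)\in\DSA$ would force the witness into $\DSA$. Without these three repairs the proposal is a correct plan but not yet a proof.
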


We shall also mention a beautiful result from~\cite{KutribSApaper2016} about DSA languages over unary alphabet. It was shown that if language $L \subseteq a^*$ is recognizable by a DSA then $L$ is a regular language. Next example shows that for NSA it is false.

\begin{example}\label{PrimesExample}
  Let $\Primes = \{ a^p \mid \text{ $p$ is a prime number }  \}$. There exists NSA~$M$ with the unary alphabet of 
the work tape
recognizing language $\{a\}^*\setminus\Primes$ and this language is not regular. 
\end{example}

	\begin{proof}
   NSA~$M$ guesses a divisor $k$ of $n$ on an input word $a^n$, processes the input word $a^n$ letter by letter and copies letters to the work tape. After first $k$ letters $M$ performs $\ins$ operation, so ${\mathbb S}=\{a^k\}$. After each next $k$ letters (the position is guessed nondeterministically)  $M$ performs $\test$. $M$ accepts $a^n$ iff all the tests are positive and the last test was on the last letter. 
If $M$ guesses $k$ wrong at some point, then either some test after this guess is negative or there is no test on the last letter. In both cases $M$ doesn't accept $a^n$ on this run.
So, $a^n \in L(M)$ iff $n = km$. It is well-known that $\Primes$ is not a regular language as well as its compliment.
\qed	\end{proof}

It is worth to mention a quite similar model  presented by K.-J.~Lange and K.~Reinhardt in~\cite{Lange96setautomata}. We refer to this model as L-R-SA. In this model there are no $\ins$ and $\out$ operations; in the case of $\tm$ result the tested word is added to the set after the query; also L-R-SA have no $\eps$-moves. The results from~\cite{Lange96setautomata} on computational complexity for L-R-SA are  similar to ours: the membership problem is $\PP$-complete for L-R-DSA, and $\NP$-complete for L-R-NSA.

\subsection{Rational transductions}
Despite the fact that  the classes $\DCFL$ and $\DSA$ are incomparable, their structural and decidability properties are similar. In this paper we show that this similarity is natural: we prove that the class of languages recognizable by NSA has the same structure as context-free languages.

More exactly, both classes are  principal rational cones.

Let us recall the definitions.

	A \emph{finite state transducer} (FST) is a nondeterministic finite automaton with 
the output tape.
 Let $T$ be a FST. 
The set $T(u) $ consists of all words $v$ that $T$ outputs on runs from the initial state to a final state  while processing of $u$.
 So, FST $T$ defines a \emph{rational transduction} $T(\cdot)$. We also define $T(L) = \bigcup\limits_{u\in L}T(u)$.

The \emph{rational dominance} relation $A \lerat B$ holds
if there exists a FST $T$ such that $A = T(B)$, here $A$ and $B$ are languages.

A \emph{rational cone} is a family of languages $\C$
 closed under the rational dominance relation: 
$A \lerat B $ and $B \in \C$ imply $A \in \C$.
If there exists a language $F\in \C$ such that 
$L \lerat F$ 
for any $L \in \C$,
then $\C$ is a \emph{principal} rational cone generated by $F$; we denote it as $\C = \T(F)$.

Rational transductions for context-free languages were thoroughly investigated in the 70s, particularly by the 
French school. The main results of this research are published in J.~Berstel's book~\cite{BerstelBook}. As described in~\cite{BerstelBook}, 
it follows from the Chomsky-Sch\"utzenberger theorem
that $\CFL$ is a principal rational cone:   $\CFL = \T(D_2)$, where $D_2$ is the Dyck language on two brackets.

\subsection{Our contribution}

In this paper we address  to  computational complexity of problems related to SA.

In Section~\ref{complete-languages} we present examples of $\PP$-hard languages recognizable by  DSA and examples of $\NP$-hard languages recognizable by  NSA. Also, in this section, we put the word membership problem for DSA into hierarchy of complexity classes. 
We prove that the word membership problem is $\PP$-complete for DSA without $\eps$-loops. As corollary, we get the inclusion $\DSA \subseteq \PP$.
Also we prove that the word membership problem is $\PSPACE$-complete for general DSA (with $\eps$-loops).

For the rest of results we need a characterization of $\SA$ as a
principal rational cone generated by the language $\SAPROT$ of correct
protocols for operations with the set. It is a direct generalization
of well-known characterization of $\CFL$ as a principal rational cone generated by the Dyck language  $D_2$, which is in fact the language of correct protocols for operations with the stack (the push-down memory). This characterization is provided in Section~\ref{SAcones}.

It was shown in~\cite{KutribSApaper2016}
 that the emptiness problem for DSA is decidable. In fact the proof doesn't depend on determinism of SA. In Section~\ref{EmptinessSection} we  prove that
the emptiness problem for NSA is in $\PSPACE$ and the problem  is  $\PSPACE$-hard for DSA with the unary alphabet of the work tape. 
Our proof relies on the technique of rational cones.

The main technical result of the paper is $\SA \subseteq \NP$ (see Section~\ref{sect:SAinNP}). This result is based on the fact that class~$\SA$ is a rational cone and on our improvement of the technique of normal forms described in~\cite{KutribSApaper2016}.

\section{$\PP$ and $\NP$-complete languages}\label{complete-languages}

\subsection{$\PP$-complete languages recognizable by a DSA}

\begin{lemma}\label{P-Hard-Lemma}
 There exists a	$\PP$-complete language recognized by a DSA.
\end{lemma}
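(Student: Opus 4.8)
The plan is to exhibit a single DSA whose language is a standard $\PP$-complete problem, so that $\PP$-hardness is inherited for free and only the DSA construction requires real work. I would take the language $\mathsf{MAJSAT}$ of suitably encoded Boolean formulas $\varphi$ in $3$-CNF over variables $x_1,\dots,x_n$ such that strictly more than half of the $2^n$ assignments satisfy $\varphi$. This is $\PP$-complete under logspace many-one reductions, and its membership in $\PP$ is immediate from the definition of $\PP$ (guess an assignment and accept iff it satisfies $\varphi$); it also follows from the later inclusion $\DSA\subseteq\PP$. Hence it suffices to build a DSA $M$ with $L(M)=\mathsf{MAJSAT}$.

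The construction would have three phases. During the single left-to-right pass over the input, $M$ parses $\varphi$ and, for each clause, writes on the work tape the unique partial assignment on that clause's three variables that falsifies it (positions together with the forbidden bits) and inserts this word into ${\mathbb S}$ via $\ins$; it also records $n$ (stored in unary as a word in ${\mathbb S}$). After the endmarker $\rendmarker$ everything is done with $\eps$-moves, during which $M$ may run for exponentially many steps. $M$ maintains in ${\mathbb S}$ two binary counters, each represented by the presence or absence of indexed bit-words: a counter $\alpha\in\{0,1\}^n$ holding the current assignment, and a signed counter $D$ holding (number of satisfying assignments seen so far) minus (number of non-satisfying ones). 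Incrementing $\alpha$ and updating $D$ by $\pm 1$ are carry/borrow propagations realized deterministically by sequences of $\test$, $\out$ and $\ins$ operations on the bit-words, so no infinite loop occurs and the whole computation halts after $\alpha$ overflows.

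The delicate point is evaluating $\varphi(\alpha)$ given that the input has already been consumed and a set supports only membership queries, not enumeration of its contents. I would avoid iterating over the stored clauses and instead test, for each of the $\binom{n}{3}$ triples of positions (generated one by one with an $\eps$-loop that counts positions against the stored value of $n$), whether the partial pattern that $\alpha$ induces on that triple coincides with a stored falsifying pattern: $M$ reads the relevant bits of $\alpha$ from ${\mathbb S}$, forms the candidate word on the work tape, and applies $\test$. Then $\varphi(\alpha)=1$ iff all these polynomially many tests are negative, and $D$ is updated accordingly. After the enumeration, $M$ accepts iff $D>0$, which it checks by inspecting the sign word of $D$.

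The main obstacle is exactly this one-way input versus "a set cannot be enumerated" tension, and the triple-testing step is what resolves it: it replaces the impossible operation \emph{iterate over the clauses violated by $\alpha$} by the admissible operation \emph{for each position-triple, query whether $\alpha$'s induced pattern is a stored clause}, trading one forbidden enumeration for polynomially many membership tests per assignment. The remaining work — verifying that two counters can be maintained in a set by a deterministic finite control, and that a reasonable encoding of $\varphi$ is parsable (and its well-formedness checkable) in a single pass — is routine, and the whole construction yields a DSA recognizing the $\PP$-complete language $\mathsf{MAJSAT}$.
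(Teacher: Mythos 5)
Your proposal does not prove the statement in the paper, for two independent reasons.

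First, you have misread the complexity class. In this paper $\PP$ is a macro for $\mathbf{P}$, deterministic polynomial time, and the lemma asks for a $\mathbf{P}$-complete language (under log-space reductions) recognized by a DSA; the paper's own proof reduces the Circuit Value Problem to a language $\SACVP$ of ``reversed assignments'' that a DSA evaluates in one pass, storing the set of currently-true variables in ${\mathbb S}$. You instead target $\mathsf{MAJSAT}$, which is complete for the \emph{probabilistic} class $\mathrm{PP}$, not for $\mathbf{P}$; your own justification of membership (``guess an assignment and accept iff it satisfies $\varphi$'') is the definition of the probabilistic class. Since the paper later proves $\DSA\subseteq\mathbf{P}$, a DSA for $\mathsf{MAJSAT}$ would yield $\mathbf{P}=\mathrm{PP}$ (hence $\mathbf{P}=\NP$), so your construction cannot be correct.

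Second, the construction itself breaks at the post-endmarker phase. After the input is consumed, a DSA evolves only by $\eps$-transitions, and the work tape is erased after every query. Between two consecutive queries the machine therefore follows a path in its transition graph determined solely by its current state, with no feedback from the set until the next query is issued; if that path revisits a state it is an $\eps$-loop that never queries again. Consequently every query word produced during the $\eps$-phase has length bounded by a constant depending only on the (fixed) automaton, so a fixed DSA cannot write the words $\#^i$ or indexed bit-words for $i$ up to $n$, cannot address the bits of your counters $\alpha$ and $D$, and cannot run the claimed exponentially long carry-propagation loop. (The simulation in Theorem~\ref{empty-membership-DSA-unary} of the paper gets around this only because there the DSA has $\poly(N)$ states and is part of the input; here the DSA must be a single fixed machine while $n$ grows.) This is exactly the content of Proposition~\ref{DSAepsLoops}: $\eps$-loops can be eliminated, after which Theorem~\ref{membership-without-eps} confines DSA languages to $\mathbf{P}$. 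The correct strategy is the paper's: pick a genuinely $\mathbf{P}$-complete problem whose data arrives in an order a one-pass machine can handle (assignments written operands-first), and use $\ins$/$\out$/$\test$ to maintain the truth values of variables during the single left-to-right scan.
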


\begin{proof}
The proof idea is straightforward: we reduce a $\PP$-complete language to a~language recognizable by DSA.  

For the former language we adopt the language $\CVP$ (Circuit Value Problem), which is $\PP$-complete~\cite{Ladner:1975:CVP} under 
log-space reductions (we denote them by $\lelog$).
The variant of the language $\CVP$, which is convenient for our
purposes, consists of words that are encodings of assignment sequences
(CVP-programs).  Each variable $P_i$ is encoded by a binary string, the operation basis is $\land, \lor, \neg, 1, 0$.
An assignment has one of the following forms: $P_i := P_j \op P_k$,
where $\op \in \{\land, \lor\} $, $j,k<i$; $P_i := 1$; $P_i := 0$;
$P_i := \neg P_j$, where $j<i$. An assignment sequence belongs to
$\CVP$ iff the value assigned to the last variable is equal to one. We call it the \emph{value} of CVP-program.

We reduce $\CVP$ to  the language $\SACVP$.
It consists of words that are encodings of \emph{reversed assignment} sequences. A reversed assignment has one of the following forms:  $P_j \op P_k =: P_i$, where $\op \in \{\land, \lor\} $, $j,k<i$; $1=: P_i $; $0=: P_i$; $ \neg P_k =: P_i $, where $k<i$.  
Initially each variable is assigned to zero.
Unlike $\CVP$, reassignments of variables are allowed. A word $w$ belongs to $\SACVP$ if the last assignment (the value of the reversed $\CVP$-program) is equal to one. So, a word from $\SACVP$ looks like $$w = \#\langle P_1\rangle \#\land\#\langle P_2\rangle \#\langle P_3\rangle \# 1 \# \langle P_4\rangle \# \dots \#\langle P_j\rangle \#\op\#\langle P_k\rangle \#\langle P_i\rangle \#.$$

Now we prove that  $\SACVP$ is  recognized by a DSA $M$. 
  Note that  codes of the reversed assignments form a regular language. Thus 	we assume w.l.o.g. that the DSA $M$ processes  words, which are the sequences of reversed assignments $P_j \op P_k =: P_i$.    
Other words are rejected by verifying a regular event. 

	In the case of one-assignment $1 =: P_i$ the set automaton $M$ adds the code of $P_i$ to the set, in the case of zero-assignment $0 =: P_j$ the automaton 
removes  the code of $P_i$ from the set. 
In other cases the set automaton determines the value of operation by testing the codes of variables to be members of the set. $M$ stores the results of tests and computes the result of operation $P_j \op P_k$ in its finite memory. Then $M$ copies the code of $P_i$ on the work tape and performs operation $\ins$ if the computed result is $1$ and performs $\out$ operation otherwise. 

	The set automaton $M$ also stores in its finite memory the result of the last assignment and accepts if this result is $1$ when reaches the endmarker. It is clear that the set automaton $M$ accepts a reversed $\CVP$-program iff the value of the program is $1$.

 To complete the proof we note that the construction of a reversed $\CVP$-program  from a regular one is log-space computable. It gives a reduction  $\CVP\lelog \SACVP$. Thus, $\SACVP$ is $\PP$-hard. It is clear that $\SACVP$ in $\PP$.
\qed\end{proof}

\subsection{$\NP$-complete languages recognizable by a NSA}

Now we present  an NSA recognizing  an $\NP$-complete language.
 This result was also proved independently by M. Kutrib, A. Malcher, M. Wendlandt (private communication with M. Kutrib).
We construct an $\NP$-complete language
 that we call $\SASAT$ and reduce $\SAT$ to this language. The language $\SASAT$ is  quite similar to the language in~\cite{Lange96setautomata} 
for the corresponding result for L-R-SA model.

Let words $x_i \in \{0,1\}^*$ encode variables and 
words $\langle\varphi\rangle\in \{0,1\}^*$ encode 3-CNFs. $\SASAT$ contains the words of the form $x_1\#x_2\#\ldots\#x_n\#\#\langle\varphi\rangle $ such that an auxiliary 3-CNF $\varphi^\prime$ is satisfiable. The 3-CNF $\varphi^\prime$ is derived from $\varphi$ as follows.
For each variable $x_i$ that appears in the list $x_1\#x_2\#\ldots\#x_n\#\#$ at least twice, remove 
all clauses
 containing $x_i$ from $\varphi$ and 
get the reduced 3-CNF $\varphi''$. 
Set each variable $x$ of $\varphi''$ that is not in the list  $x_1\#x_2\#\ldots\#x_n\#\#$  
to zero, simplify $\varphi''$ and obtain as a result the 3-CNF~$\varphi^\prime$.

\begin{lemma}
	 The language $\SASAT$ is $\NP$-complete and it is recognized by an NSA.	
\end{lemma}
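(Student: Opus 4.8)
The plan is to establish the two assertions separately: first that $\SASAT\in\NP$ and is $\NP$-hard, and then that it is recognized by a nondeterministic set automaton. For membership in $\NP$ I would observe that, given an input $x_1\#\ldots\#x_n\#\#\langle\varphi\rangle$, the auxiliary formula $\varphi'$ is computable in polynomial time: counting the occurrences of each $x_i$ identifies the variables appearing at least twice, deleting the clauses that contain them yields $\varphi''$, and substituting $0$ for the variables absent from the list and simplifying yields $\varphi'$. A nondeterministic machine then guesses an assignment to the remaining variables and checks that it satisfies $\varphi'$, all in polynomial time. For $\NP$-hardness I would reduce $\SAT$: given a $3$-CNF $\varphi$ over variables $v_1,\dots,v_m$, output $\langle v_1\rangle\#\cdots\#\langle v_m\rangle\#\#\langle\varphi\rangle$ with each variable listed \emph{exactly once}. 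Then no variable occurs twice, so no clause is deleted, and every variable of $\varphi$ is in the list, so nothing is set to $0$; hence $\varphi'=\varphi$, and the produced word lies in $\SASAT$ iff $\varphi$ is satisfiable. This reduction is clearly computable in logarithmic space, giving $\SAT\lelog\SASAT$.

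The heart of the lemma is the construction of the NSA $M$. After verifying by a regular event that the input has the form $x_1\#\ldots\#x_n\#\#\langle\varphi\rangle$, $M$ works in two phases. In the \emph{list phase}, while reading each $x_i$ it copies the code to the work tape, nondeterministically appends a single bit $b\in\{0,1\}$, and performs $\ins$ (or appends nothing and passes the occurrence with no query). Thus the only words $M$ can place in ${\mathbb S}$ are \emph{value markers} $\langle x_i\rangle b$. In the \emph{formula phase}, for each clause $M$ guesses which of the three literals is the satisfying one; reading that literal's variable code $v$ it writes the query word to the work tape and performs a single test query whose required outcome is chosen nondeterministically according to the rules: a positive literal on $v$ is accepted when the test on $v1$ returns \tp; a negative literal on $v$ is accepted either when the test on $v1$ returns \tm{} or when the test on $v0$ returns \tp. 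The other two literals of the clause are read without any query, and $M$ accepts at the endmarker once every clause has been passed through a satisfying literal.

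The key point, and the main obstacle to get right, is that this device recognizes exactly $\SASAT$, i.e. that the three set-theoretic rules defining $\varphi'$ match precisely what the model can and cannot enforce. The decisive observation is that, because the work tape is erased after each query and the input is one-way, $M$ issues \emph{at most one query per occurrence} of $x_i$, hence inserts at most one marker per occurrence. Consequently a variable listed only once admits at most one of $v0,v1$, which forces a \emph{definite} value, whereas a variable listed at least twice is the only kind for which $M$ can insert \emph{both} $v0$ and $v1$; and with both markers present the rules above make both a positive and a negative literal on $v$ satisfiable, which is exactly the effect of deleting every clause containing $v$. A variable absent from the list carries no marker, so a positive literal on it is never satisfiable while a negative one is satisfiable through the \tm-branch, matching the substitution of $0$. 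I would finish in the two standard directions. For \emph{soundness}, from an accepting run I read off the assignment $A(v)=1$ iff $v1\in{\mathbb S}$ on the once-listed variables; since a surviving clause of $\varphi'$ contains neither a twice-listed variable nor a negated absent variable, the literal that $M$ satisfied in the corresponding original clause is an active literal made true by $A$, so $A$ satisfies $\varphi'$. For \emph{completeness}, from a satisfying assignment of $\varphi'$ I build a run that inserts $v b$ according to the assignment on once-listed variables, inserts both markers on twice-listed variables using two distinct occurrences, and then passes each original clause through the chosen active literal, through a twice-listed variable, or through a negated absent variable. The only delicate steps to check are that a clause removed by the $0$-substitution is still passed by $M$ via its negative-on-absent literal, and that no accepting run can fabricate a deletion for a once-listed variable; both follow from the one-query-per-occurrence bound.
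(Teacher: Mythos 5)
Your proposal is correct and follows essentially the same route as the paper: the NSA guesses a value bit per listed occurrence and inserts a marker word, uses the multiple occurrences of a repeated variable to get both markers into the set, and then certifies each clause by a nondeterministically chosen literal test, while $\NP$-hardness comes from listing each variable of a $\SAT$ instance exactly once so that $\varphi'=\varphi$. You merely spell out in more detail what the paper leaves implicit, in particular the two-branch test rule for negative literals that handles absent variables.
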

\begin{proof}
We describe
 an NSA $M$ that recognizes the language $\SASAT$. Firstly, $M$ processes the prefix $x_1\#x_2\#\ldots\#x_n\#\#$, guesses the values $b_i$ of $x_i$ ($0$ or $1$) and puts the pairs $(x_i, b_i)$ in the set. If a~variable $x_i$ appears more than once in the prefix, then the NSA $M$ nondeterministically guesses this event and puts both $(x_i, 0)$ and $(x_i, 1)$ in the set. On processing of the suffix 
$\langle\varphi\rangle$, 
for every clause of $\varphi$
 the NSA $M$ nondeterministically guesses a literal that satisfies the clause and verifies whether the set contains the required value of 
the corresponding variable.
 It is easy to see that all tests are satisfied along a run of $M$ iff 3-CNF $\varphi^\prime$ is satisfiable.

 The language $\SAT$ is reduced to $\SASAT$ in a straightforward way. Also it is easy to see that $\SASAT \in \NP$.
\qed\end{proof}

\subsection	{The membership problem for DSA}

An instance of the membership problem for DSA is a word $w$ and a
description of a DSA $M$.  The question is to decide whether $w\in
L(M)$.

As usually,  $\eps$-loops cause difficulties in  analysis of deterministic models with $\eps$-transitions. 
We say that DSA $M$ \emph{has $\eps$-loop} if there is a chain of $\eps$-transitions from some state $q$ to itself. 

At first, we show that DSA without  $\eps$-loops recognize all languages from the class $\DSA$.

Hereinafter, we call 
SA $M$ and $M'$ \emph{equivalent} if $L(M)=L(M')$.

\begin{proposition}\label{DSAepsLoops}
	For each DSA $M$ there is an equivalent DSA $M^\prime$ without $\eps$-loops.
\end{proposition}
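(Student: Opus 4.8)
The plan is to analyze the maximal runs of $\eps$-transitions (those that do not advance the input head) and prove a dichotomy: every such run either halts within a number of steps bounded by a machine constant, or loops forever. Then I would replace each $\eps$-run by a faithful simulation that carries a bounded counter of consecutive $\eps$-steps in the finite control, which mechanically destroys all cycles in the $\eps$-transition graph. First I record the structural facts that make the $\eps$-behaviour essentially finite-state. A transition $\delta(s,\eps)$ depends only on the control state $s$, never on the work-tape content; the work tape matters only at a query ($\ins$, $\out$, $\test$), and every query erases it. Hence during an $\eps$-run the work tape is empty just after each query and afterwards grows only by writes. Between two consecutive queries the visited control states must be pairwise distinct (otherwise a pure-write cycle would prevent the next query from ever occurring), so at most $|S|$ writes separate consecutive queries and every query word \emph{except possibly the very first one of the run} has length at most $\ell := |S|\cdot m$, where $m$ is the maximal length of a word written by a single transition.

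The first query word may be long, because writes can accumulate on the work tape across earlier input-reading moves; but this single word is then frozen, since after the first query the work tape is empty and every later query word has length $\leq\ell$, so no query of length $>\ell$ is ever issued again. Therefore, after the first query, only the finite portion ${\mathbb S}\cap\Gamma^{\leq\ell}$ of the set is ever inspected or modified, and the future of the $\eps$-run is completely determined by the pair $(s,\ {\mathbb S}\cap\Gamma^{\leq\ell})$, which ranges over a finite set of size $N := |S|\cdot 2^{|\Gamma^{\leq\ell}|}$. If such a pair repeats, determinism forces the run to be periodic from that point, so it loops forever; if no pair repeats, the run issues at most $N$ queries and halts (by reading an input symbol or by reaching a state with no applicable transition) within $P := |S|+1+(N+1)(\ell+1)$ steps. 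In particular, every terminating $\eps$-run has length at most $P$, and in every infinite $\eps$-run each control state that ever occurs already occurs within the first $P$ steps, i.e.\ within the pre-period plus one period.

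Then I would build $M'$ with states $(s,c)$ for $c\in\{0,1,\dots,P\}$, together with a dead state $\bot$. $M'$ simulates $M$ faithfully, performing the same writes and the same set operations; it resets $c:=0$ on every input-reading move (including reading $\rendmarker$) and increments $c$ on every $\eps$-move, moving to $\bot$ instead whenever $c$ would exceed $P$. Acceptance is inherited: $M'$ accepts in a state $(s_f,c)$ with $s_f\in F$ once the input is exhausted. Since $c$ strictly increases along $\eps$-transitions and is bounded by $P$, the $\eps$-transition graph of $M'$ is acyclic, so $M'$ has no $\eps$-loop. For correctness, the unique run of $M'$ tracks the unique run of $M$ step by step until $M$ enters an infinite $\eps$-run. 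If this happens while input remains ($v\neq\eps$), $M$ never reads that input and never reaches an accepting configuration, so rejection at the cutoff is correct. If it happens after the endmarker is read ($v=\eps$), then either $M$ meets an accepting state---necessarily within the first $P$ steps by the dichotomy, so $M'$ reaches $(s_f,c)$ with $c\leq P$ and accepts before the cutoff---or it never does, and $M'$ correctly rejects at the cutoff. Hence $L(M')=L(M)$.

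The main obstacle is the structural lemma, and within it the point that the work tape may carry an arbitrarily long word into the first query of an $\eps$-run: one must verify that this does not break the finiteness of the relevant configuration space, which is precisely what the ``first query word is frozen'' observation secures. Once the terminate-or-loop dichotomy with the uniform bound $P$ is established, the counter construction and its verification are routine.
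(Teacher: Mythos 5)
Your argument is correct, and it reaches the conclusion by a genuinely different route than the paper. The paper first deletes pure-write $\eps$-loops, observes that the query words producible entirely inside $\eps$-paths then form a fixed finite set $\{u_1,\dots,u_m\}$, records the membership pattern of these words in ${\mathbb S}$ as a bit-vector $\vec a$ in the finite control, and then \emph{collapses} each maximal $\eps$-path into a single $\eps$-transition whose outcome (target state, accumulated written word, updated $\vec a$) is precomputed at construction time from $(s,\vec a)$, the transition being left undefined exactly when that precomputation detects a loop. You instead keep the simulation step-by-step and prove a terminate-or-loop dichotomy with a uniform machine constant $P$: after the first query of an $\eps$-run only ${\mathbb S}\cap\Gamma^{\leq\ell}$ is relevant, so the post-query configurations range over a finite space and either a pair repeats (whence, by determinism, the run is periodic forever and every state it ever visits is visited within the pre-period plus one period) or the run halts within a bounded number of steps; a counter that truncates $\eps$-runs after $P$ steps then destroys all $\eps$-cycles. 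Your version buys two things: $M'$ performs exactly the same writes and set operations as $M$, so you avoid the paper's auxiliary component $\mathsf{aux}$, which has to keep $\vec a$ consistent with the real set contents across all non-$\eps$ queries; and you explicitly isolate and neutralize the one delicate point — that the \emph{first} query word of an $\eps$-run can be arbitrarily long because of material written to the work tape before the run began — which the paper's finiteness claim passes over in silence. The paper's construction, in exchange, yields a stronger normal form in which $\eps$-transitions never perform queries at all, not merely the absence of $\eps$-loops. One immaterial nit: your explicit bound $P=|S|+1+(N+1)(\ell+1)$ undercounts the lengths of inter-query segments in the degenerate case where every $\eps$-transition writes the empty word (so $\ell=0$ yet a segment may still take up to $|S|$ steps); any machine constant dominating the pre-period plus one period, e.g.\ $(N+2)(|S|+1)$, repairs this without touching the rest of the argument.
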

\begin{proof}
  There are two kinds of $\eps$-loops: during the loops of the first kind $M$ only writes words on the work tape and during the loops of the second kind $M$ performs queries. 
The former are evidently useless, so we simply remove them. The latter are significant: the behavior of $M$ on these loops depends on the set's content and sometimes $M$ goes to 
an infinite loop
and sometimes not. 
 
Since $M$ is a deterministic SA, the set of words $\{u_1, \ldots, u_m\}$ that $M$ writes on all $\eps$-paths, is finite (after removing $\eps$-loops of the first kind). We build DSA $M^\prime$ by $M$ as follows. Each state of $M^\prime$ is marked by a vector $\vec a = (a_1, \ldots, a_m)$ of zeroes and ones. A content of the set 
is compatible with a vector~$\vec a$ if $a_i = 1$ is equivalent to $u_i \in {\mathbb S}$.
 Each state of $M^\prime$ has the form $\langle s, \vec a, \mathsf{aux}\rangle $, where $s$ is a state of $M$ and $\mathsf{aux}$ is an auxiliary part of finite memory that $M^\prime$ uses to maintain $\vec a$ correctly.

Non-$\eps$-transitions of $M'$ do not change components $\vec a$, $\mathsf{aux}$ of a state and change the first component according the transition function of $M$.

To define an $\eps$-transition~$\delta(\langle s, \vec a,  \mathsf{aux}\rangle, \eps )$ of $M^\prime$ 
we follow an $\eps$-path of $M$
 from the state $s$ 
with the set content compatible with $\vec a$. If $M$ goes to an  $\eps$-loop,
then the $\eps$-transition of $M^\prime$ 
is undefined. If the $\eps$-path finishes at a state $s^\prime$,
 then $\delta(\langle s, \vec a,  \mathsf{aux}\rangle, \eps ) = (\langle s^\prime, \vec a^\prime,  \mathsf{aux}^\prime\rangle, z )$, where $\vec a^\prime$ is the vector compatible with the $M$'s set content and $z$ is the content of $M$'s work tape at the end of the $\eps$-path. 

The transition function defined in this way may have  $\eps$-loops but only of the first kind (no queries along a loop). Deleting these loops we get the required DSA~$M'$.
\qed\end{proof}

\begin{theorem}\label{membership-without-eps}
  The membership problem for DSA without $\eps$-loops is $\PP$-complete. 
\end{theorem}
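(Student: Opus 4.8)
membership for DSA without $\eps$-loops is in $\PP$ (the upper bound) and it is $\PP$-hard (the lower bound). The hardness is already essentially in hand: Lemma~\ref{P-Hard-Lemma} exhibits a \emph{fixed} DSA recognizing the $\PP$-complete language $\SACVP$, and since the membership problem takes the automaton as part of the input, the special case where $M$ is hard-wired to this fixed DSA is exactly $\SACVP$. Thus $\CVP \lelog \SACVP$ reduces to the membership problem, giving $\PP$-hardness immediately. So the real work is the containment in $\PP$.

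For the upper bound I would simulate the DSA on the input word $w$ in polynomial time using the defining property of $\PP$-problems: membership reduces to counting/majority, but more directly I expect a deterministic polynomial-time simulation suffices here. \textbf{The key obstacle} is that the set ${\mathbb S}$ can hold query words of unbounded length and unbounded cardinality, so one cannot naively track the whole set. The standard resolution, which I would carry out, is to observe that along the unique accepting run only polynomially many \emph{query words} are ever written—at most one new query word per query operation, and the number of query operations is bounded by the number of input symbols plus the length of $\eps$-paths between them. Here the absence of $\eps$-loops is exactly what bounds the latter: by Proposition~\ref{DSAepsLoops}'s style of analysis, between two consecutive non-$\eps$ transitions the machine follows an $\eps$-path of length at most $|S|\cdot 2^m$ in the worst case, but for a \emph{fixed} automaton this is a constant, and even when $M$ is part of the input, an $\eps$-path with no state repetition on a loop has length bounded polynomially in the description size of $M$ together with the current set configuration. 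I would make precise that without $\eps$-loops, each maximal $\eps$-path visits each (state, relevant-set-membership-pattern) pair at most once, so its length—and hence the number of queries it performs—is polynomially bounded.

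Concretely, the plan is to run $M$ step by step on input $w\rendmarker$, maintaining the set ${\mathbb S}$ explicitly as the finite collection of query words written so far. Each step either writes a bounded-length chunk onto the work tape, or performs an $\ins$/$\out$/$\test$ query. For a query, the query word $z$ is compared against the words currently stored in ${\mathbb S}$; since both $|z|$ and every stored word have length polynomial in $|w|$ plus $|M|$, and there are polynomially many stored words, each query costs polynomial time, and there are polynomially many queries. The determinism guarantees the run is unique, so there is no branching to track. I would argue the total running time is polynomial, establishing membership in $\PP$ (indeed in $\PP$ via the trivial inclusion, since a deterministic polynomial-time decider certainly places the problem in $\PP$).

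The one point demanding care—\textbf{the main obstacle}—is bounding the total number of query operations and the total length of query words along the run when $M$ is given as input. I would show that in the absence of $\eps$-loops the number of transitions taken is $O(|w|\cdot p(|M|))$ for a polynomial $p$: non-$\eps$ transitions number at most $|w|+1$, and each is separated by an $\eps$-path which, lacking a loop that performs queries, cannot revisit a configuration-type and so has length bounded by the number of states times the number of distinct set-membership patterns relevant to the $\eps$-paths. Since each query appends at most the work-tape content to ${\mathbb S}$, and the work tape grows by bounded amounts per transition, every query word has length $O(|w|\cdot p(|M|))$ and $|{\mathbb S}|$ stays polynomial. Assembling these bounds yields a deterministic polynomial-time simulation, completing the containment and, together with the hardness argument, the theorem.
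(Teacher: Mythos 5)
Your overall route is the same as the paper's: $\PP$-hardness comes for free from Lemma~\ref{P-Hard-Lemma} (with the minor check, which you should make explicit, that the DSA for $\SACVP$ constructed there has no $\eps$-loops — it does, since it never chains $\eps$-transitions through the same state), and the upper bound is a direct deterministic simulation that maintains ${\mathbb S}$ explicitly, with the whole weight resting on bounding how much is written on the work tape between consecutive input symbols.

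The one place where your argument as written would fail is exactly that crucial bound. You justify the length bound on a maximal $\eps$-path by saying it "visits each (state, relevant-set-membership-pattern) pair at most once, so its length \dots is polynomially bounded." That inference is false when $M$ is part of the input: the number of such pairs is $|S|\cdot 2^m$, which is exponential in the description of $M$, so "visits each pair at most once" does not yield a polynomial bound. You are importing the configuration-counting style of Proposition~\ref{DSAepsLoops}, which is needed there because the transformed automaton may revisit a \emph{state} with a different set content; but it is both unnecessary and insufficient here. The paper's definition of "has an $\eps$-loop" is purely syntactic — a chain of $\eps$-transitions from some state to itself — so "without $\eps$-loops" means the $\eps$-transition graph is acyclic, and hence \emph{no state at all can repeat} on an $\eps$-path, regardless of the set content. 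This gives length at most $|S|$ immediately, and therefore at most $c\,|u|$ symbols written while processing any input factor $u$, with $c$ polynomial in $|M|$; from this the polynomial bounds on the number of queries, the query-word lengths, and $|{\mathbb S}|$ all follow, and each $\ins$/$\out$/$\test$ costs polynomial time. With that substitution your proof matches the paper's. (Your aside about $\PP$, counting and majority is a red herring: in this paper $\PP$ denotes deterministic polynomial time, so the simulation itself is the whole containment argument.)
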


\begin{proof}
	An instance of the membership problem is an encoding of a~DSA $M$ without $\eps$-loops 
and a word $w$. 
 It is easy to emulate a DSA by a Turing machine with 3 tapes. 

The first tape is for the input tape of the DSA, the second one is for the work tape and the third one (the storage tape) is used to maintain the set.

Suppose $M$ processes a subword $u$ of the input between two queries to the set.
There are no $\eps$-loops. Therefore during this processing the set automaton writes at most $c|u|$ symbols on the work tape, where $c$ is a  constant depending only on the description of the set automaton.
It implies that $c|w|$ space on the storage tape is sufficient  to maintain the set content on processing of the input word~$w$.
Thus each query to the set can be performed in polynomial time in the input size.

So, the membership problem for DSA without $\eps$-loops belongs to $\PP$ and due to Lemma~\ref{P-Hard-Lemma} 
it is  $\PP$-complete.
\qed\end{proof}

From these results we conclude that all languages recognizable by DSA are in the class $\PP$ of polynomial time.

\begin{corollary}
 $\DSA \subseteq \PP$.  
\end{corollary}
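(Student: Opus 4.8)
The plan is to combine the two preceding results directly. Given a language $L \in \DSA$, by definition there is a DSA $M$ with $L(M) = L$. First I would invoke Proposition~\ref{DSAepsLoops} to replace $M$ by an equivalent DSA $M'$ without $\eps$-loops, so that $L(M') = L$ and the hypothesis of Theorem~\ref{membership-without-eps} is satisfied.

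Next I would appeal to the membership part of Theorem~\ref{membership-without-eps}. That theorem places the membership problem for $\eps$-loop-free DSA in $\PP$, where both the automaton description and the input word form the instance; inspecting its proof, the running time is bounded by a polynomial in $|w|$ together with the size of the description of the automaton (the constant $c$ in the space bound depends only on that description). Since here $M'$ is fixed once and for all as the recognizer of $L$, its description size is a constant, and the bound specializes to a polynomial in $|w|$ alone. Hardwiring $M'$ into the simulating three-tape Turing machine therefore yields a polynomial-time decision procedure for $L$, whence $L \in \PP$.

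There is no genuine obstacle here: the whole difficulty has already been discharged in Proposition~\ref{DSAepsLoops} (eliminating the query-bearing $\eps$-loops, which is what makes a naive simulation potentially diverge) and in Theorem~\ref{membership-without-eps} (bounding the work-tape and storage-tape usage by $O(|w|)$, so that each query is answered in polynomial time). The only point requiring care is the quantifier shift just described: the theorem's bound, uniform over all instances $\langle M', w\rangle$, becomes, for a fixed $M'$, a bound depending on $|w|$ only, which is precisely what membership of the fixed language $L$ in $\PP$ demands.
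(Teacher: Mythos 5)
Your proposal is correct and follows essentially the same route as the paper: eliminate $\eps$-loops via Proposition~\ref{DSAepsLoops}, then conclude from Theorem~\ref{membership-without-eps}. Your extra remark about specializing the uniform bound to a fixed automaton (so that the description size becomes a constant and the bound is polynomial in $|w|$ alone) is a correct and slightly more careful spelling-out of a step the paper leaves implicit.
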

\begin{proof}
Proposition~\ref{DSAepsLoops} guarantees that any language $L$ recognizable by a DSA is also recognizable by a DSA without $\eps$-loops. From Theorem~\ref{membership-without-eps} we conclude that $L\in\PP$.
\end{proof}

In the case of general DSA the membership problem is much harder. It appears to be $\PSPACE$-hard even in the case of testing whether the empty word $\eps$ belongs to the language recognizable by a DSA with the unary alphabet of
  the work tape. 

We call this restricted version  of the membership problem 
the $\eps$-\emph{membership problem}. An instance of the problem is a description of a DSA
$M$ with the unary alphabet of the work tape. The question is to check  $\eps\stackrel{?}{\in} L(M)$.

\begin{theorem}\label{empty-membership-DSA-unary}
  The $\eps$-membership problem is $\PSPACE$-hard. 
\end{theorem}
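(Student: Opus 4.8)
The plan is to reduce the acceptance problem of an arbitrary polynomial-space deterministic Turing machine to the $\eps$-membership problem, thereby establishing $\PSPACE$-hardness. First I would record the following structural observation about the run of a unary-work-tape DSA on the empty word. After the endmarker $\rendmarker$ is read the input is exhausted, so the run consists solely of $\eps$-transitions and query operations, and the content of the set is a subset of $\{a\}^*$. Since each query ($\ins$, $\out$, $\test$) erases the work tape, between two consecutive queries the query word is produced by a straight-line sequence of write-moves; I will make sure the constructed automaton never traverses a pure-write $\eps$-loop between queries, so every stored word has the form $a^i$ with $i\le N$ for a polynomial bound $N$. The set therefore ranges over subsets of the polynomial universe $\{a^0,\dots,a^N\}$, which is precisely the room needed to store a configuration of a space-bounded machine.

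Given a language $L\in\PSPACE$ decided by a machine $T$ working in space $s=\poly(n)$, I would map an input $x$ to a DSA $M_x$ whose $\eps$-run simulates the computation of $T$ on $x$. I use a positional encoding: the tape cell number $i$, the head marker, and similar bookkeeping are each assigned a fixed address $i\le N=O(s)$, and the corresponding bit is \emph{set} exactly when $a^i\in{\mathbb S}$. The indispensable gadget is an addressing routine: for each fixed $i\le N$, a straight-line chain of write-moves writes $a^i$ on the work tape, after which a single $\test$, $\ins$, or $\out$ reads, sets, or clears cell $i$. All these chains are hard-coded, so $M_x$ has $\poly(n)$ states. The current state of $T$ is kept in the finite control of $M_x$ (there are only constantly many such states), and the initial set content encoding the start configuration of $T$ on $x$ is installed by a fixed prelude of $\ins$-operations at the appropriate addresses.

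One step of $T$ is then realised as a bounded block of these gadgets: $M_x$ locates the head by $\test$ing the head-marker addresses $0,\dots,s$ and branching on the first positive answer (a $\poly$-size case analysis), reads the symbol under the head, applies the transition of $T$ hard-coded for the relevant (state, symbol, position) triple by updating the affected cells and the head marker with $\ins$/$\out$, updates the stored state of $T$ in the finite control, and loops. The automaton enters its accepting state exactly when the stored state of $T$ becomes accepting. Because both $T$ and $M_x$ are deterministic, a straightforward induction shows that after the $k$-th pass through the loop the set content of $M_x$ encodes the configuration of $T$ after $k$ steps; hence $\eps\in L(M_x)$ iff $T$ accepts $x$. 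If $T$ rejects or runs forever, $M_x$ never reaches its accepting state, which is the correct behaviour. The map $x\mapsto M_x$ is clearly log-space computable, so the reduction is valid.

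The main obstacle is not any single calculation but the discipline forced by the model: every query erases the work tape and the finite control cannot hold a number larger than $O(1)$, so $M_x$ can never perform arithmetic on unknown quantities---it can only touch positions whose addresses it has hard-coded in advance. Consequently the entire configuration of $T$ must live in a polynomial universe of addresses, and every primitive operation (reading or writing a cell, finding and moving the head) must be expressed as a straight-line write-chain followed by exactly one query. The delicate part of the write-up is therefore to verify that head-location and transition application remain of polynomial size and do not introduce query-free $\eps$-loops, and to check the induction linking the set content of $M_x$ to the configuration of $T$; the $\PSPACE$-hardness conclusion then follows immediately.
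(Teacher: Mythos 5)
Your proposal is correct and follows essentially the same route as the paper: both simulate a polynomial-space deterministic Turing machine entirely on $\eps$-transitions of a unary-work-tape DSA, encoding the $i$-th tape cell by the presence of the unary word of length $i$ in the set and using hard-coded write-chains over polynomially many states to address cells, so that $\eps$ is accepted iff the machine accepts. The only cosmetic difference is that you store the head position in the set (and recover it by testing marker addresses), whereas the paper keeps it, together with the current transition line, in the finite control; this does not change the argument.
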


To prove Theorem~\ref{empty-membership-DSA-unary} we simulate
operation of a deterministic Turing machine $M$ with the 2-element alphabet
$\{0,1\}$ in polynomially bounded space by a DSA $A_M$ with the unary
alphabet $\{\#\}$ of the work tape. We describe only
$\eps$-transitions of DSA, because the rest of transitions does not affect the answer to
the question $\eps\stackrel{?}{\in} L(M)$.

Denote by $N$ the number of tape cells available  for operation of $M$. Let $Q$ be the
state set of $M$, $\delta\colon \{0,1\}\times Q \to  \{0,1\}\times Q
\times\{-1,0,+1\}$ be the transition function of $M$.

Our goal is to construct a DSA $A_M$ with $\poly(N)$ states simulating the operation of~$M$. A~state of $A_M$ carries an information about a position of the head of $M$ and a current line of the transition
function~$\delta$. Information about the tape content of $M$ is stored in the set  ${\mathbb S}$ as follows.

Let $i$ be the index of a tape cell of $M$, $1\leq i\leq N$, and
$x_i\in\{0,1\}$ be the value stored in this cell. In simulation
process the DSA $A_M$ maintains the relation: $x_i=1$ is equivalent to
$p_i\in{\mathbb S} $, where $p_i=\#^i$.

To insert words $p_i$ and delete them from the set, the states of the DSA $A_M$ include a counter to write the appropriate number of $\#$ on the
work tape. Thus the value of the counter is no more than~$N$.

As it mentioned above, the head position of $M$ and the current line
of the transition function~$\delta$ are maintained in the state set of
$A_M$. Thus, the set state of $A_M$ has the form
$[N]\times\Delta\times (\{0\}\cup[N])\times C$, where $[N]=\{1,2,\dots,N\}$, $\Delta =
\{0,1\}\times Q$ is the set of the lines of the transition function~$\delta$ and
$C$ is used to control computation, $|C|= O(1)$ (in the asymptotic
notation we assume that $N\to\infty$).

The DSA $A_M$  starts a simulation step in the state
$(k,(a,q),0,\start)$, where $k$ is the head position  of the simulated
TM, $q$ is its state, and $x_k = a$. Let $\delta(a,q) = (a',q',d)$ be
the corresponding line of the transition function~$\delta$.

A simulation step consists of the actions specified below, each action
is marked in the last state component to avoid ambiguity:
\begin{enumerate}
\item\label{write-action} write $k$ symbols $\#$ on the work tape, in this action the
  third state component  increased by 1 up to $k$;
\item update the set, if $a'=0$, then the DSA performs $\out $
  operation, otherwise it performs $\ins$ operation;
\item change the value of $k$ according the value of $d$: the new
  value $k'=k+d$;
\item write $k'$ symbols $\#$ on the work tape, the action is similar
  to the action~\ref{write-action};
\item perform the test operation, if the result is positive, then
  change the second component to $(1,q')$, otherwise change the second
  component to $(0,q')$;
\item change the third  component to $0$ and the last component to $\start$.
\end{enumerate}

Inspecting these actions, it is easy to see that all information about
the configuration of the simulated TM is updated correctly.

\begin{proof}[of Theorem \ref{empty-membership-DSA-unary}]
  The following problem is obviously $\PSPACE$-complete: an input is a
   pair $(M,1^N)$, one should decide whether
   $\eps\in L(M)$, here $M$ is an input TM with
   2-element alphabet that uses at most~$N$ cells on the tape.

   We reduce the problem to the question $\eps\stackrel{?}{\in}
   L(A_{M,N})$, where $A_{M,N}$ is the DSA with the unary alphabet of
   the work tape simulating operation of $M$ on $2N$ memory cells as
   described above. The initial state of $A_{M,N}$ is $(N, (0,q_0),
   0,\start)$, where $q_0$ is the initial state of $M$. 
 The
accepting states of $A_{M,N}$ are $(k,(a,q_f),0,\start)$, where $q_f$ is an
accepting state of the simulated machine $M$.

It is easy to see that the DSA  $A_{M,N}$ can be constructed in time
polynomial in the length of description of TM $M$ and $N$.
\qed\end{proof}

To get the matching upper bound of computational complexity of the membership problem, we refer to the complexity results about the emptiness problem that are proved below in Section~\ref{EmptinessSection}.

\begin{proposition}\label{member->emptiness}
  The membership problem for DSA is polynomially reduced
  to  the emptiness problem for DSA. 
\end{proposition}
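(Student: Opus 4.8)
The plan is to recast the question $w \in L(M)$ as a (non-)emptiness question for an auxiliary DSA $M'$ obtained from $M$ by freezing its input to the single word $w$. Concretely, I would take the minimal DFA $A_w$ with $|w|+2$ states accepting the singleton regular language $\{w\}$, and form the product DSA $M' = M \times A_w$ recognizing $L(M) \cap \{w\}$. The finite-state control of $M'$ carries a pair $\langle s, q\rangle$, where $s$ is a state of $M$ and $q$ a state of $A_w$; the set $\mathbb S$ together with all query operations ($\ins$, $\out$, $\test$) are inherited verbatim from $M$, while $A_w$ advances only on genuine input letters and stays put on every $\eps$-transition of $M$. Since $M$ and $A_w$ are both deterministic, $M'$ is again a DSA. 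This is nothing but the closure of $\DSA$ under intersection with a regular language recorded in Fig.~\ref{SAtables}, written out as an explicit product.

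Next I would verify the two properties a reduction must have. For correctness, $L(M') = L(M)\cap\{w\}$ gives the equivalence $L(M')\neq\es \iff w\in L(M)$, equivalently $L(M')=\es \iff w\notin L(M)$. For efficiency, $A_w$ has $O(|w|)$ states, so $M'$ has $O(|S|\cdot|w|)$ states and a transition table that merely duplicates $M$'s set operations across the $O(|w|)$ copies; hence the description of $M'$ is computable from the pair $(M,w)$ in time polynomial in $|M|+|w|$. This yields the claimed polynomial-time reduction.

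Finally I would reconcile the polarity of the answer with the intended application. The construction as stated reduces \emph{non}-emptiness (the complement of membership) to non-emptiness; what we actually need downstream is an upper bound. Since Section~\ref{EmptinessSection} places the emptiness problem in $\PSPACE$ and $\PSPACE$ is closed under complementation, the equivalence $w\in L(M) \iff L(M')\neq\es$ already puts the membership problem for DSA in $\PSPACE$. (If a literal many-one reduction to emptiness is preferred, one instead lets $M'$ recognize $\{w\}\setminus L(M)$, invoking closure of $\DSA$ under complement; for the complexity conclusion the non-emptiness version suffices.) The only real subtlety I expect is bookkeeping rather than conceptual: I must make sure the product keeps $A_w$ from advancing during $M$'s $\eps$-moves, so that the filtering to the single input $w$ is correct, and I must confirm that the blow-up is genuinely polynomial (a factor $O(|w|)$) and not merely computable.
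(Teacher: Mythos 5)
Your proposal is correct and is essentially the paper's argument: the paper's one-line proof constructs, in polynomial time, a DSA $M'$ recognizing $\{w\}\setminus L(M)$ and observes that $w\in L(M)$ iff $L(M')=\es$ --- exactly the variant you relegate to your parenthetical remark. Your main line (the product with $A_w$ giving $L(M)\cap\{w\}$ and a reduction to \emph{non}-emptiness, with $\PSPACE$-closure under complement absorbing the polarity flip) reaches the same conclusion and is an acceptable equivalent, so there is no gap relative to what the paper itself does.
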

\begin{proof}
  If a DSA $M$ and a word $w$ are given, then 
it is easy to construct in polynomial time a DSA $M'$ recognizing the
language $\{w\}\setminus L(M)$. It gives the required reduction, since $w\in L(M)$ iff $L(M')=\es$.
\end{proof}

Theorem~\ref{NSA-Emptiness-complexity} below places the emptiness problem for general SA into the class $\PSPACE$. Proposition~\ref{member->emptiness} implies that  the membership problem for DSA
is also in $\PSPACE$.

 \section{Structural Properties of SA-languages}\label{SAcones}

In this section we show that the class $\SA$ is a principal rational
cone generated by the language of correct protocols.

To simplify arguments, hereinafter we assume that an NSA  satisfies the following requirements:
\begin{enumerate}[(i)]
	\item the alphabet of the work tape is the binary alphabet,
	say, $\Gamma = \{a,b\}$;
	\item 
the NSA doesn't use the endmarker $\rendmarker$ on the input tape
and the initial configuration is $(q_0, w, \varepsilon,
\varnothing)$;
	\item the NSA accepts a word only if the last transition was an operation. 
\end{enumerate}

	\begin{proposition}\label{BinarySet}
	For any  SA $M$ there exists an equivalent SA $M^\prime$ satisfying the requirements \textup{(i)--(iii)}.
	\end{proposition}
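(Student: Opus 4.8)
The plan is to prove Proposition~\ref{BinarySet} by handling the three requirements one at a time, each by a small local modification of the automaton that preserves the recognized language. Since the three transformations are essentially independent, I would apply them in sequence, starting from the given SA $M$ and producing intermediate automata $M_1, M_2, M'$, so that at each stage the previously established requirements are not destroyed.

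For requirement (i), the idea is to encode the work alphabet $\Gamma$ over the binary alphabet $\{a,b\}$ by a fixed-length block code $h\colon \Gamma \to \{a,b\}^{\lceil \log_2|\Gamma|\rceil}$, extended to a monoid homomorphism $h\colon \Gamma^* \to \{a,b\}^*$. Whenever $M$ writes a symbol $\gamma\in\Gamma$ onto the work tape, $M_1$ instead writes the block $h(\gamma)$; since each transition of the form $(s,x,(s',z'))$ writes a word $z'\in\Gamma^*$, I would simply replace $z'$ by $h(z')$. Because $h$ is injective and length-preserving on blocks, the query word built up on the work tape is transformed consistently: the set content of $M_1$ is exactly $\{h(z) : z\in{\mathbb S}\}$, and a $\test$, $\ins$, or $\out$ on $h(z)$ behaves precisely as the corresponding operation on $z$. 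Thus runs of $M_1$ correspond bijectively to runs of $M$ and $L(M_1)=L(M)$.

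For requirement (ii), I would eliminate the endmarker $\rendmarker$. The subtlety is that $M$ may make several transitions (reading $\rendmarker$ or $\eps$) after the last input symbol before accepting, so I cannot simply drop the endmarker. The cleanest approach is to guess nondeterministically when the true input has ended: I add a new copy of the state set reachable by an $\eps$-transition, and in this copy the automaton simulates exactly the transitions $M$ would make on reading $\rendmarker$ (and subsequent $\eps$-moves), while reading no further input. An accepting configuration of $M$ of the form $(s_f,\eps,z,{\mathbb S})$ then corresponds to $M_2$ having consumed all of $w$ and entered an accepting state in the endmarker-simulating copy. For a DSA this determinization of the guess is automatic since the endmarker transitions are themselves deterministic, so determinism is preserved. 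This gives $L(M_2)=L(M_1)$ with initial configuration $(q_0,w,\eps,\varnothing)$.

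For requirement (iii), I would ensure acceptance happens immediately after a query operation. The device is to interpose, before each accepting transition, a dummy query that does not alter the set or the language: for instance, a $\test$ on a reserved sentinel word (say written on the work tape and never inserted), routing both the $\tp$ and $\tm$ branches to the accepting state. Alternatively, an $\ins$ followed conceptually by the corresponding $\out$ can be arranged so the net effect on ${\mathbb S}$ is null. The main obstacle across the whole proof is precisely the bookkeeping in requirement (ii): showing that the nondeterministic end-guess preserves determinism in the DSA case and that no spurious acceptances are introduced, since the interaction between $\eps$-moves, endmarker reading, and the final query condition must be reconciled simultaneously. Once (ii) is handled carefully, requirements (i) and (iii) are routine, and composing the three constructions yields the desired $M'$ satisfying (i)--(iii) with $L(M')=L(M)$.
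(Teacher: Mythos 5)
Your construction is essentially the paper's: re-encode the work alphabet by an injective code (the paper uses $ba^ib$ for the $i$-th letter, you use a fixed-length binary block code --- both work), nondeterministically guess the end of the input to eliminate the endmarker, and append a dummy query to meet requirement (iii). One claim in your step for requirement (ii) is wrong, though: the guess of when the input ends \emph{cannot} be determinized, since a one-way DSA has no lookahead and cannot know that the symbol it is about to read is the endmarker; the fact that the endmarker transitions themselves are deterministic is irrelevant to this. The paper explicitly leans on the ``nondeterministic nature of SA'' here, and the output $M'$ is in general an NSA even when $M$ is a DSA. This does not invalidate your proof of the proposition as stated --- it only asks for an equivalent SA, and requirements (i)--(iii) are imposed on NSA throughout the paper --- but you should drop the assertion that determinism is preserved.
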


	\begin{proof}
	  Let $\Gamma_M$ be the alphabet of the work tape of $M$.
	   Fix an enumeration of letters of $\Gamma_M$. Construct a SA $M''$ emulating $M$ and having the alphabet $\Gamma$ of the work tape such that, when $M$ writes $i$-th letter on its work tape, $M''$ writes the word $ba^{i}b$ on its work tape. It is clear that for each run of $M$ on $w$ there exists a corresponding run of $M''$ on $w$ such that all tests results are the same.

To complete the proof one need to transform $M''$ to satisfy the requirements (ii)--(iii). It is quite easy due to nondeterministic nature of SA. The modified SA $M'$ nondeterministically guesses the end of the input and stops operation just before $M''$ reads the endmarker (adding a dummy query if needed).
\qed	\end{proof}

	\begin{remark}
	  It is easy to see that the previous construction gives also a log-space algorithm that converts an NSA $M$ to the NSA $M'$. So, in algorithmic problems such as the emptiness problem below, one may assume that an input NSA satisfies the Requirements~(i)--(iii).
	\end{remark}

Now we introduce the main tool of our analysis: \emph{protocols}.
	
A protocol is a word $p = \# u_1\#{\op}_1\# u_2\#{\op}_2\#\cdots\#u_n\#{\op}_n$, where $u_i \in \Gamma^*$ and $\op_i\in \Ops=\{\ins,\out,\tp,\tm\}$. 
\emph{Query words} are words over the alphabet $\Gamma$.

We say that $p$ is \emph{a correct protocol for SA~$M$ on an input $w
 \in L(M)$}, if there exists a run of $M$ on the input $w$ such that
 $M$ performs the operation $\op_1$ with the query word $u_1$ on the work
 tape at first, then performs ${\op}_2$ with $u_2$ on the work tape,
 and so on.  In the case of a test operation, the symbol ${\op}_i$ indicates the result of
 the test: $\tp$ or $\tm$.

We call $p$ \emph{a correct protocol for SA}~$M$ if there exists a word $w\in L(M)$ such that $p$ is a correct protocol for SA~$M$ on the input $w$. And finally, we say that $p$ is a \emph{correct protocol} if there exists an SA~$M$ such that $p$ is a correct protocol for~$M$.
	
We define~$\SAPROT$ to be the language of all correct protocols over the alphabet of the work tape $\Gamma = \{a,b\}$.

	\begin{proposition}\label{SAPROTprop}
The language $\SAPROT$ is recognizable by DSA.
	\end{proposition}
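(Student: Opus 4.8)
The plan is to first pin down $\SAPROT$ as a purely local consistency condition on the sequence of operations, and then exhibit a DSA that checks this condition by replaying the protocol with its own set.

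First I would reduce the quantifier ``there exists an SA $M$'' in the definition to a self-consistency property of $p$ alone. Given a well-formed protocol $p=\# u_1\#\op_1\#\cdots\#u_n\#\op_n$, run its operations from the empty set: set ${\mathbb S}_0=\es$ and, for $i\geq 1$, let ${\mathbb S}_i={\mathbb S}_{i-1}\cup\{u_i\}$ if $\op_i=\ins$, ${\mathbb S}_i={\mathbb S}_{i-1}\setminus\{u_i\}$ if $\op_i=\out$, and ${\mathbb S}_i={\mathbb S}_{i-1}$ if $\op_i\in\{\tp,\tm\}$. Call $p$ \emph{self-consistent} if $u_i\in{\mathbb S}_{i-1}$ whenever $\op_i=\tp$ and $u_i\notin{\mathbb S}_{i-1}$ whenever $\op_i=\tm$. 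I claim $\SAPROT$ is exactly the set of self-consistent well-formed protocols. The forward implication is immediate: in a genuine run of any SA a test reports the true membership of the query word, so the $\tp/\tm$ markers recorded in a correct protocol must agree with the set content produced by the preceding $\ins/\out$ operations, which is exactly ${\mathbb S}_{i-1}$. For the converse, a self-consistent $p$ is produced by the run of the automaton $M_0$ described below on input $p$ (that run is accepting precisely because $p$ is self-consistent), so $p$ is correct.

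Next I would construct the DSA $M_0$, with work alphabet $\Gamma=\{a,b\}$, that keeps its own set equal to the running ${\mathbb S}_i$. Scanning the input, $M_0$ copies each query word $u_i$ letter by letter onto the work tape (it recognizes the $\{a,b\}$-letters and stops at the next $\#$); then, on the operation symbol $\op_i$, it acts by a single query: $\ins$ for $\op_i=\ins$ and $\out$ for $\op_i=\out$, in both cases updating its set as ${\mathbb S}_{i-1}\mapsto{\mathbb S}_i$ and erasing the work tape; and $\test$ for $\op_i\in\{\tp,\tm\}$, continuing iff the result matches the marker (positive for $\tp$, negative for $\tm$) and otherwise entering a dead rejecting state. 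Reaching the endmarker with no rejection having occurred, $M_0$ accepts.

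Finally, $M_0$ is deterministic: copying a letter, switching on $\#$, selecting the query dictated by the current operation symbol, and branching on a test outcome are all deterministic moves, and the check that the input has the shape $(\#\,\Gamma^*\,\#\,\Ops)^{+}$ (with $\Ops$ read as the four-symbol alphabet $\{\ins,\out,\tp,\tm\}$) is regular and can be folded into the finite control. An easy induction on $i$ shows that $M_0$'s set equals ${\mathbb S}_i$ after block $i$, so each test branch is taken correctly and $L(M_0)$ is exactly the set of self-consistent well-formed protocols, i.e.\ $\SAPROT$. The one genuinely delicate point is the characterization of the first paragraph---recognizing that ``correct for some SA'' collapses to the local self-consistency condition; after that, the simulation is routine, its chief convenience being that a DSA already owns precisely the storage (a set) needed to mirror ${\mathbb S}$.
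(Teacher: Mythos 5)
Your proposal is correct and follows essentially the same route as the paper: it builds a DSA that replays the protocol's operations on its own set, rejects on any test inconsistency, and closes the loop by observing that the accepting run of that very DSA on an accepted input witnesses the input's correctness (the paper phrases this as ``the protocol of $M_\PROT$ on an accepted input $w$ is $w$ by itself''). Your explicit ``self-consistency'' characterization is just a slightly more spelled-out version of the same argument.
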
 
	\begin{proof}
		Note that the language of all protocols is a regular language. So we assume w.l.o.g. that an input of the DSA $M_\PROT$ recognizing $\SAPROT$ is a protocol. We construct~$M_\PROT$ in a straightforward way. It copies each $u_i$ to the work tape and performs operation indicated by the symbol $\op_i$. If $\op_i$ is a test symbol, then $M_\PROT$ verifies if the result of the test performed is consistent with $\op_i$. $M_\PROT$ accepts a protocol iff all  test results are consistent with  the corresponding symbols of the protocol.

	Note that the protocol of $M_\PROT$ on an accepted input $w$ is $w$ by itself. So each protocol accepted by $M_\PROT$ is a correct protocol by the definition. And it is obvious that $M_\PROT$ accepts all correct protocols.
\qed	\end{proof}

We are going to  prove that the class $\SA$ is a principal rational cone generated by the language $\SAPROT$. For this purpose we need a notion of \emph{inverse transducer}. 

We use a notation $q \xrightarrow[v]{u} p$ to express the fact that a transducer $T$ has a run from the state $q$ to the state $p$ such that $T$ reads
$u$ on the input tape and writes $v$ on the output tape. The notation is also applied to  a single transition. 
A rational transduction $T(u)$ mentioned  in the introduction is defined with this notation as $\{ v \mid q_0 \xrightarrow[v]{u} q_f, q_f \in F \}$.
 Recall that $T(L) = \bigcup\limits_{u\in L} T(u)$.

We define $T^{-1}(y) = \{ x \mid y\in T(x)\}$ and  $T^{-1}(A) = \{x \mid T(x)\cap A\ne\es\}$.
It is  well-known (e.g., see~\cite{BerstelBook})
that 
for every FST $T$
 there exists a FST $T^\prime$ such that $T^\prime(u) = T^{-1}(u)$. This transducer $T'$ is called  an \emph{inverse transducer}. 

Our goal is to prove that 
for every SA-recognizable language $L$
there exists a FST $T$ such that $L = T(\SAPROT)$. 

The plan of the proof is to construct  for an SA~$M$  a FST $T_M$ such that $w \in L(M)$ iff $T_M(w) \cap \SAPROT \neq \es$. We call such FST $T_M$ an \emph{extractor} (of a protocol) for $M$. Then we will show that the required FST $T$ is $T_M^{-1}$.

\begin{lemma}\label{ExtractorLemma}
For any SA~$M = \langle S, \Sigma, \Gamma, \vartriangleleft, \delta_{M}, s_0, F \rangle$
 there exists an extractor $T_M = \langle S \cup \{s^\prime_0\}, \Sigma, \Gamma, \delta, s^\prime_0, F \rangle$.
\end{lemma}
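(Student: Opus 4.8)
The plan is to let $T_M$ simulate the finite-state control of $M$ while reading the input $w$, and to emit, as output, the protocol of the simulated run. A transducer has no access to a set memory, so the one operation it cannot carry out faithfully is determining the outcome of a $\test$ query. The key idea is therefore to let $T_M$ \emph{guess} each test outcome nondeterministically, record the guess as the symbol $\tp$ or $\tm$ in the output, and delegate verification of the guesses to the language $\SAPROT$: a protocol belongs to $\SAPROT$ precisely when its sequence of query words and operations, executed on an initially empty set, produces the recorded test results. The extractor property $w\in L(M)\iff T_M(w)\cap\SAPROT\ne\es$ is exactly what lets the inverse transducer $T_M^{-1}$ later witness $L(M)\lerat\SAPROT$.

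First I would describe the transitions of $T_M$. The fresh initial state $s_0'$ carries a single transition $s_0' \xrightarrow[\#]{\eps} s_0$ emitting the leading delimiter. Every write transition $(s, x, (s', z'))\in\delta_M$ becomes $s \xrightarrow[z']{x} s'$, so that the output emitted between two consecutive queries is precisely the content $u_i$ accumulated on the work tape (which $M$ erases after each query). Every $\ins$/$\out$ transition $(s,x,(s',\op))$ emits $\#$ followed by its operation symbol, together with the delimiter opening the next block, and moves to $s'$. Every test transition $(s,x,(s_+,\test,s_-))$ is split into two nondeterministic branches: one emitting $\#\,\tp$ and moving to $s_+$, the other emitting $\#\,\tm$ and moving to $s_-$. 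The accepting states stay $F$; by Requirement~(iii) an accepting run ends right after a query, so with a little care in placing the $\#$ delimiters (the last block emitting only $\#\,\op_n$, with no trailing delimiter) the word emitted along an accepting computation has exactly the protocol shape $\#u_1\#\op_1\cdots\#u_n\#\op_n$.

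Then I would verify both directions of the extractor property. If $w\in L(M)$, fix an accepting run; it performs queries $\op_1,\dots,\op_n$ on query words $u_1,\dots,u_n$ with certain actual test outcomes. Letting $T_M$ follow this run and guess each outcome correctly yields a protocol $p$; since $p$ records exactly the operations $M$ performs on its initially empty set, the recorded results are the true ones, so $p\in\SAPROT$ and $T_M(w)\cap\SAPROT\ne\es$. Conversely, any $p\in T_M(w)\cap\SAPROT$ is emitted along a computation of $T_M$ that mirrors a sequence of transitions of $M$ on $w$, branching at each test according to the guessed symbol. Because $p\in\SAPROT$, running these same query words and operations on an initially empty set produces exactly the recorded outcomes; but that evolving set is precisely $M$'s set $\mathbb S$ along this computation, so every guessed branch agrees with the genuine membership outcome and the computation is a legal run of $M$. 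As $T_M$ ends in a state of $F$ after a query, the run is accepting, hence $w\in L(M)$.

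The main obstacle is the backward direction, specifically identifying the abstract sequence of set operations witnessing $p\in\SAPROT$ with the actual evolution of $M$'s set along the corresponding run. This hinges on $T_M$ emitting between queries \emph{exactly} the symbols $M$ writes on its work tape, so that the query words and operation symbols of $p$ coincide, in the same order, with those of the run; only then does consistency of $p$ force the guesses to equal the true test outcomes. A secondary point requiring care is the bookkeeping of the $\#$ delimiters, so that emitted words are genuine protocols with no spurious leading or trailing delimiter, which is where the new initial state $s_0'$ and Requirement~(iii) are used.
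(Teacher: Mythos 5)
Your proposal is correct and follows essentially the same route as the paper: a fresh initial state emitting the leading $\#$, write transitions copied verbatim to the output tape, $\ins$/$\out$ transitions emitting $\#\op\#$, test transitions split into two nondeterministic branches emitting $\tp$ or $\tm$, a guessed last query with no trailing delimiter, and verification of both directions of $w\in L(M)\iff T_M(w)\cap\SAPROT\ne\es$ by matching guessed test results against the true evolution of the set. Your write-up is in fact somewhat more explicit than the paper's about why a consistent protocol forces the guesses to coincide with the genuine test outcomes.
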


Let us informally describe the proof idea. 
The behavior of the extractor is similar to the behavior  of the SA.  When the SA writes something on the work tape, the FST writes the same on the output tape. When the SA makes a query, the FST writes a word $\#\op\#$. The only difference is that the SA knows the results of the performed tests. But a nondeterministic extractor can guess the results of the tests to produce a correct output.

\begin{proof}
As for transducers, we use a notation $s_1 \xrightarrow[y]{x} s_2$ to express the fact that an SA during a run from the state $s_1$ to the state $s_2$ reads the word $x$ from the input tape and writes the word $y$ to the work tape.

Let us formally define the transition relation $\delta$ of $T_M$.

There is an auxiliary transition $s_0^\prime \xrightarrow[\#]{\eps} s_0$ for the state $s_0^\prime$. 
If $M$ performs no query on a transition  $s_i \xrightarrow[y]{x} s_j$, $x \in \Sigma \cup \{\eps\}$, then this transition is included into the transitions of the extractor $T_M$.
If $M$ performs an operation $\op \in \{\ins, \out\}$ on the transition $s_i \xrightarrow{x} s_j$, then $T_M$ has the transition $s_i \xrightarrow[\#\op\#]{x} s_j$; in the case of a test, there are both transitions for $\op=\tp$ and $\op=\tm$ to the states $s_j^{+}$ and $s_j^{-}$ respectively.
 
The extractor $T_M$ also nondeterministically guesses the last query and doesn't write the last delimiter $\#$. At this moment it rejects an input iff the SA $M$ does not accept it.

Note that $T_M$ translates each input to  a protocol by the construction. If $w \in L(M)$, then each correct protocol for $w$ belongs to $T_M(w)$: $T_M$ guesses all  $M$'s tests results (recall that we assume the requirements (ii)--(iii) on SA). If $w \not\in L(M)$, then there is no sequence of test results that allows $T_M$ to write a~correct protocol on the output tape.
\qed\end{proof}

\begin{theorem}\label{ConeTheorem}
	 $\SA$ is a principal rational cone generated by  $\SAPROT$.
\end{theorem}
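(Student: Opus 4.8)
The plan is to establish the two inclusions that together say $\SA = \T(\SAPROT)$: that $\SAPROT$ generates every $\SA$-language, and that $\SA$ is a rational cone containing $\SAPROT$. For the first inclusion $\SA \subseteq \T(\SAPROT)$, I would invoke the extractor of Lemma~\ref{ExtractorLemma} directly. Given an SA $M$ with $L = L(M)$, its extractor $T_M$ satisfies $w \in L(M)$ iff $T_M(w) \cap \SAPROT \neq \es$, which means precisely $L = T_M^{-1}(\SAPROT)$. Passing to the inverse transducer $T_M'$ with $T_M'(\cdot) = T_M^{-1}(\cdot)$ (which exists by the quoted fact about FSTs, and whose action on a set is $T_M'(\SAPROT) = \bigcup_{u\in\SAPROT} T_M^{-1}(u) = T_M^{-1}(\SAPROT)$), I obtain $T_M'(\SAPROT) = L$, hence $L \lerat \SAPROT$. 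Since $M$ was arbitrary, $\SA \subseteq \T(\SAPROT)$.

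For the reverse inclusion I would first note that $\SAPROT \in \SA$ by Proposition~\ref{SAPROTprop} (it is even $\DSA$-recognizable), and then prove that $\SA$ is closed under rational dominance: whenever $B \in \SA$ and $A = T(B)$ for a FST $T$, then $A \in \SA$. Applying this closure with $B = \SAPROT$ yields $\T(\SAPROT) \subseteq \SA$ and completes the proof, establishing at the same time that the cone is principal with generator $\SAPROT$.

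The bulk of the work, and the main obstacle, is the closure under rational dominance. Let $M$ be an SA recognizing $B \subseteq \Delta^*$ and let $T$ be a FST realizing $A = T(B) \subseteq \Sigma^*$. I would build an NSA $M'$ over $\Sigma^*$ whose finite control carries a pair $(q,s)$, a state of $T$ together with a state of the simulated $M$. On input $x$, the automaton $M'$ nondeterministically reconstructs a word $u$ and a run of $T$ witnessing $x \in T(u)$: each transition $q \xrightarrow[v]{a} q'$ of $T$ is simulated by consuming the output block $v$ from $M'$'s own input, symbol by symbol through auxiliary states, while the guessed input symbol $a \in \Delta \cup \{\eps\}$ is fed to the simulated copy of $M$. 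Crucially, every write to the work tape and every query operation $\op \in \Ops$ that $M$ would perform while processing $u$ is replicated verbatim by $M'$ on its own work tape and its own set. Thus $M'$ can complete an accepting run exactly when $u \in L(M)$ and $x$ has been fully consumed with $T$ in a final state, giving $x \in L(M')$ iff there is $u \in B$ with $x \in T(u)$, i.e. $L(M') = T(B) = A$. Since the set behaviour of $M'$ mirrors that of $M$ step for step, the construction goes through for NSA, and determinism plays no role.

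The delicate points I expect in this last step are purely synchronizational: decomposing each multi-symbol output $v$ of $T$ into single-symbol reads of $M'$ via auxiliary states; interleaving these reads with the $\eps$-moves, writes and queries that $M$ makes on each guessed symbol of $u$; and handling the endmarker, where $M'$ must nondeterministically guess the end of $u$, simulate $M$'s final transition on $\rendmarker$, and simultaneously verify that all of $x$ has been read and that $T$ rests in an accepting state. None of these raise conceptual difficulties, but they must be arranged so that the replicated set operations stay in exact correspondence with those of $M$, which is exactly what guarantees $L(M') = T(B)$ and hence the cone closure.
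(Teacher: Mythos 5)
Your proposal is correct and follows essentially the same route as the paper: the forward inclusion $\SA\subseteq\T(\SAPROT)$ via the extractor $T_M$ of Lemma~\ref{ExtractorLemma} and its inverse transducer, and the reverse inclusion via a product construction composing the (inverse of the) given transducer with a set automaton whose set operations are replicated verbatim. The only difference is that you prove closure of $\SA$ under rational dominance for an arbitrary $B\in\SA$ and then specialize to $B=\SAPROT$, whereas the paper applies the identical product construction directly to the specific DSA $M_\PROT$ of Proposition~\ref{SAPROTprop}.
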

\begin{proof}
	 We shall prove that 
for every SA $M$
 there exists a FST $T$ such that $T(\SAPROT) = L(M)$. Take $T = T_M^{-1}$, where $T_M$ is the extractor for $M$. By the definition of extractor $w \in L(M)$ iff $T(w) \cap \SAPROT \neq \es$. So, if $w \in L(M)$, then there is at least one correct protocol in $T_M(w)$. Therefore
 $w \in T_M^{-1}(\SAPROT)$. In the other direction: if $w \in T_M^{-1}(\SAPROT)$, then $T_M(w) \cap \SAPROT \neq \es$. So, $w \in L(M)$ by the definition of extractor.

In the other direction, let $L = T(\SAPROT)$ be a language from the rational cone generated by $\SAPROT$. The language $L$ is recognized by an NSA $M_T$ which is a composition of the inversed transducer $T^{-1}$ and the DSA $M_\PROT$ recognizing correct protocols (see Proposition~\ref{SAPROTprop}). We build the composition of a~transducer and an~SA by a product construction. This composition is an NSA because the inversed transducer might be nondeterministic.
\qed\end{proof}

\section{Computational Complexity of the Emptiness Problem}\label{EmptinessSection}

Recall that 
an instance  of the emptiness problem for SA is a description of an SA~$M$. The question is
to decide whether $L(M)=\es$. 
The membership problem for DSA is reduced to the emptiness problem (Proposition~\ref{member->emptiness}). Thus, Theorem~\ref{empty-membership-DSA-unary} implies that the emptiness problem is $\PSPACE$-hard.

The main result of this section is the matching upper bound of computational complexity of the emptiness problem.

\begin{theorem}\label{NSA-Emptiness-complexity}
The emptiness problem for SA is in $\PSPACE$. 
\end{theorem}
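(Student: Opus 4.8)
The plan is to exploit the rational-cone characterization of Theorem~\ref{ConeTheorem}. Given an SA $M$, I would first build its extractor $T_M$ (Lemma~\ref{ExtractorLemma}) and let $P_M = T_M(\Sigma^*)$ be the set of all protocols that $T_M$ outputs on some input. Projecting $T_M$ onto its output tape turns it into an ordinary NFA, so $P_M$ is a regular language recognized by an automaton of size polynomial in $|M|$, and this automaton is computable from $M$ in logarithmic space. Since $w \in L(M)$ iff $T_M(w)\cap\SAPROT\ne\es$, we have $L(M)=\es$ iff $P_M\cap\SAPROT=\es$. Thus it suffices to decide, for a polynomial-size NFA $R$ over the protocol alphabet, whether $L(R)$ contains a correct protocol.

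Next I would record a purely combinatorial criterion for correctness. By the definition of $\SAPROT$ (and as witnessed by $M_\PROT$ in Proposition~\ref{SAPROTprop}), a protocol $\#u_1\#\op_1\#\cdots\#u_n\#\op_n$ is correct iff it is \emph{consistent}: running the set semantics from $\mathbb{S}=\es$, every block marked $\tp$ has its query word in $\mathbb{S}$ and every block marked $\tm$ has its query word outside $\mathbb{S}$ at the moment the operation is applied. Because insertions, deletions, and tests on distinct query words never interact, consistency is equivalent to the conjunction, over all query words $u$, of the local consistency of the subsequence of operations carried on $u$ --- a condition tracked by a two-state \emph{in}/\emph{out} device. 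So deciding emptiness of $L(M)$ amounts to searching for an accepting run of $R$ that can be labelled by query words so that these local conditions hold: each positive test is matched by a still-live insertion of an equal word, and each negative test is separated from every word currently present in $\mathbb{S}$.

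For the algorithm itself I would invoke Savitch's theorem ($\PSPACE=\textbf{NPSPACE}$) and design a nondeterministic procedure using polynomial space. The procedure guesses the protocol block by block while simulating $R$, maintains a bounded \emph{abstraction} of the current content of $\mathbb{S}$, checks each test online against this abstraction, and accepts when $R$ reaches a final state. The guessed protocol may be exponentially long, but it is consumed on the fly, so the only objects that must fit in polynomial space are the state of $R$ and the description of $\mathbb{S}$. The whole reduction, together with the Savitch argument, then yields the claimed $\PSPACE$ upper bound for $\SA$.

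The main obstacle is exactly the polynomial-space representation of $\mathbb{S}$: a priori the set may hold unboundedly many words of unbounded length, so storing it literally is impossible. I expect this to be the heart of the proof, to be resolved by a small-witness lemma asserting that whenever $L(R)$ contains a correct protocol it contains one whose query words and simultaneous set size are controlled enough to be encoded in polynomial space (either as short words, or as profiles describing how each live word can be spelled along runs of $R$). The delicate point is that any pumping or shortening applied inside $R$ must be performed simultaneously on all equal occurrences of a word, so as to preserve both acceptance by $R$ and the pattern of equalities and inequalities on which the tests depend: collapsing two words that a negative test must keep apart, or breaking the insertion supporting a positive test, would destroy consistency. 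Establishing that some polynomially bounded abstraction loses no witness --- i.e. reconciling the regular constraints imposed by $R$ with the equality structure demanded by the set semantics --- is where the real work lies.
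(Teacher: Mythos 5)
Your overall strategy coincides with the paper's: reduce emptiness to the question of whether the regular language $T_M(\Sigma^*)$ meets $\SAPROT$ (this is exactly Lemma~\ref{RRSAemptiness}), then run a nondeterministic polynomial-space simulation of an NFA along a guessed protocol and invoke Savitch. However, you explicitly defer the one step that carries all the mathematical content --- the ``small-witness lemma'' guaranteeing that an accepting correct protocol can be chosen with a polynomially representable set content --- and you give no argument for it. As it stands this is a genuine gap, not a routine detail: without that lemma the abstraction of ${\mathbb S}$ you propose to maintain is undefined, and the algorithm cannot even be specified.

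For comparison, the paper fills this gap with Lemma~\ref{set-poly-bound}, obtained by two explicit protocol transformations. The first classifies the query languages $R(q,q',\op)$ induced by the NFA as \emph{small} or \emph{large} via an iterative procedure, singles out at most $N^2$ \emph{stable} and at most $N$ \emph{critical} query words, and rewrites all other query words so that $\{\out,\tm\}$-blocks become standalone and $\{\ins,\tp\}$-blocks reuse critical words; this caps every set content at $N^2+N$ words while preserving correctness (Proposition~\ref{small-set-content}) --- precisely resolving your worry about not collapsing words that a negative test must keep apart nor breaking the insertion supporting a positive test. The second transformation forces at most one set element per \emph{elementary language} (Boolean atom of the query languages), so that ${\mathbb S}$ can be stored as a list of types $I\subseteq[N]$ (Proposition~\ref{R_I-unique}). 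Finally, the simulation needs two nontrivial subroutines you do not mention: deciding $R_I\ne\es$ (intersection nonemptiness for regular languages, in $\PSPACE$) and deciding $|R_I|\geq 2$ to handle $\out$-operations correctly, which requires a separate reduction (Proposition~\ref{intersection>1}). To complete your proof you would need to supply all of this, or an equivalent bounded-witness argument.
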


It implies that the emptiness problem for SA is  $\PSPACE$-complete. More exactly,  
Theorems~\ref{empty-membership-DSA-unary} and~\ref{NSA-Emptiness-complexity} imply the following  corollaries.

\begin{corollary}
   The emptiness problem for DSA with the unary alphabet is
  $\PSPACE$-complete. 
\end{corollary}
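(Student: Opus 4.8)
The plan is to reduce the emptiness problem to an intersection question about a regular language and $\SAPROT$, using the rational cone characterization of Theorem~\ref{ConeTheorem}. Given an SA $M$, let $T_M$ be its extractor (Lemma~\ref{ExtractorLemma}); then $L(M) = T_M^{-1}(\SAPROT)$ and, by the defining property of the extractor, $w \in L(M)$ iff $T_M(w) \cap \SAPROT \neq \es$. Hence
$$L(M) \neq \es \iff \exists w\colon T_M(w) \cap \SAPROT \neq \es \iff R \cap \SAPROT \neq \es,$$
where $R = T_M(\Sigma^*)$ is the set of all protocols that $T_M$ emits along accepting runs. Erasing the input labels of $T_M$ turns it into an $\eps$-NFA $\A$ over the protocol alphabet recognizing $R$, with a number of states polynomial in the size of $M$. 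So the whole task reduces to a single problem, which I would isolate and solve in $\PSPACE$: \emph{given a polynomial-size NFA $\A$, decide whether $\A$ accepts at least one correct protocol.}

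For this problem I would give a nondeterministic polynomial-space procedure and then invoke Savitch's theorem. The procedure guesses a candidate protocol symbol by symbol while simulating a run of $\A$, and verifies set-consistency on the fly: a configuration is a pair consisting of the current state of $\A$ together with the current content of the set ${\mathbb S}$ (the query words inserted and not yet removed), where $\ins$ adds the current query word, $\out$ removes it, and a test $\tp$ (resp.\ $\tm$) is honoured only when the current query word is (resp.\ is not) in ${\mathbb S}$. The run accepts when $\A$ reaches a final state on a syntactically complete protocol with all tests consistent. Correctness of this simulation is immediate, and the only thing to control is the space needed to store a configuration; note that even though a correct protocol in $R$ may be exponentially long, the reachability over this exponential-diameter configuration graph is exactly what $\NPSPACE$ (hence, by Savitch, $\PSPACE$) is able to explore in polynomial space, provided each configuration is of polynomial size.

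The technical heart of the proof, and the step I expect to be the main obstacle, is precisely guaranteeing that the search can be confined to correct protocols along which every configuration is polynomially bounded. Here the regularity of $R$ is essential: because $\A$ is a finite automaton it cannot enforce intricate correlations between the inserting phase and the testing phase, and I would exploit this through a normal-form (pumping) lemma stating that if $\A$ accepts any correct protocol then it accepts one in which (a) every query word has length at most the number of states of $\A$, and (b) at every instant the number of pending words of ${\mathbb S}$ that are still going to be tested is polynomially bounded. The delicate point, which I anticipate requires the most care, is that shortening or excising cannot be done occurrence-by-occurrence: distinct occurrences of a query word that must stay equal for set-consistency have to be pumped down \emph{simultaneously}, so the argument must track the joint cycle structure of $\A$ on all occurrences of a given word while preserving both membership in $L(\A)$ and the validity of all tests. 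Once polynomial-size configurations are secured, the forward simulation runs in nondeterministic polynomial space and Savitch's theorem yields the claimed $\PSPACE$ upper bound; combined with Theorem~\ref{empty-membership-DSA-unary} and Proposition~\ref{member->emptiness} this also pins the complexity of emptiness exactly.
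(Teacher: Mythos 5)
Your reduction of emptiness to the question ``does a polynomial-size NFA accept at least one correct protocol'' is exactly the paper's route (Lemma~\ref{RRSAemptiness}), and the overall plan---a nondeterministic polynomial-space simulation followed by Savitch's theorem---also matches. But the normal-form lemma you lean on to make the configurations polynomial is false in the part that matters. Your claim (a), that $\A$ must accept a correct protocol in which every query word has length at most the number of states of $\A$, cannot hold: a chain of query blocks sharing the same query word $u$ forces $u$ to lie in the intersection of up to $N=O(n^2)$ regular languages $R(q,q',\op)$, each of state complexity at most $n$, and the shortest word of such an intersection can have length $n^{\Theta(N)}$ (the same phenomenon that makes DFA intersection nonemptiness $\PSPACE$-hard). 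Concretely, an NFA accepting only protocols of the form $\#u_0\#\ins\#u_1\#\tp\cdots\#u_k\#\tp$ with $u_i\in L_i$ forces $u_0=u_1=\dots=u_k\in\bigcap_i L_i$ in any correct protocol, and that intersection may contain only exponentially long words. Consequently your configurations---the NFA state together with the explicit set content---cannot be stored in polynomial space even when the cardinality of the set is polynomially bounded, because the individual words in it are too long, and no amount of ``simultaneous pumping'' of the occurrences of $u$ can repair this.

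The paper's resolution is different and is the real content of Lemma~\ref{NRRprotocolsInPSPACE}: query words are never shortened or written down at all. Each word is represented implicitly by its \emph{type}, i.e., the subset $I\subseteq[N]$ of query languages containing it (the elementary languages $R_I$ of Eq.~\eqref{elementary-language}), and the two protocol transformations (Propositions~\ref{small-set-content} and~\ref{R_I-unique}) guarantee that one may assume the set holds at most one word of each type and has polynomially many elements. The per-step consistency checks then become questions about regular languages---whether $R_I\ne\es$ and whether $|R_I|\ge 2$---which are delegated to $\PSPACE$ subroutines for intersection (non)emptiness (Proposition~\ref{intersection>1}). Your claim (b), the polynomial bound on the set cardinality, is essentially Lemma~\ref{set-poly-bound} and is true, but it does not suffice without this implicit representation. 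The hardness half of the corollary you dispatch the same way the paper does, so the gap is confined to, but is essential for, the upper bound.
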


\begin{corollary}
   The membership problem for DSA with the unary alphabet is
  $\PSPACE$-complete. 
\end{corollary}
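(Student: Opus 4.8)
The plan is to obtain both bounds by combining results already established in the excerpt, so that the corollary reduces to matching up problem definitions.

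For the lower bound, I would note that the $\eps$-membership problem treated in Theorem~\ref{empty-membership-DSA-unary} is exactly the restriction of the membership problem in which the input word is fixed to $\eps$ and the DSA is required to have a unary work-tape alphabet. Since that restricted problem is $\PSPACE$-hard, and every one of its instances is verbatim an instance of the membership problem for DSA with unary alphabet, the latter problem inherits $\PSPACE$-hardness.

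For the upper bound, I would invoke Proposition~\ref{member->emptiness} followed by Theorem~\ref{NSA-Emptiness-complexity}. Given a DSA $M$ and a word $w$, Proposition~\ref{member->emptiness} produces in polynomial time a DSA $M'$ with $w\in L(M)$ iff $L(M')=\es$. Theorem~\ref{NSA-Emptiness-complexity} places the emptiness problem for every SA in $\PSPACE$; hence the membership problem for DSA is in $\PSPACE$, and in particular its restriction to unary work-tape DSA is in $\PSPACE$. I would also record, for cleanliness, that the construction of $M'$ only superimposes a finite-memory check that the input equals $w$ and complements the acceptance condition of $M$, leaving the query words and the work-tape alphabet untouched; thus the reduction actually maps the unary membership problem into the unary emptiness problem. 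For the upper bound this refinement is not even required, since the general $\PSPACE$ bound for DSA membership already subsumes the unary case.

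Putting the two bounds together yields $\PSPACE$-completeness. There is no genuine obstacle here beyond bookkeeping: the only points requiring attention are that the hard instances furnished by Theorem~\ref{empty-membership-DSA-unary} literally lie within the problem under consideration, and that the general $\PSPACE$ upper bound for DSA membership specializes to the unary subcase. Both checks are immediate from the definitions, so the corollary follows by direct combination of Theorems~\ref{empty-membership-DSA-unary} and~\ref{NSA-Emptiness-complexity} with Proposition~\ref{member->emptiness}.
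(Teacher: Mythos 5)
Your proposal is correct and follows exactly the route the paper intends: the lower bound comes from Theorem~\ref{empty-membership-DSA-unary} (the $\eps$-membership problem for unary work-tape DSA being a special case of the membership problem), and the upper bound from composing Proposition~\ref{member->emptiness} with Theorem~\ref{NSA-Emptiness-complexity}. The paper states this corollary as an immediate consequence of those results without spelling out the details; your write-up supplies precisely the bookkeeping the authors left implicit.
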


\subsection{The emptiness problem and the regular realizability problem}

In this short subsection we present an application of the rational cones technique from Section~\ref{SAcones}. 
We show that the emptiness problem for 
NSA is equivalent to the regular realizability (NRR) problem for $\SAPROT$.

The problem $\nreg(F)$ for a language $F$ (a parameter of the problem) is to decide on the input nondeterministic finite automaton (NFA) $\A$  whether the intersection $L(\A) \cap F$ is nonempty.

\begin{lemma}\label{RRSAemptiness}
   $(L(M) \stackrel{?}{=} \es ) \lelog \nreg(\SAPROT) \lelog (L(M) \stackrel{?}{=} \es )$.
\end{lemma}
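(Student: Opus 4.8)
The plan is to prove the two log-space reductions separately, exploiting the rational cone characterization $\SA = \T(\SAPROT)$ from Theorem~\ref{ConeTheorem} and the extractor construction from Lemma~\ref{ExtractorLemma}.

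For the right reduction $(L(M)\stackrel{?}{=}\es)\lelog\nreg(\SAPROT)$, I would start from an arbitrary NSA $M$ and produce the extractor $T_M$ guaranteed by Lemma~\ref{ExtractorLemma}. The key observation is that $L(M)\ne\es$ iff some word $w$ is accepted by $M$, which by the extractor property happens iff $T_M(w)\cap\SAPROT\ne\es$ for some $w$, i.e.\ iff $T_M(\Sigma^*)\cap\SAPROT\ne\es$. Since $T_M(\Sigma^*)$ is the set of all outputs of the transducer $T_M$ run on arbitrary inputs, it is a \emph{rational} (regular) language, and an NFA $\A$ recognizing it is obtained by projecting $T_M$ onto its output alphabet (forgetting the input labels, treating $\eps$-output transitions as $\eps$-moves). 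This projection is clearly log-space computable from the description of $M$. Then $L(M)=\es$ iff $L(\A)\cap\SAPROT=\es$, which is exactly the complement of the $\nreg(\SAPROT)$ instance $\A$; a trivial adjustment (or noting that these decision problems reduce to each other under complementation, or building $\A$ so the equivalence is stated directly) yields the reduction in the stated direction.

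For the left reduction $\nreg(\SAPROT)\lelog(L(M)\stackrel{?}{=}\es)$, I would take an input NFA $\A$ for the $\nreg(\SAPROT)$ problem and build an SA $M_\A$ with $L(M_\A)\ne\es$ iff $L(\A)\cap\SAPROT\ne\es$. The natural construction is the product of $\A$ with the DSA $M_\PROT$ recognizing $\SAPROT$ (Proposition~\ref{SAPROTprop}): $M_\A$ reads an input word, simulates $\A$ on it while simultaneously running $M_\PROT$'s set operations, and accepts exactly when both $\A$ and $M_\PROT$ accept. Then a word is accepted by $M_\A$ iff it lies in $L(\A)\cap\SAPROT$, so $L(M_\A)=\es$ iff $L(\A)\cap\SAPROT=\es$. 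The product construction is a standard finite-state pairing and is log-space computable, since $M_\PROT$ is a fixed automaton independent of the input $\A$.

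The main obstacle, and the point requiring care, is the \emph{log-space} bookkeeping rather than mere polynomial-time computability: I must verify that both the output-projection NFA $\A$ and the product SA $M_\A$ can be written down by a log-space transducer, which amounts to checking that the state sets are products (or simple relabelings) of the given components and that each transition of the output can be generated by scanning the input while maintaining only a constant number of pointers into the input description. A secondary subtlety is the handling of complementation: the emptiness problem asks $L(M)\stackrel{?}{=}\es$ while $\nreg$ asks for \emph{nonemptiness} of an intersection, so I would state the reductions so that "is empty" maps to "intersection is empty" on both sides, keeping the polarity consistent throughout the chain.
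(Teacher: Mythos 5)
Your proposal matches the paper's proof essentially step for step: the first reduction uses the extractor $T_M$ and the log-space constructible NFA for the regular language $T_M(\Sigma^*)$, and the second uses the product of the input NFA with $M_\PROT$, which is exactly the closure of SA-languages under intersection with regular languages that the paper cites from the Kutrib et al.\ result. Your extra care about the polarity of "empty" versus "nonempty" is a reasonable tidying-up of a point the paper glosses over, but it does not change the substance.
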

\begin{proof}
	It is easy to see that one can construct the extractor $T_M$ by SA~$M$ in log space. So, by the definition of extractor, we get that $L(M) = \es$ iff $T_M(\Sigma^*)\cap \SAPROT = \es $. Note that rational transductions preserve regularity~\cite{BerstelBook}: $T_M(\Sigma^*) \in \REG$. 
It is shown in~\cite{RVRR2015DCFS} that an NFA recognizing $T_M(\Sigma^*)$ is log-space constructible by the description of $T_M$. So $(L(M) \stackrel{?}{=} \es ) \lelog \nreg(\SAPROT)$.
	
It was shown in~\cite{KutribSApaper2016}  that 	 for any regular language $R \subseteq \Gamma^*$ there exists SA~$M_R$ recognizing $\SAPROT\cap R$. This SA  is also constructible in log space by the description of an NFA recognizing $R$: the proof is almost the same as for the aforementioned NFA by FST construction. Thus $\nreg(\SAPROT) \lelog (L(M) \stackrel{?}{=}~\es )$.
\qed\end{proof}

Due to Lemma~\ref{RRSAemptiness} to prove
Theorem~\ref{NSA-Emptiness-complexity} it is sufficient to prove the following fact. 

\begin{lemma}\label{NRRprotocolsInPSPACE}
$\nreg(\SAPROT)\in \PSPACE$.  
\end{lemma}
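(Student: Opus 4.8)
The plan is to give a nondeterministic polynomial-space decision procedure for $\nreg(\SAPROT)$; since nondeterministic polynomial space equals $\PSPACE$ by Savitch's theorem, this is enough. Given an input NFA $\A$ over the protocol alphabet $\Gamma\cup\{\#\}\cup\Ops$, I would guess a candidate protocol $p=\#u_1\#\op_1\#u_2\#\op_2\cdots$ one block at a time, advancing a simulated run of $\A$ symbol by symbol and, in parallel, simulating the set data structure whose consistency defines $\SAPROT$. The current state of $\A$ and the current operation symbol cost only $O(|\A|)$ bits, so the entire difficulty is to keep the current set content ${\mathbb S}$ in polynomial space, since a priori it may hold unboundedly many and unboundedly long query words. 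For an $\ins$ block I store the guessed query word, for an $\out$ block I erase the matching stored word, and for a $\tp$ (resp.\ $\tm$) block I continue the branch only if the guessed word is (resp.\ is not) currently stored. Query words that are never inserted are processed in a streaming fashion and never occupy memory.

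To bound the space, I would prove a bounded-witness lemma: if $L(\A)$ contains any correct protocol then it contains one in which, at every moment, ${\mathbb S}$ holds at most $\poly(|\A|)$ words, each of length at most $\poly(|\A|)$. The guiding observation is that a query word $u_i$ influences $\A$ only through the pair $(q,q')$ of states at which the accepting run enters and leaves the factor $u_i$; there are at most $|\A|^2$ such \emph{types}, a word of a prescribed type can be taken of length $\le|\A|$, and a word realizing two prescribed types simultaneously (needed to match a test to its insert) can be taken of length $\le|\A|^2$. Since the operations on ${\mathbb S}$ can address individual words only up to this type granularity, the number of words that must be tracked independently should be bounded by a polynomial in the number of types, and each representative can be chosen short; this is what the lemma would make precise. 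With the lemma in hand the procedure above runs in space $\poly(|\A|)$, because it only ever needs to store polynomially many short words.

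I expect the bounded-witness lemma to be the main obstacle. The delicate point is that the set operations can force several distinct words to be kept apart even when $\A$ cannot tell them apart: for instance the block pattern $\ins\,u_1$, $\ins\,u_2$, $\out\,u_1$, $\tp\,u_2$ forces $u_1\ne u_2$, so one cannot simply merge words of equal type. The lemma must therefore show that such forced separations are themselves limited by the bounded number of types of the $\out$ and $\tp$ blocks involved, so that the live set stays polynomial, and it must exhibit a single global assignment of genuine words that respects all the induced constraints at once: equalities where a $\tp$ block is matched to an earlier $\ins$ block, and disequalities (freshness) where a $\tm$ block must avoid the whole current set. The last requirement is benign for the algorithm itself: a $\tm$ block of type $(q,q')$ is realizable exactly when the path language from $q$ to $q'$ contains a word outside the (polynomially small, short) stored set, which fails only if that language is finite and already contained in ${\mathbb S}$, a condition checkable in polynomial space.
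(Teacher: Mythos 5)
Your overall architecture (nondeterministic polynomial-space simulation of a run of $\A$ on a guessed protocol, plus Savitch) is exactly the paper's, but the proposal has two genuine gaps. First, the bounded-witness lemma is the entire content of the proof, and you do not prove it --- you only state what it should say and correctly observe (with your $\ins\,u_1$, $\ins\,u_2$, $\out\,u_1$, $\tp\,u_2$ example) why naive merging of words of equal type fails. The paper's argument here is not a routine pigeonhole: it classifies the query languages $R(q,q',\op)$ into \emph{small} and \emph{large} ones by an iterative saturation procedure, singles out $O(N^2)$ \emph{stable} and $O(N)$ \emph{critical} query words, and performs two explicit protocol rewritings that make the set content monotone on all remaining words while preserving the run partition and protocol correctness (Propositions~\ref{SmallWordsProp}--\ref{R_I-unique}, yielding Lemma~\ref{set-poly-bound}). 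Without an argument of this kind, ``the live set stays polynomial'' is an unsupported claim, and it is precisely the hard part.

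Second, your plan to store the guessed query words explicitly, justified by the claim that a representative of a prescribed type ``can be taken of length $\le|\A|^2$,'' does not survive the constraints your own example exposes. A single inserted word may have to be simultaneously compatible with many blocks and kept distinct from the words of many other blocks, so the right notion of type is membership in an elementary language $R_I=\bigcap_{i\in I}R(i)\cap\bigcap_{i\notin I}\overline{R(i)}$ with $I\subseteq[N]$, $N=O(n^2)$; the shortest word in such an intersection can be exponentially long in $N$ (this is the standard hardness of DFA intersection nonemptiness), so no polynomial bound on word length is available. The paper therefore never materializes the words: the algorithm stores only the sets $I$ (polynomial-size descriptions), and the places where your algorithm would ``look at'' a word become $\PSPACE$ subroutines --- checking $R_I\ne\es$ via intersection emptiness, and checking $|R_I|\ge2$ (needed to decide whether an $\out$ may leave the set unchanged) via the padding-and-shuffle reduction of Proposition~\ref{intersection>1}. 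You would need to replace explicit word storage by this symbolic representation, and supply both the witness lemma and the $|R_I|\ge 2$ test, to make the argument go through.
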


An idea of the proof  is
straightforward. We simulate a successful run of an automaton on a
protocol in nondeterministic polynomially bounded space. Applying
Savitch theorem we get  $\nreg(\SAPROT)\in \PSPACE$.

A possibility of such a simulation 
is based 
on a structural result about correct protocols for SA, see
Lemma~\ref{set-poly-bound} below. The next subsection introduces notions needed to this result and contains its proof.

We present the proof of  Lemma~\ref{NRRprotocolsInPSPACE} in Subsection~\ref{NRR-lemma-proof}.

\subsection{Successful automata runs on a protocol}

Let $\A$ be an NFA with the set of states $Q= Q_\A$ of size~$n$. Suppose that
there exists a~successful run of the automaton $\A$ on a protocol (a word
in the language $\SAPROT$). It implies the positive answer for the
instance~$\A$ of the
regular realizability problem $\nreg(\SAPROT)$. 
                  
Our goal in this section is to prove an existence of a successful run of
$\A$ on another correct protocol that satisfies specific requirements, see Lemma~\ref{set-poly-bound} below. 

To make exact statements we need to introduce a bunch of notation.

It is quite clear from the definition that a correct protocol just
describes a sequence of operations with the set.
Thus, a correct protocol
\begin{equation}\label{initial}
p = \# u_1\#{\op}_1\# u_2\#{\op}_2\#\cdots\#u_t\#{\op}_t
\end{equation}
determines the sequence $S_1$,
$S_2$, \dots, $S_t$ of the set contents: $S_i$ is the set content
after processing the prefix $\# u_1\#{\op}_1\#\cdots\#u_i\#{\op}_i$, i.e. performing  $i$ first operations with the set.

\emph{Query blocks}  (blocks in short) are the parts
of the protocol in the form $\# u_k\#\op_k$, $u_k\in\Gamma^*$, $\op\in\Ops$ (a~query word followed by an operation with the set).  Note that a query block is specified by a word $\# u_k\#\op_k$ and by a position of the word in the protocol: we distinguish different occurrences of the same word. 

This convention is useful to formulate   a criterion of protocol correctness. It can be expressed as follows. 

 We say that a query block $p_i$ \emph{supports} a query block $p_j$ if $u_i = u_j$ and $\op_i = \ins$, $\op_j = \tp$ or $\op_i = \out$, $\op_j = \tm$ and there is no query block $p_k$ such that $\op_k \in \{\ins, \out\}$, $u_k = u_i$ and $i < k < j$. Note that each $\tp$-block is supported by some $\ins$-block, but blocks with the operation $\tm$   may have no support in a correct protocol. \emph{Standalone blocks} are blocks in the form $p_j = \#u\#\op$, where $\op\in \{\out,\tm\}$, that  have no support (if $\op=\tm$)  and there is no block $p_k= \# u \#\ins$, where $k < j$.

The following lemma immediately follows from the definitions.

\begin{lemma}\label{ProtocolCorrectnessLemma}
  A protocol $p$ is correct iff each query block $\#u_i\#\tp$ is supported and each query block $\#u_i\#\tm$ 
is either supported or standalone.
\end{lemma}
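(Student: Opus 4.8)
The plan is to reduce protocol correctness to a purely local consistency condition on the test results and then read the support/standalone characterization off from it. Recall from Proposition~\ref{SAPROTprop} that the DSA $M_\PROT$ copies each query word $u_i$ to its work tape, performs the indicated operation, and verifies that each recorded test sign agrees with the actual membership of $u_i$ in the set at that moment; moreover $M_\PROT$ accepts exactly the correct protocols, and its own protocol on an accepted input is that input itself. Hence $p$ is correct iff, running the operations $\op_1,\dots,\op_t$ on the query words $u_1,\dots,u_t$ starting from $\es$, every test result matches reality. Since a test leaves the set unchanged, the contents $S_0=\es, S_1,\dots,S_t$ are well defined for an \emph{arbitrary} protocol (not only a correct one), and the condition becomes: for each test block $p_j$ one has $u_j\in S_{j-1}$ when $\op_j=\tp$ and $u_j\notin S_{j-1}$ when $\op_j=\tm$.

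The combinatorial core is the observation that membership of a word $u$ in $S_{j-1}$ is governed solely by the last $\ins$/$\out$ block acting on $u$ before position $j$: $u\in S_{j-1}$ iff such a block exists and is an $\ins$-block, and $u\notin S_{j-1}$ iff either there is no earlier $\ins$/$\out$ block on $u$ or the latest one is an $\out$-block. I would now match this against the definitions. For a block $p_j=\#u_j\#\tp$, the support condition demands an $\ins$-block $p_i$ on $u_j$ with no intervening $\ins$/$\out$ block on $u_j$, i.e. exactly that the latest $\ins$/$\out$ block on $u_j$ before $j$ is an $\ins$; thus $p_j$ is supported iff $u_j\in S_{j-1}$. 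For a block $p_j=\#u_j\#\tm$, being supported means the latest such block is an $\out$-block, while being standalone forces both ``no support'' and ``no earlier $\ins$-block on $u_j$'', which together say that no $\ins$/$\out$ block on $u_j$ precedes $j$ at all; the two cases jointly are precisely $u_j\notin S_{j-1}$.

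Combining the two paragraphs, $p$ is correct iff every test block is consistent, which by the equivalences above is iff every $\tp$-block is supported and every $\tm$-block is supported or standalone. I do not expect a genuine obstacle, since the statement is essentially an unwinding of the definitions; the one place demanding care is the $\tm$ case, where I must verify that ``supported or standalone'' exhausts the two disjoint ways a word can fail to lie in $S_{j-1}$ (an $\out$ with no later re-insertion, versus never having been inserted), and in particular that the clause ``no support (if $\op=\tm$)'' in the definition of standalone correctly rules out a dangling $\out$-block that would otherwise support $p_j$.
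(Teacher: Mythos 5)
Your proposal is correct and matches the paper's route: the paper gives no written argument beyond ``immediately follows from the definitions,'' and your unwinding --- reducing correctness to consistency of each test with the set content $S_{j-1}$ (via $M_\PROT$ from Proposition~\ref{SAPROTprop}) and checking that ``supported'' and ``supported or standalone'' exactly capture $u_j\in S_{j-1}$ and $u_j\notin S_{j-1}$ respectively --- is precisely the intended verification, carried out carefully.
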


Let $p$ be a correct protocol accepted by $\A$. Fix some successful run of $\A$ on $p$. The run is partitioned  as follows
\begin{equation}\label{SArunPartition}
s_0 \xrightarrow{\#u_1\#\op_1} s_1 \xrightarrow{\#u_2\#\op_2} s_2 \xrightarrow{\#u_3\#\op_3} \cdots \xrightarrow{\#u_{t}\#\op_{t}} s_t, \quad s_t\in F .	
\end{equation}
Here we assume that $\A$ starts from the initial state $s_0$, processes a query block~$p_i = \#u_i\#\op_i$ from the state $s_i$ and comes to the state $s_{i+1}$ at the end of the processing.
We say that~\eqref{SArunPartition} is the \emph{partition} of the run and two runs have the same partition if they have the same sequences of states $\{s_i\}$ and operations $\{\op_i\}$.

Let $\al$ be a triple $(q,q^\prime,\op)$, where $q,q^\prime\in Q$ and
$\op$ is an operation with the set, i.e.  $\op\in \mathtt{\Ops}$.
The language $R(\al)\in \Gamma^*$ consists of all words $u\in\Gamma^*$ such that
\[
q^\prime \in \delta_\A(q, \# u \# \op) .
\]

It is obvious that $R(\al)$ is a regular language and the minimal size of an
automaton recognizing it does not exceed~$n$ (the number of $\A$'s states). The total number $N$ of
languages $R(\al)$ is $\poly(n)$ (actually, $O(n^2)$). Each partition of the run~\eqref{SArunPartition} determines the sequence $\al_i = (s_i,s_{i+1}, \op_i)$; we say that
the partition has a type $\bal=(\al_i)_{i=1}^t$.

Suppose that  $\bal$ is a partition type of a
correct protocol. Generally, there are many partitions of correct
protocols having the type
$\bal$. We are going to choose among them as simple as possible. More
exactly, we are going to minimize the maximal size of the set contents
determined by correct protocols admitting a~partition of the type $\bal$.

For this purpose we start from a partition~\eqref{SArunPartition} of a
correct protocol and change some query words $u_i$ in it to keep the
partition type and the correctness of the protocol and to make the set contents smaller. 

Actually, we will analyze a slightly more general settings. We replace the family of regular  languages  $R(\al)$ indexed by triples $\al=(q,q^\prime,\op)$ by 
  an arbitrary finite family $\R= \{R(1),\dots,
R(N)\}$ of arbitrary languages over the alphabet $\Gamma$ indexed by
  integers $1\leq \al\leq N $; we call them \emph{query languages}. 

We extend the definition of type from partitions to protocols in a straightforward way:
we say that  a~protocol $p$~\eqref{initial} 
has a type $\bal = (\al_1,\dots,\al_t)$, $1\leq\al_i\leq N $, if $u_i\in R(\al_i)$ for each~$i$.

We transform the protocol~\eqref{initial} in two steps to achieve the
desired `simple' protocol. 

The first transformation of a correct  protocol~\eqref{initial}  of
type $\bal$ gets a
correct protocol 
\begin{equation}\label{1st-transform}
p' = \# u'_1\#{\op}_1\#u'_{2}\cdots  \#u'_t\#{\op}_t
\end{equation}
of the same type $\bal$
such that all set contents $S'_k$ determined by the
protocol~\eqref{1st-transform} are polynomially bounded in the number
$N$ of query languages. 

This property of a protocol will be used in our  $\PSPACE$-algorithm for the emptiness problem for SA. 

The formal definition of the transformation $p\to p'$ is somewhat
tricky. So we explain the intuition behind the construction. 

We are going to preserve  all operations with the set  in the transformed protocol. Also we are going to make a~sequence of the set contents $S'_k$ determined by the transformed protocol as monotone as possible. 

In general, it is impossible to make the whole sequence monotone. Thus we select a subset of query words (\emph{stable words} in the sequel) and do not change these words during the transformation. Thus, in the transformed protocol, all query blocks with stable query words  remain either supported or standalone (the initial protocol~\eqref{initial} is assumed to be correct). 

We shall get an upper bound $O(N^2)$ on the number of stable words from the formal definitions below.

The rest of query words are \emph{unstable words}. We are going to make the sequence of the set contents monotone on unstable words. 
It means that an unstable word added to the set is never deleted from it. And to satisfy this condition we are making the corresponding  $\{\out,\tm\}$-query
blocks  standalone, i.e.  we substitute unstable query words in these blocks by
words that are never presented in the set.

To satisfy this requirement for an  $\{\out,\tm\}$-block $p'_k$, the query language $R(\al_k)$ corresponding to  the block should be large enough.  Below we give a formal definition of large and small languages to satisfy this requirement.

Note that  to support a $\tp$-query
block $p'_k$ it is sufficient to have any word from the language $R(\al_k)$ in the set. The transformation strategy of $\ins$-blocks is to insert into the set the only one word from any large query language. In the formal construction below we call these words \emph{critical}. The number of critical words is at most the number of query languages~$N$. 

In this way  we come to a polynomial upper bound on the sizes of the set contents~$S'_k$.

Now we present formal definitions to realize  informal ideas explained above. 
We define the `small' languages by an iterative procedure. 
\begin{description}
\item [Step 0.] All query languages $R(i)$ that contain at most $N$ words are declared small.
\item [Step \boldmath$j+1$.] Let $W_j$ be 
the union of all query  languages that were declared  small on steps  $0,\dots,{j}$. All query languages $R(i)$ that contain at most $N$ words from $\Gamma^*\setminus W_j$ are declared small.
\end{description}
Query languages that are not small are declared  \emph{large}.

	Let $s$ be the last step on which some language was declared  small. Thus $W_s$ is the union of all small languages. The query words from $W_s$ are called \emph{stable}. It is clear from the definition that each unstable query word belongs to a large language.

It is easy to observe that there are relatively few stable query words.

\begin{proposition}\label{SmallWordsProp}
	$|W_s| \leq N^2$.
\end{proposition}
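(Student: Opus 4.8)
The plan is to bound $|W_s|$ by a straightforward double-counting argument, attributing each stable word to the step at which it first enters the union of small languages and using the defining ``at most $N$ words'' condition as the per-language budget.

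First I would observe that each of the $N$ query languages $R(1),\dots,R(N)$ can be declared small at most once in an essential way, so that the total number of \emph{distinct} languages ever declared small is at most $N$. For $j\geq 0$ let $D_j$ denote the set of query languages that are declared small \emph{for the first time} on step~$j$; the sets $D_j$ are then pairwise disjoint and $\sum_{j} |D_j|\leq N$. Setting $W_{-1}=\es$, the definition becomes uniform across all steps: a language is placed in $D_j$ only if it contains at most $N$ words of $\Gamma^*\setminus W_{j-1}$, and the union after step $j$ is $W_j = W_{j-1}\cup\bigcup_{R\in D_j} R$.

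The key step is to bound the increment $W_j\setminus W_{j-1}$ rather than the sizes of the languages themselves. Every word appearing in the union for the first time at step $j$ lies in some $R\in D_j$ and is not in $W_{j-1}$, so
\[
W_j\setminus W_{j-1}\ \subseteq\ \bigcup_{R\in D_j}\bigl(R\setminus W_{j-1}\bigr).
\]
By the defining condition of step $j$, every $R\in D_j$ satisfies $|R\setminus W_{j-1}|\leq N$, whence $|W_j\setminus W_{j-1}|\leq N\,|D_j|$. Since $W_{-1}=\es\subseteq W_0\subseteq\cdots\subseteq W_s$, telescoping and the disjointness of the $D_j$ give
\[
|W_s| = \sum_{j=0}^{s} |W_j\setminus W_{j-1}| \leq N\sum_{j=0}^{s} |D_j| \leq N\cdot N = N^2,
\]
which is the claimed bound.

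I expect this argument to be essentially routine; the only point requiring care is the bookkeeping. One must attribute each stable word to the \emph{first} step at which it enters the union, so that the per-language budget of $N$ new words is applied to the increments $R\setminus W_{j-1}$ and not to the full languages $R$. This is what prevents double-counting a word that belongs to several small languages, and it is exactly why bounding the increments (rather than summing $|R|$ over all small $R$) yields the sharp $N^2$ estimate.
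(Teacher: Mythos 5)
Your argument is correct and is essentially the paper's own proof: the paper likewise notes that there are at most $N$ small languages and that each contributes at most $N$ words to $W_s$, which is precisely your per-step increment bound $|R\setminus W_{j-1}|\leq N$. You have simply made the bookkeeping (attributing each word to the first step it enters the union and telescoping) explicit where the paper leaves it implicit.
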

\begin{proof}
    The total number of small languages doesn't exceed $N$. Each small language contributes  to the set $W_s$ at most $N$ words.
\qed\end{proof}

To define critical query words we assume that the protocol~\eqref{initial} is correct and has the type $\bal=(\al_1,\dots, \al_t)$. Let $S_1,\dots, S_t$ be the sequence of the set contents determined by the protocol. 

An unstable query word $u$  is \emph{critical} if it is contained in an $\ins$-block  $p_k =\# u\#\ins$ indicating  insertion into the set an unstable word $u$ from a large language, whose unstable words has not been  inserted into the set earlier. Formally it means that  there exists a large language $R(i)$
 such that $u_k\in R(i)\setminus W_s$ and $(R(i)\setminus W_s)\cap S_j=\es$ for all $j<k$.

\begin{proposition}\label{crit-upbnd}
  There are at most $N$ critical query words in the correct protocol~$p$~\eqref{initial}.
\end{proposition}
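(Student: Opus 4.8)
The plan is to build an injection from the set of critical query words into the family of large query languages and then invoke the bound that there are at most $N$ query languages in total. By definition, every critical word $u$ comes from an $\ins$-block $p_k=\#u\#\ins$ together with a witnessing large language $R(i)$ for which $u\in R(i)\setminus W_s$ and $(R(i)\setminus W_s)\cap S_j=\es$ for all $j<k$. I would therefore assign to each critical word one such witnessing index $i$ (choosing arbitrarily if several witnesses exist), giving a map from the critical words into $\{1,\dots,N\}$.

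The heart of the argument is to show this map is injective, i.e. that a single large language $R(i)$ cannot witness two distinct critical words. Suppose to the contrary that $R(i)$ witnesses both $u$, via an $\ins$-block at position $k$, and $u'\neq u$, via an $\ins$-block at position $k'$. Since $u\neq u'$ the blocks are distinct, so $k\neq k'$; assume without loss of generality that $k<k'$. The $k$-th operation of the correct protocol is the insertion of $u$, so by the definition of the set-content sequence we have $u\in S_k$. Together with $u\in R(i)\setminus W_s$ this yields $u\in(R(i)\setminus W_s)\cap S_k$, so this intersection is nonempty.

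On the other hand, criticality of $u'$ via $R(i)$ requires $(R(i)\setminus W_s)\cap S_j=\es$ for every $j<k'$. Taking $j=k<k'$ gives $(R(i)\setminus W_s)\cap S_k=\es$, which contradicts the previous paragraph. Hence each large language witnesses at most one critical word, the assignment is injective, and the number of critical words is at most the number of (indices of) large languages, which is at most $N$.

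I expect the only delicate point to be the bookkeeping of which step places $u$ into the set: once we record that the $\ins$-operation at step $k$ genuinely puts $u$ into $S_k$, and that $S_k$ is among the set contents constrained by the criticality of $u'$ precisely because $k<k'$, the contradiction is immediate. No estimate on the sizes of the query languages and no analysis of the structure of $W_s$ is needed here; the statement follows purely from the \emph{first-insertion} clause built into the definition of a critical word.
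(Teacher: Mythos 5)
Your proof is correct and follows essentially the same route as the paper's: both arguments observe that once a critical word $u$ is inserted at step $k$, the set content $S_k$ meets $R(i)\setminus W_s$ for every large witnessing language $R(i)$, which blocks $R(i)$ from witnessing criticality of any later word, so each large language certifies at most one critical word and the bound $N$ follows. Your write-up just makes the injection and the choice of witness explicit where the paper states the same point more tersely.
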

\begin{proof}
  The protocol $p$ is correct. Thus the first occurrence  of a critical query word $u$ is in an $\ins$-block $p_k=\# u\# \ins$. It means that $S_k\cap (R(i)\setminus W_s)\ne\es$ for all large languages $R(i)$ that contain $u_k$. 

  Therefore a large language can certify the critical property for at most one critical query word.  But the number of large languages is at most $N$.
\qed\end{proof}

Look at a $\{\tp,\ins\}$-block $p_k=\# u_k\#\op_k$ containing an unstable noncritical query word $u_k$. Note that 
 the query language $R(\al_k)$, which is specified by an index $\al_k$, $1\leq \al_k\leq N$, from the type of the protocol~\eqref{initial},
 is large (otherwise the query word $u_k$ would be stable). The query word $u_k$ is not critical. Thus $(R(\al_k)\setminus W_s)\cap S_j\ne\es$ for some $j<k$. The smallest $j$ satisfying this condition indicates the query block describing the first insertion of an unstable query word $u_j=\tilde u_k $ from the language $R(\al_k)$ into the set. Thus $(R(\al_k)\setminus W_s)\cap S_i=\es$ for $i<j$. It means that the query word $\tilde u_k $ is critical.  We assign the query word $\tilde u_k$ to the block $p_k$ and use this assignment in the construction below.

We are ready to give  exact definition of the transformation $p\to p'$ of protocols assuming correctness of $p$.

The type of the  transformed protocol is the same: $\bal=(\al_1,\dots, \al_t)$. Operations do not change: $\op'_i=\op_i$. 

All stable query words  do not change. More exactly, if  $u_k\in W_s$, then $u'_k=u_k$. 

Critical query words in $\{\tp,\ins\}$-blocks do not change: if $u_k$ is critical and $\op_k\in \{\tp,\ins\}$, then $u'_k= u_k$.

An unstable noncritical  query word  in a $\{\tp,\ins\}$-block $\# u_k\#\op_k$ is substituted by the critical query word $\tilde u_k$ assigned to the block.

An unstable query word in a $\{\tm,\out\}$-block  $\#u_k\#\op_k$ is substituted by a word $u'_k$ from $R(\al_k)\setminus (C\cup W_s)$, where $C$ is the set of critical query words. The substitution is possible because there are more than $N$ words in $R(\al_k)\setminus W_s$ (the language $R(\al_k)$ is large) and there are at most $N$ critical query words due to Proposition~\ref{crit-upbnd}.

\begin{proposition}\label{small-set-content}
  If the  protocol $p$~\eqref{initial} is correct, then the transformed  protocol~$p'$ \eqref{1st-transform} is correct.

  Each set content $S'_k$ determined by the  protocol $p'$ contains at most  $N^2+N$  words.
\end{proposition}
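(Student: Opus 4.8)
The plan is to establish the two assertions separately, both resting on a single bookkeeping observation: the transformation sorts the query words occurring in $p'$ into three disjoint kinds. First, stable words from $W_s$ are never touched, so the blocks carrying them (together with their operations and their relative order) are literally identical in $p$ and in $p'$; consequently a stable word $w$ lies in $S'_k$ exactly when it lies in $S_k$. Crucially, no substitution rule ever produces a stable word: unstable noncritical $\{\tp,\ins\}$-blocks receive a critical word $\tilde u_k\in C$, and $\{\tm,\out\}$-blocks receive a word from $R(\al_k)\setminus(C\cup W_s)$, both of which avoid $W_s$. Second, every $\ins$-block of $p'$ carrying an unstable word inserts a critical word (critical $\ins$-blocks are unchanged, and unstable noncritical $\ins$-blocks now insert their assigned $\tilde u_k\in C$). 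Third, every $\{\tm,\out\}$-block carrying an unstable word of $p$ is rewritten to a fresh word outside $C\cup W_s$; such a word is never inserted anywhere in $p'$, so the operation on it is a no-op on the set content.

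For correctness I would invoke Lemma~\ref{ProtocolCorrectnessLemma} and verify that each $\tp$-block of $p'$ is supported and each $\tm$-block is supported or standalone. For $\tp$-blocks: a stable one inherits its support from $p$ verbatim; a $\tp$-block carrying a critical word $v$ (whether unchanged or rewritten to $v=\tilde u_k$) is supported because $v$ is inserted by a critical $\ins$-block strictly earlier---for a rewritten block this is precisely the block $p_j$ at the smallest $j<k$ used to define $\tilde u_k$, and for an unchanged critical $\tp$-block it is supplied by the correctness of $p$---while, since no $\out$-block of $p'$ carries a critical word, the most recent $\{\ins,\out\}$-block on $v$ before position $k$ is necessarily an $\ins$. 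For $\tm$-blocks: a stable one again inherits its status from $p$; an unstable one now carries a fresh word that is never inserted in $p'$, hence is standalone.

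For the size bound I would split $S'_k=(S'_k\cap W_s)\sqcup(S'_k\setminus W_s)$. By the identical evolution of stable words the first part lies inside $W_s$, so $|S'_k\cap W_s|\leq|W_s|\leq N^2$ by Proposition~\ref{SmallWordsProp}. By the classification of insertions, the second part consists only of critical words, so $|S'_k\setminus W_s|\leq|C|\leq N$ by Proposition~\ref{crit-upbnd}. Summing the two bounds gives $|S'_k|\leq N^2+N$.

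The main obstacle is the support of $\tp$-blocks carrying critical words: one must exhibit, before position $k$, an $\ins$-block inserting the same critical word and argue that no deletion intervenes. The substitution rule secures the `no deletion' half by purging all unstable words from $\{\tm,\out\}$-blocks, whereas the existence of a preceding critical $\ins$-block follows---for rewritten blocks---from the minimality in the definition of $\tilde u_k$ and---for unchanged critical $\tp$-blocks---from the correctness of $p$, which furnishes an $\ins$-block on $v$ that, being a critical word, is left untouched by the transformation.
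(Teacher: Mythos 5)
Your proof is correct and follows essentially the same route as the paper's: the size bound comes from observing that $S'_k$ can only contain stable words (at most $N^2$ by Proposition~\ref{SmallWordsProp}) and critical words (at most $N$ by Proposition~\ref{crit-upbnd}), and correctness is verified via Lemma~\ref{ProtocolCorrectnessLemma} by the same case analysis on stable, critical, and freshly substituted query words. Your treatment is somewhat more explicit than the paper's on why a $\tp$-block carrying a critical word finds a preceding $\ins$-block with no intervening deletion, but the underlying argument is the same.
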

\begin{proof}
The set content $S'_k$ determined by the protocol $p'$ can contain only two kinds of query words: stable
query words (there are at most $N^2$ of them due to
Proposition~\ref{SmallWordsProp}) and  critical query words
 (there are at most $N$ of them due to Proposition~\ref{crit-upbnd}). It
gives the required upper bound on the size of the set contents.

Stable query words in both protocols occur in the same blocks. Since the protocol $p$ is correct, all test blocks of the protocol~$p'$ containing stable query words are either supported or standalone.

An unstable query word in a $\tm$-block of the protocol~$p'$ is noncritical by definition  the construction. Thus the block is standalone.

It follows from the correctness of $p$ that  the
first occurrence  of a critical query word $u$ in the protocol $p$ is in an $\ins$-block. 
Queries in  $\out$-blocks of the transformed protocol $p'$ do not contain critical  query words. Therefore, 
the query word $u$ is in the set at any later moment. It implies that each $\tp$-block containing the query word $u$ is supported in the protocol~$p'$.
\qed\end{proof}

Query words in a correct protocol may be very long. To operate with them in polynomial space, we describe them implicitly 
in the $\PSPACE$-algorithm below. All relevant information about a word is a list of query languages containing it.
 Thus,  we  divide the whole set of words
over the alphabet~$\Gamma$ in the non-intersecting \emph{elementary
  languages}
\begin{equation}\label{elementary-language}
  R_I = \bigcap_{i\in S} R(i) \cap \bigcap_{i\notin S} \overline{R(i)}, 
  \quad I\subseteq [N].
\end{equation}
We call $I$ a \emph{type} of a query word $u$, if $u\in R_I$. Words will be represented by their types in the algorithm below.

Such a representation causes a problem: the set content is  represented in this way by types of words and it is unclear how many different words of the same type are in the set. 

To avoid this problem, we assume that the only one word of each type is in the set at any moment. To justify the assumption, we need the second transformation of protocols.

To define the second transformation of the protocol~\eqref{initial},
let choose two query words~$u_I$, $v_I$ in each elementary language
$R_I$ containing at least two words. 
The transformed protocol
\begin{equation}\label{2nd-transform}
  p''= \# u''_1\#{\op}_1\# u''_2\#{\op}_2\#\cdots\#u''_t\#{\op}_t
\end{equation}
is produced as follows. 

The type of $p''$ equals the type of $p$. All operations are the same. 

If a word $u_i$ belongs to some language $R_I$ that contains only one word, we do not change $u_i$. Otherwise we substitute $u_i$ by $u_I$ in the case of $\op_i\in\{\ins, \tp\}$ or by $v_I$ in the case of $\op_i\in\{\out, \tm\}$. Here $I$ is the type of $u$.

\begin{proposition}\label{R_I-unique}
  If the initial protocol~\eqref{initial} is correct, then the transformed protocol~\eqref{2nd-transform} is also correct. 

  Each set content $S''_k$ determined by the transformed protocol contains at most one word from any elementary language $R_I$, $I\subseteq[N]$.
\end{proposition}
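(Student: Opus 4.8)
Looking at Proposition~\ref{R_I-unique}, I need to prove two things about the second transformation $p \to p''$: that correctness is preserved, and that each set content $S''_k$ contains at most one word from every elementary language $R_I$.

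My plan is to first establish the uniqueness property, since it essentially follows from the substitution rule by construction, and then use it together with the correctness criterion (Lemma~\ref{ProtocolCorrectnessLemma}) to establish correctness.

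\smallskip

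For the uniqueness claim, I would argue as follows. The substitution rule deterministically maps each query word $u_i$ to a canonical representative of its elementary language: within a fixed $R_I$ containing at least two words, every $\{\ins,\tp\}$-block uses $u_I$ and every $\{\out,\tm\}$-block uses $v_I$, while singleton languages are left untouched. The content $S''_k$ consists precisely of those words inserted by some $\ins$-block at position $\leq k$ and not subsequently removed. But every word actually placed into the set comes from an $\ins$-block, hence is either the unique word of a singleton $R_I$ or equals the designated $u_I$ for its type $I$. Therefore at most one word per elementary language can ever reside in the set, which is exactly the claim. I would note that the distinct choice of $u_I \neq v_I$ is what makes $\{\out,\tm\}$-blocks target a word that is \emph{never} inserted, so they cannot accidentally hit the stored representative $u_I$.

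\smallskip

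For correctness, I would invoke Lemma~\ref{ProtocolCorrectnessLemma}: it suffices to check that every $\tp$-block of $p''$ is supported and every $\tm$-block is supported or standalone. I would separate into cases by type $I$ of the original word $u_k$. If $R_I$ is a singleton, the word is unchanged and the support/standalone status of $p_k$ is inherited directly from the correctness of $p$, since all same-type blocks are also unchanged. If $R_I$ has at least two words, then in $p''$ every $\tp$-block of type $I$ tests $u_I$, and since $p$ is correct the original $\tp$-block was supported by some earlier $\ins$-block of the \emph{same} query word; under the transformation that supporting $\ins$-block now inserts $u_I$ as well, and no intervening $\{\out,\tm\}$-block of type $I$ removes it (those now all target $v_I \neq u_I$). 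Hence the $\tp$-block remains supported. For a $\tm$-block of a multi-word type, it tests $v_I$, which is never inserted by any block of $p''$, so it is trivially standalone.

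\smallskip

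The main obstacle I anticipate is handling the interaction between the two transformations and the support relation carefully: specifically, confirming that no $\{\out,\tm\}$-block can disrupt the support of a $\tp$-block after substitution. This is where the key design choice---routing insertions/positive tests to $u_I$ but deletions/negative tests to the distinct word $v_I$---does the real work, decoupling the two operation families within each elementary language. Once this separation is made explicit, the support relation of $p''$ restricted to each type either mirrors that of $p$ (singleton case) or collapses into the clean $u_I$-versus-$v_I$ dichotomy, and Lemma~\ref{ProtocolCorrectnessLemma} closes the argument.
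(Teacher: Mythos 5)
Your proposal is correct and follows essentially the same route as the paper: uniqueness by observing that only the canonical representative $u_I$ (or a singleton's unique word) can ever be inserted, and correctness via Lemma~\ref{ProtocolCorrectnessLemma} with the same three-way case split (unchanged singleton-type blocks, $\tp$-blocks re-supported by the transformed $\ins$-block since no $\out$-block ever targets $u_I$, and $\tm$-blocks standalone because $v_I$ is never inserted). Your explicit remark that the disjointness of the elementary languages and the choice $u_I\ne v_I$ are what decouple insertions from deletions within each type is exactly the point the paper's proof relies on.
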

\begin{proof}
In the modified protocol each $\ins$-block with a query word of a  type $I$ contains the same query word~$u_I$ provided $|R_I|\geq 2$. So, the only word from the elementary language $R_I$
that can be in the set is~$u_I$. 

To prove correctness of the transformed protocol we use Lemma~\ref{ProtocolCorrectnessLemma}.  Query blocks with query words from 1-element elementary languages do not change. The initial protocol~\eqref{initial} is correct. So each query block of this form is either supported or standalone.
A~block $\#v_I\#\tm$, where $|R_I|\geq2$, is standalone because the word $v_I$ is never added to the set while performing operations described by the protocol~\eqref{2nd-transform}. 

In the initial protocol any $\tp$-block $p_k = \#u_k\#\tp$, where $u_k\in R_I$ and $|R_I|\geq2$, is supported. 
Thus at least one preceding block $p_i = \#u_i\# \ins$, $i<k$ contains a query word $u_i$  of the  type~$I$. 
It implies that  
the block $p'_k = \#u_I\#\tp$ of the protocol~\eqref{1st-transform} is supported by the block $p'_j=  \#u_I\#\ins$, $i\leq j<k$.
\qed\end{proof}

Now we return to successful runs of an automaton on a correct protocol. Propositions~\ref{small-set-content} and~\ref{R_I-unique} immediately imply the following lemma.

\begin{lemma}\label{set-poly-bound}
  Let $\A$ be an NFA with $n$ states. If $\A$ accepts
  a~correct protocol, then it accepts a correct protocol such that $|S_i|=
  \poly(n)$ for all~$i$ and  at most one word from each elementary language can belong to the set. 
\end{lemma}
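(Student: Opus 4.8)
The plan is to combine the two protocol transformations already constructed in the preceding development into a single pass and then observe that the resulting protocol has both desired properties simultaneously. The statement of Lemma~\ref{set-poly-bound} is essentially a packaging of Propositions~\ref{small-set-content} and~\ref{R_I-unique}, so the real work has already been done; what remains is to verify that applying the transformations in the right order preserves everything we need. Concretely, I would start from a successful run of $\A$ on a correct protocol $p$, read off its partition type $\bal=(\al_1,\dots,\al_t)$ as in~\eqref{SArunPartition}, and take the query languages to be the family $R(\al)$ induced by $\A$, where $N=\poly(n)$ is the number of such languages (recall $N=O(n^2)$).

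First I would apply the transformation $p\to p'$ of~\eqref{1st-transform}. By Proposition~\ref{small-set-content} the protocol $p'$ is correct, has the same type $\bal$, and every set content $S'_k$ it determines has size at most $N^2+N=\poly(n)$. Crucially, because $p'$ has type $\bal$, the very same run of $\A$ that accepted $p$ also accepts $p'$: each block $\#u'_i\#\op_i$ satisfies $u'_i\in R(\al_i)$, which by definition of $R(\al_i)$ means $s_{i+1}\in\delta_\A(s_i,\#u'_i\#\op_i)$, so the sequence of states $\{s_i\}$ remains a valid accepting run. Thus $\A$ accepts the correct protocol $p'$ whose set contents are polynomially bounded, establishing the first half of the claim.

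Next I would apply the second transformation $p'\to p''$ of~\eqref{2nd-transform} to $p'$. By Proposition~\ref{R_I-unique}, $p''$ is again correct, retains the type $\bal$, and each set content $S''_k$ contains at most one word from each elementary language $R_I$. As before, preservation of the type $\bal$ guarantees that $\A$'s run still accepts $p''$. It remains only to check that the polynomial bound on set sizes survives this second pass. This is the one point requiring a short argument rather than a direct citation: the second transformation replaces each query word by one of at most two representatives $u_I,v_I$ of its elementary language without changing which blocks insert or delete, and it does not enlarge the set contents beyond substituting one word for another of the same type. Hence $|S''_k|\leq|S'_k|=\poly(n)$ is inherited, and simultaneously at most one word of each type sits in the set. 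Combining both conclusions, $\A$ accepts a correct protocol with $|S_i|=\poly(n)$ for all $i$ and at most one word per elementary language, as required.

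The main obstacle to watch is precisely this interaction between the two transformations: I must confirm that running the type-collapsing transformation of Proposition~\ref{R_I-unique} \emph{after} the size-shrinking transformation of Proposition~\ref{small-set-content} does not undo the polynomial bound. Since both transformations preserve the type $\bal$ and operate within elementary languages without introducing new inserted words, their effects are compatible, and the composition yields a protocol satisfying both properties at once. This compatibility—rather than any new combinatorial estimate—is the only thing that needs verification, which is why the lemma follows immediately from the two propositions.
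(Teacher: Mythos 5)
Your proposal is correct and follows exactly the paper's route: apply the first transformation (Proposition~\ref{small-set-content}) and then the second (Proposition~\ref{R_I-unique}), noting that the second pass cannot increase the sizes of the set contents and that type preservation keeps the same accepting run of $\A$ valid. The paper's own proof is a two-line version of precisely this composition argument.
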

\begin{proof}
Just apply the second transformation to the result of the first transformation. The second transformation do not increase the sizes of the set contents.
\end{proof}

\subsection{Proof of Lemma~\ref{NRRprotocolsInPSPACE}}\label{NRR-lemma-proof}

Let $\A$ be an input automaton for the emptiness problem. We measure
the size of the input by the number of states of this automaton.

As it mentioned above, we simulate a successful run of the automaton on
a protocol by a nondeterministic algorithm using polynomially bounded
space.

Due to Lemma~\ref{set-poly-bound} there exists a protocol $p$ and a
successful run of the automaton $\A$ on the protocol such that the
sizes of set
contents during the protocol are upper bounded  and the set contains at most one word from each elementary language. The upper bound is polynomial
in $n$ and can be derived explicitly from the arguments of the previous subsection.  It is easy to verify that $M =
32n^4$ upperbounds the sizes of set contents determined by the protocol from
 Lemma~\ref{set-poly-bound}.

The second condition  of Lemma~\ref{set-poly-bound} implies that to describe the set content it is sufficient to indicate elementary languages intersecting it. An elementary language can be described by a set $I$ of 
triples in the form $(q_1,q_2, \op)$,  where $q_1,q_2\in Q(\A)$ and
 $\op\in \{\ins,\out,\tp,\tm\}$. The language $R_I$ is specified by
Eq.~\eqref{elementary-language}. Such a representation has polynomial size, namely,  $O(n^2)$.

The simulating algorithm stores the description of the set content and
the current state of the automaton~$\A$. The algorithm
nondeterministically guesses the change of this data after reading the
next query block of the simulated protocol. 

To complete the proof we should devise an algorithm to check the
correctness of a simulation step. It  is specified by
indicating the state of $\A$ after reading a~query block, an elementary language
$R_I$, containing the query word on this step, where $I$ is the type of a query block to be read, and the description of the set content after performing the operation with the set on this step. 

It is easy to check that the state of $\A$ is changed correctly using the description of $\A$. 

The  language $R_I$ should be nonempty (otherwise the simulation goes wrong).
It is well-known that the intersection problem for regular languages
is in $\PSPACE$~\cite{Kozen:1977:LBN:1382431.1382559,Lange1992}. Thus, verifying $R_I\ne\es$ can be done in polynomial
space. It guarantees the possibility of the query indicated by the current query block. 

Test results are easily verified by the the description of the set content before a~query.

If the current query is $\ins$, then the set content is changed as follows. 
Due to Lemma~\ref{set-poly-bound} we assume  that there is at most one word of each type $I$ in the set. So, if the current query word has type~$I$ and there are  no words of type $I$ in the set, then the type $I$ is included in the set. Otherwise, the set is not changed.

If the current query is $\out$ and the set includes the current query type~$I$, then  two outcomes are possible: either a~word of the type $I$ is deleted from the set  or  the set  remains unchanged. The latter is possible iff the elementary language $R_I$ contains at least two words. 

Thus, to complete the proof we need to prove a proposition below.

\begin{proposition}\label{intersection>1}
  There exists a polynomial space algorithm to verify $$\big|\bigcap_{i}R_i\big|\geq2,$$ 
where $R_i$ are regular languages represented by NFA. 
\end{proposition}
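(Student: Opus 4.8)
The plan is to reduce the question $\big|\bigcap_i R_i\big|\geq 2$ to an intersection-nonemptiness-style problem that can be checked by an on-the-fly simulation. Write $\A_i$ for the NFA recognizing $R_i$, with state set $Q_i$. The intersection $\bigcap_i R_i$ is recognized by the product automaton $\B$ whose states are the tuples $(q_i)_i$ with $q_i\in Q_i$, whose transitions act componentwise, and whose accepting states are the tuples all of whose components are accepting. Although $\B$ has exponentially many states, each state has polynomial description and a single transition of $\B$ is computable in polynomial time. Hence I would never construct $\B$ explicitly; I would only simulate its runs, storing one polynomial-size tuple per simulated run.

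To certify $\big|L(\B)\big|\geq 2$ I would guess two accepted words simultaneously and run two independent copies (``threads'') of $\B$ on them. The two guessed words are read synchronously, padded to a common length by a fresh symbol $\bot\notin\Gamma$: at each step the algorithm guesses a pair $(a,b)\in(\Gamma\cup\{\bot\})^2$, advances thread one on $a$ and thread two on $b$, and records in a one-bit flag whether $a\neq b$ has ever occurred. Each coordinate reads letters of $\Gamma$ until it switches to $\bot$, at which moment the corresponding thread is declared finished and its current product-state is checked to be accepting; afterwards that coordinate stays on $\bot$. The algorithm accepts when both threads are finished and the difference flag is set. The correctness claim I would verify carefully is that, under padding to the common length $\max(|w_1|,|w_2|)$, two words are distinct iff their padded versions differ in some position, and that within this common length no position carries $\bot$ in both coordinates, so the flag is set exactly when $w_1\neq w_2$.

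The whole simulation stores only two tuple-states of $\B$ and $O(1)$ flags, which is polynomial in the total size of the input NFAs, and every local check (a product transition, the acceptance test, the comparison $a\neq b$) is computable in polynomial time; so the procedure is a nondeterministic polynomial-space algorithm. Its configuration graph has $2^{\poly}$ vertices with polynomial-time decidable edges, so by Savitch's theorem (equality of nondeterministic and deterministic polynomial space) the existence of an accepting run is decidable in deterministic polynomial space; equivalently, if two distinct words of the intersection exist, then two such words of length at most $2^{\poly}$ exist, bounding the guessing. This yields $\big|\bigcap_i R_i\big|\geq 2$ in $\PSPACE$. The main obstacle I anticipate is exactly this pair of points: pinning down the distinctness condition under padding so that the single difference flag is provably correct for words of unequal length, and turning the unbounded nondeterministic guessing into a genuine $\PSPACE$ decision via the configuration-graph/Savitch argument.

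For the intended use of this proposition on the elementary languages $R_I$ from~\eqref{elementary-language}, some factors are complements $\overline{R(i)}$, and these are handled by the same simulation if each complemented factor is represented by its subset-construction DFA: a state is a subset of $Q_i$ (a polynomial-size object), a transition is one subset-construction step, and the accepting subsets are precisely those disjoint from $F_i$. With this representation the product $\B$ still has polynomial-size states and polynomial-time transitions, so the argument above applies verbatim.
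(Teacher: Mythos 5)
Your proof is correct and rests on the same key idea as the paper's: certify $\big|\bigcap_i R_i\big|\geq 2$ by tracking two synchronized runs on a pair of words padded to a common length, together with a one-bit flag recording whether the two words differ at some position. The only difference is packaging --- the paper encodes the two padded runs as a perfect shuffle over an extended alphabet and reduces to the known $\PSPACE$ intersection-emptiness problem, whereas you simulate the product automaton on-the-fly and invoke Savitch directly; both are sound, and your closing remark on handling complemented factors via the subset construction addresses a point the paper leaves implicit.
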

\begin{proof}
  We reduce the problem to the intersection emptiness problem for regular languages, which is in $\PSPACE$ as it mentioned above. 

Let $R_1$, \dots, $R_m$ be regular languages over the alphabet $\Sigma$. We define languages $B, B_1, \dots, B_m$ over the alphabet $\Sigma \cup\{\lozenge\}$, where $\lozenge\notin \Sigma$, as follows.

The language $B_i$ consists of words $u_1v_1\dots u_tv_t$ such that $u_1u_2\dots u_t\in R_i\lozenge^*$ (a~word from $R_i$ padded by dummy symbols), 
$v_1v_2\dots v_t\in R_i\lozenge^*$. An automaton recognizing $B_i$ is constructed from an automaton recognizing $R_i$ in polynomial time by obvious modification.

The language $B$ consists of words $u_1v_1\dots u_tv_t$ such that $u_i\ne v_i$ for some $1\leq i\leq t$. It has a constant state complexity assuming the alphabet $\Sigma $ is fixed.

It is clear from the construction that
\[
\big|\bigcap_{i}R_i\big|\geq2\quad\text{iff}\quad
B\cap \bigcap_{i}B_i\ne\es.
\]
If $u\ne v$ are two different words in the $\bigcap_{i}R_i$, then they can be padded by dummy symbols $\lozenge$ to equal lengths and the perfect shuffle of the padded words is in $B\cap \bigcap_{i}B_i$. In other direction, any word from $B\cap \bigcap_{i}B_i$ produces two different words from $\bigcap_{i}R_i$ by taking symbols from odd and even positions respectively and deleting dummy symbols.
\qed\end{proof}

\begin{remark}
  In Statements~\ref{small-set-content} and~\ref{R_I-unique}  we do not restrict languages
  $R(i)$, $1\leq i\leq N$, to be regular. So, our proof smoothly extends on any   algorithmic problem in the from $\SAPROT\cap L(D)\stackrel{?}{=} \es $ for a
  model of computation $D$ such that both  intersection problems indicated in the above proof for
  $D$ are in $\PSPACE$. 

  This generalization of our theorem looks rather interesting and is
  worth to further investigation.
\end{remark}

\section{$\SA \subseteq \NP$}\label{sect:SAinNP}

We come to the main result of the paper. 
The general idea is very natural:  to prove that for each word from a~language recognizable by an NSA  there exists a~\emph{short} (polynomial) accepting run of the NSA  for the word. 
Thus an NP-algorithm guesses this run and verifies that the NSA accepts on that run.

We assume in asymptotics that the length of the input word $w$ tends to infinity. So we  call an object (a word, a protocol, a graph, etc) \emph{short} if the object description  is polynomial in $|w|$. Also, we call an object \emph{constant} if the object description is $O(1)$.

Realization of the above idea is technically hard. To implement it we provide a detailed analysis of successful runs of NSA and corresponding correct protocols and introduce a special form of NSA.
We still assume that Requirements~(i-iii) from Section~\ref{SAcones} hold for NSA.

Before starting a long series of definitions, we explain the main
obstacle to realize the above plan. Possible $\eps$-transitions in an
NSA run would make the total number of the set operations in the run (i.e. the number of query blocks in the corresponding protocol)  
arbitrary large.  If we had no $\eps$-transitions in NSA, we would
have had the same proof as for  $\text{L-R-NSA}\in \NP$
in~\cite{Lange96setautomata}. But $\eps$-transitions require more
careful consideration.

To overcome this difficulty we will use a special form of NSA that we call \emph{Atomic Action Normal Form} (AANF). AANF is a refinement of the infinite action normal form from~\cite{KutribSApaper2016} developed for DSA.

\subsection{Atomic Action Normal Form}

At first, we provide the definition of the action normal form (ANF) and then we define AANF as a special form of ANF.

\begin{definition}
	We say that SA $M$ is in the action normal form if the initial state appears only once on each run of $M$, and each state is either marked by operations $\mathbf{test+}$, $\mathbf{test-}$, $\mathbf{in}$, $\mathbf{out}$, if on a transition to this state $M$ performs the respective operation, or marked by $\wr$ in the other case (we assume that if SA doesn't perform a query, then it writes something on the work tape, maybe $\eps$).
\end{definition}

 So, if $M$ is in ANF, then its states are divided into groups according marks: $$S = \{s_0\} \cup S_{\tp} \cup S_{\tm} \cup S_{\ins} \cup S_{\out} \cup S_\wr,\quad S_{\op} \cap S_{\op^\prime} = \es,\quad \text{if }\op \neq \op^\prime.$$ 
Denote $$S_{\beg} = \{s_0\} \cup S_{\tp} \cup S_{\tm} \cup S_{\ins} \cup S_{\out},\quad S_\ends = S_\beg \setminus \{s_0\}$$ the sets of 
the start and the end states of queries.

\begin{lemma}[\cite{KutribSApaper2016}]\label{ANFLemma} For each SA $M$ there exists an equivalent SA $M^\prime$ in ANF.
\end{lemma}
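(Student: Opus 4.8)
The plan is to apply a standard state-splitting construction: we make several copies of each state of $M$, one for every possible way of entering it, and introduce a fresh initial state that is never re-entered. The point is that in $M$ a single state $q$ may be reachable by transitions performing different operations (a write, an $\ins$, an $\out$, or either outcome of a test), which violates the ANF requirement that the incoming operation be determined by the target state. Tagging each state with its incoming operation repairs this.

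Formally, I would take the state set of $M'$ to be $\{s_0\}\cup (S\times\{\tp,\tm,\ins,\out,\wr\})$, where the genuine initial state $s_0$ is a fresh object distinct from every pair $(q,\op)$. The transitions are obtained by redirecting each transition of $M$ to the tagged copy of its target: a write transition $(s,x,(s',z'))$ of $M$ becomes, from every copy of $s$ present in $M'$ (that is, from $s_0$ when $s=s_0$ and from every $(s,\op)$), a write of $z'$ reading $x$ into $(s',\wr)$; a transition performing $\ins$ (resp. $\out$) into $s'$ goes into $(s',\ins)$ (resp. $(s',\out)$); and a test transition $(s,x,(s_+,\test,s_-))$ becomes a test with positive target $(s_+,\tp)$ and negative target $(s_-,\tm)$. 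A copy $(q,\op)$ is accepting iff $q\in F$, and the fresh $s_0$ is accepting iff $s_0\in F$. Since every transition of $M'$ targets a tagged copy, $s_0$ has no incoming edges and hence occurs at most once---at the start---on any run; and by construction every copy $(q,\op)$ is entered only by transitions performing $\op$ (with $\wr$ meaning a non-query write, possibly of $\eps$). Thus $M'$ is in ANF.

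For equivalence I would exhibit the evident bijection between runs: a run $s_0=q_0\to q_1\to\cdots$ of $M$, in which $\op_i$ is the operation labelling the transition into $q_i$, corresponds to the run $s_0\to (q_1,\op_1)\to (q_2,\op_2)\to\cdots$ of $M'$, and conversely any run of $M'$ projects to a run of $M$ by forgetting the second component. Along corresponding runs the consumed input, the work-tape contents, and the set contents coincide step by step, because the set operations and the words written are copied verbatim; hence one run is accepting iff the other is, giving $L(M')=L(M)$. The same construction preserves determinism, since the outgoing transitions of a copy $(q,\op)$ depend only on $q$ and the current input symbol. The only delicate points, and the place I would spend care, are the boundary conditions: ensuring the fresh initial state carries no mark yet duplicates the outgoing behaviour of $s_0$, handling acceptance uniformly including the degenerate case $s_0\in F$, and checking that re-entries of the original $s_0$ are routed to the tagged copies $(s_0,\op)$ rather than to the fresh initial state, so that the ``appears only once'' condition is genuinely met without losing any run of $M$.
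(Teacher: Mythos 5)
Your construction is correct: the standard state-splitting that tags each state with the operation performed on entry, together with a fresh never-re-entered initial state, yields an equivalent SA in ANF and visibly preserves determinism. The paper does not prove this lemma itself but cites it from Kutrib, Malcher and Wendlandt, where essentially this same construction is used, so your argument matches the intended one.
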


In fact Lemma~\ref{ANFLemma} was stated in~\cite{KutribSApaper2016} for DSA. Its proof doesn't depend on determinism, but it preserves determinism of SA. Thus, from now on, we assume that NSA into consideration are in ANF.

Before introducing a more specific form of NSA, we need to introduce notions to describe properties of NSA runs and corresponding protocols, which will be important in the proof below.

In analysis of the emptiness problem we have used partitions of runs
(see Eq.~\eqref{SArunPartition}) of NFA that correspond to
protocols. Here we use similar partitions for NSA runs. But now we should
take into account symbols that are read from the input tape.

Any accepting run  of SA $M$ on an input word~$w $ is divided by queries into 
\emph{query runs}
\begin{equation}\label{SArun}
s_0 \xrightarrow[u_0]{x_0} s_1 \xrightarrow[u_1]{x_1} s_2 \xrightarrow[u_2]{x_2} \cdots \xrightarrow[u_{n-1}]{x_{n-1}} s_n, \quad s_n\in F,	
\end{equation}
where we assume that the SA $M$ starts from a state $s_i$ with the blank work tape, writes a query word $u_i$ on the work tape on processing of a word $x_i$ on the
input tape and performs a query at a state $s_{j+1}$.  
Thus $w= x_0x_1\cdots x_{n-1}$, $x_i \in \Sigma^*$ and the corresponding protocol has the form $$p = \# u_0\#\op_0\# u_1\#\op_1\dots\# u_{n-1}\#\op_{n-1} .$$ 

We will call partitions in the form~\eqref{SArun} \emph{run partitions}. A~protocol is determined by a run partition. Thus we will use for run partitions the notions introduced for protocols. In particular, query runs will be indicated by the corresponding query blocks. Note that the run partition~\eqref{SArun} assigns a~subword $x_i$ of the input word to the query block $\# u_i\#\op_i$. We call this word $x_i$ a \emph{query generator} for the block $\# u_i\#\op_i$. Full information about the query run corresponding to the query block $p_i$ of the protocol includes also the start state $s_i$ and the terminal state $s_{i+1}$.

\emph{Segments} of a partition are  sequences of query runs $$s_i\xrightarrow[u_i]{x_i} s_{i+1}
\xrightarrow[u_{i+1}]{x_{i+1}} \cdots \xrightarrow[u_{j-1}]{x_{j-1}} s_{j} .$$ A~segment may contain many queries. Thus we denote it as $s_i \xrightarrow{x_ix_{i+1}\cdots x_j} s_j$.

Note that 
$x_i=\eps$ is possible 
 and query runs of this form cause the main difficulty described above: 
a \emph{$\eps$-segment} $s_i \xrightarrow{\eps} s_j$ may contain a lot of 
queries that support the tests on the nonempty $x_k$s. On the other hand, there are only polynomially many query blocks with non-empty query generators. Thus we treat the cases of empty and non-empty generators in different ways.

Our goal is to transform an accepting run partition~\eqref{SArun} into a short accepting run partition. We call a transformation \emph{valid} if it preserves accepting run partitions. Correctness of the corresponding protocol is now not sufficient for a~transformation to be valid. We also need to maintain the relations between query words in blocks and their query generators. 

For this purpose we  use a relation that holds if $M$ can perform a query with a~query word $u$ after processing $x$ starting from $s$ and querying at $q$.  We denote it as $s \xhookrightarrow[u]{x} q$.
A~sister relation $s \xhookrightarrow[U]{X} q$, where $X$, $U$ are languages in the corresponding alphabets, means that $U = \{ u \mid \exists x \in X : s \xhookrightarrow[u]{x} q \}$.

We denote $L_{s_i,s_j}$ a language such that the relation $s_i \xhookrightarrow[L_{s_i,s_j}]{\Sigma^*} s_j $ holds, $s_i \in S_\beg$, $s_j \in S_\ends$. So, $L_{s_i,s_j}$ consists of all the words that $M$ can write on the work tape on a path from $s_i$ to $s_j$ on processing some word without performing queries in between.

Using this notation, we restate Lemma 8 from~\cite{KutribSApaper2016} in a way convenient for our needs.
\begin{lemma}[Lemma 8~\cite{KutribSApaper2016}]\label{ExcludeFiniteness}
	Let $F$ be a finite language. For SA $M$ in ANF there exists an equivalent SA $M^\prime$ in ANF such that for each pair of states $s_i^\prime \in S^\prime_\beg$, $s_j^\prime \in S^\prime_\ends$ $$L_{s_i^\prime, s_j^\prime} = L_{s_i, s_j}\setminus F$$ for some states  $s_i, s_j$ of $M$.
\end{lemma}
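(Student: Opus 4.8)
The plan is to absorb every set operation whose query word belongs to the finite set $F$ into the finite control, so that only query words outside $F$ survive as genuine operations on ${\mathbb S}$. The point is that $F$ is finite, hence there is a length bound $\ell=\max\{|w|:w\in F\}$, and membership of the currently written query word in $F$ can be tracked by a deterministic finite automaton $D_F$ over $\Gamma$ whose states are the prefixes of words of $F$ together with a single sink state meaning ``no longer a prefix of any word of $F$''. I would augment the control of $M$ by two finite components: the current state of $D_F$ on the query word being written, and a subset $T\subseteq F$ recording which words of $F$ are, at the current moment, regarded as members of ${\mathbb S}$. Both components are finite, so $M'$ remains a legitimate SA, and the $F$-words are maintained entirely inside the control rather than inside the real set.

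The one genuine difficulty is that an SA can erase its work tape only by performing a real query, whereas I want to perform no real query for a word of $F$. I resolve this by \emph{buffering}: as long as the query word written so far is still a prefix of some word of $F$ (that is, $D_F$ has not yet entered its sink state), $M'$ keeps this short word in the control and writes nothing on the work tape; only when $D_F$ enters the sink state — certifying that the eventual query word lies outside $F$ — does $M'$ flush the buffered prefix onto the tape and thereafter write normally. Consequently, at the moment $M$ reaches a query state with query word $u$: if $u\in F$ the tape is still empty and $M'$ performs the operation internally (updating $T$ for $\ins$/$\out$, and branching on whether $u\in T$ for $\tp$/$\tm$) by a $\wr$-transition that leaves the tape empty; if $u\notin F$ then either the tape already carries $u$, or $M'$ first flushes the short buffered $u$ using $O(\ell)=O(1)$ auxiliary $\wr$-transitions, and then performs the genuine query, which erases the tape. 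In every case the tape is empty afterwards, so $M'$ resets the $D_F$-component and proceeds.

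Equivalence $L(M')=L(M)$ is then immediate from the fact that set operations act one word at a time: the content ${\mathbb S}$ of $M$ splits as $({\mathbb S}\setminus F)\sqcup({\mathbb S}\cap F)$, and $M'$ faithfully maintains ${\mathbb S}\setminus F$ in its real set and ${\mathbb S}\cap F$ in $T$, so each run of $M$ is mirrored by a run of $M'$ reading the same input with identical test outcomes, and conversely. For the language identity I would refine the begin states of $M'$ so that one is declared whenever a fresh query word begins to be written on the empty tape, namely after every genuine query and after every absorbed $F$-operation, each such state recording the underlying state of $M$ at which the fresh query word starts. With this refinement, the run of $M'$ between a begin state corresponding to $s_i$ and an end state reached by a genuine query and corresponding to $s_j$ contains no intermediate operation, so it is a single query run $s_i\to s_j$ of $M$, and the query is genuine precisely when its query word avoids $F$; this yields $L_{s_i',s_j'}=L_{s_i,s_j}\setminus F$.

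The main obstacle is exactly this last reconciliation. Absorbing the $F$-operations turns them into $\wr$-transitions and thereby \emph{merges} query runs, so a naive reading would give $L_{s_i',s_j'}$ as a union of languages $L_{\rho,s_j}\setminus F$ over the intermediate states $\rho$ reachable through absorbed operations, rather than the single clean term $L_{s_i,s_j}\setminus F$. Getting the stated identity therefore forces the careful begin/end-state bookkeeping described above — tracking where each fresh query word starts — and squaring it with the ANF convention that begin and end states are tied to genuine operations; this is the delicate part that the proof of Lemma~8 in~\cite{KutribSApaper2016} arranges. Once it is in place, ANF is restored for $M'$ by a final application of Lemma~\ref{ANFLemma}, which preserves the language and, since it only relabels states by the operations they perform, does not reintroduce query words from $F$.
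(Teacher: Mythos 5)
First, a point of reference: the paper does not prove this lemma at all --- it imports it from~\cite{KutribSApaper2016} (their Lemma~8) with only a notational restatement --- so there is no in-paper proof to compare against. Your core construction is the right and, as far as I can tell, the intended one: absorb every operation whose query word lies in the finite set $F$ into the finite control, tracking the written word by a prefix automaton for $F$, representing ${\mathbb S}\cap F$ by a subset $T\subseteq F$ of the control, buffering the query word while it is still a prefix of some word of $F$ and flushing it to the work tape once membership in $F$ is excluded. This part is sound and does give $L(M')=L(M)$.

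The genuine gap is in the step where the statement actually lives, namely the identity $L_{s'_i,s'_j}=L_{s_i,s_j}\setminus F$, and your proposed repair does not close it. You correctly diagnose that turning absorbed $F$-queries into $\wr$-transitions concatenates query runs, so that for a begin state $s'_i$ and an end state $s'_j$ one gets a priori $L_{s'_i,s'_j}=\bigcup_{\rho}\bigl(L_{\rho,s_j}\setminus F\bigr)$, the union over states $\rho$ of $M$ reachable from $s_i$ by chains of absorbed operations. But declaring extra \emph{begin} states after each absorbed operation does not remove this union: $L_{s'_i,s'_j}$ is defined over \emph{all} query-free paths from $s'_i$ to $s'_j$, and the paths passing through your new intermediate begin states are still such paths, so the pair with the original begin state still produces the union. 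What is actually needed is that the \emph{end} state determine $\rho$: every state of $M'$ must carry a component recording the $M$-state at which the currently pending query word began, and this component must survive the genuine-query transition into the end state (being reset only on the first step of the next query word); then $L_{s'_i,(s_j,\rho)}$ collapses to the single term $L_{\rho,s_j}\setminus F$ (with unreachable pairs giving $\es$, a degenerate case one must still match to some $L_{s_i,s_j}\setminus F$). You gesture at this bookkeeping but explicitly defer "the delicate part" to the cited proof, which is precisely what a blind proof cannot outsource. Separately, your closing move of restoring ANF by a final application of Lemma~\ref{ANFLemma} is unsafe: that lemma only guarantees \emph{some} equivalent automaton in ANF and may split states, and splitting an end state replaces one language $L_{s'_i,s'_j}$ by several languages whose union it is --- exactly the failure mode you worked to avoid, with no guarantee that each piece has the required form. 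The construction should instead produce $M'$ directly in ANF, which your tuple states permit: record the entering operation (or $\wr$) as a state component so that the ANF marking is well defined, duplicating any tuple reachable both by a genuine query and by a write.
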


This lemma implies that each SA can be converted to an equivalent SA in \emph{infinite action normal form}: an ANF such that each language $L_{s_i,s_j}$ is either infinite or empty. But for our purposes we need more restrictive requirements on $L_{s_i, s_j}$.

Informally, we require that  states of a set automaton in this
restricted form distinguish between empty and non-empty query
generators.  Also, to simplify  replacements of query words in the run
partition we group query words into  equivalence classes. The exact
definition follows.

\begin{definition} We say that 
an NSA $M$ is in atomic action normal form (AANF) if the following conditions hold.
	\begin{itemize}
		\item Requirements \emph{(i-iii)} from Section~\ref{SAcones} hold.
		\item 	$S = \{s_0\} \cup S_{\tp} \cup S_{\tm} \cup S_{\ins} \cup S_{\out} \cup S_\wreps \cup S_\wrceps $, 
$S_{\op} \cap S_{\op^\prime} = \es$ if $\op \neq \op^\prime$:
 each state $q$
 such that the relation $s \xhookrightarrow[u]{x} q$ holds for some $x$ and $u$
is marked by the operation of the query; states that occur while writing $u$ are marked as $\wreps$ if $x = \eps$ and as $\wrceps$ if $x \neq \eps$.
		\item There exists a finite family of regular languages $\CL_M$ such that 
			\begin{itemize}
		    \item for all $A_\alpha, A_\beta \in \CL_M: A_\alpha \cap A_\beta = \es $ if $\alpha \neq \beta$;
				\item  if $s \xhookrightarrow[U]{\Sigma^*} q$, $s^\prime \xhookrightarrow[U^\prime]{\Sigma^*} q^\prime$, then 
either $ U \cap U^\prime = \es$ or $ U = U^\prime \in \CL_M$;

\item $|A_\alpha| = \infty$ for every $A_\alpha \in \CL_M$.
			\end{itemize}                                                                                           
	\end{itemize}
\end{definition}

AANF implies that 
each query run $s \xrightarrow[u]{x} q$
 belongs to an equivalence class corresponding to $A_\alpha \in \CL_M$, $u \in A_\alpha$, and the number of such classes is finite. Moreover, in the case of $x  = \eps$ either the relation $s \xhookrightarrow[A_\alpha]{\eps} q$ holds for some $\alpha$ 
or the relation  $s \xhookrightarrow[\es]{\eps} q$ holds (the state $q$ can not be reached from the state $s$ by $\eps$-transitions). The different languages $A_\alpha$ do not intersect. As it is seen below, these two conditions  make easier a construction of valid transformations  of run partitions.

\begin{lemma}\label{AANFlemma} 
For any NSA $M$
 there exists an equivalent NSA $M^\prime$ in AANF.
\end{lemma}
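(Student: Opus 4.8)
The plan is to begin with an NSA that is already in ANF and satisfies Requirements~(i--iii), which costs nothing by Proposition~\ref{BinarySet} and Lemma~\ref{ANFLemma}, and then to perform three successive refinements, each of which only enlarges the finite control with bookkeeping and guess-and-verify branches and therefore preserves both the recognized language and the ANF structure. The three refinements match the three demands of AANF: (A) split $S_\wr$ into $S_\wreps$ and $S_\wrceps$ so that every query run commits, already at its start, to an empty or a non-empty query generator; (B) make every query-word language infinite; and (C) refine the query-word languages into a finite pairwise-disjoint family $\CL_M$, attaching a single class to each query run. The operation marks on query-end states required by the second AANF condition are inherited directly from ANF, so only the $\wreps/\wrceps$ tagging has to be installed by hand.

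For refinement~(A) I would let the automaton guess nondeterministically, at the instant it leaves an $S_\beg$-state and enters a query run, whether the generator $x_i$ of this run is to be empty. In the ``empty'' branch all input-reading transitions are disabled until the next query and every visited write-state is tagged $\wreps$; in the ``non-empty'' branch a single control bit records whether an input symbol has already been consumed, the transition into the $S_\ends$-state that performs the query is enabled only after this bit is set, and every visited write-state is tagged $\wrceps$. Since the branch is fixed before any symbol of $u_i$ is written, the tag carried by a write-state is consistent with the whole generator $x_i$, exactly as the definition requires; the automaton stays in ANF and every accepting run of the original machine falls into precisely one branch, so equivalence is maintained.

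For refinements~(B) and~(C) I would analyse the languages $L_{s,q}$, $s\in S_\beg$, $q\in S_\ends$, of the machine produced by~(A), together with the atoms of the finite Boolean algebra they generate, that is, the non-empty sets $\bigcap_{(s,q)\in J}L_{s,q}\cap\bigcap_{(s,q)\notin J}\overline{L_{s,q}}$ over index sets $J$ of state pairs; these atoms are regular, pairwise disjoint, finite in number, and each $L_{s,q}$ is a union of atoms. To obtain~(B), let $F$ be the union of the finitely many finite atoms and apply Lemma~\ref{ExcludeFiniteness} with this $F$: the resulting equivalent ANF machine has query-word languages $L_{s,q}\setminus F$, which are unions of infinite atoms only. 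To obtain~(C), I would then have the automaton guess, at the start of each query run, the atom $A$ that its query word $u_i$ must belong to, advance the product of the minimal DFAs for all the $L_{s,q}$ inside the finite control while $u_i$ is being written, and permit the query only when the tracked DFA-state certifies $u_i\in A$, labelling the query-end state by $A$. After this step the language $U$ attached to any relation $s\xhookrightarrow[U]{\Sigma^*}q$ equals a single infinite atom or is empty, so any two such languages are automatically equal or disjoint; taking $\CL_M$ to be the family of infinite atoms that actually occur verifies all three items of the third AANF condition.

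The main obstacle is the mutual compatibility of these steps rather than any single one of them. I must be sure that the construction behind Lemma~\ref{ExcludeFiniteness} acts locally on individual query runs and hence does not disturb the $\wreps/\wrceps$ tags installed in~(A), and that the atoms used for the DFA-tracking in~(C) are those of the machine that already distinguishes empty from non-empty generators; I would therefore fix the order (A), then recompute the $L_{s,q}$, then (B), then (C). Everything else---checking that each refinement preserves the recognized language and yields the disjoint-union partition $S=\{s_0\}\cup S_{\tp}\cup S_{\tm}\cup S_{\ins}\cup S_{\out}\cup S_\wreps\cup S_\wrceps$ with correct operation marks---is routine finite-state bookkeeping.
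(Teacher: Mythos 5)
Your proposal is correct and follows essentially the same route as the paper's own proof: first split query runs (and their write states) according to empty versus non-empty generators, then apply Lemma~\ref{ExcludeFiniteness} to the union of the finitely many finite pieces, and finally split states by pairwise-disjoint infinite query-word classes tracked in the finite control. Your ``atoms of the Boolean algebra generated by the $L_{s,q}$'' are exactly the paper's family obtained by closing $\CL$ under $\cap$ and $\setminus$ and discarding languages that are unions of others, so the difference is purely one of presentation.
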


\begin{proof}  
   	Due to Lemma~\ref{ANFLemma} we assume that NSA $M$ is in ANF. At first we construct an auxiliary SA~$\tilde M$ that is equivalent to $M$.  We replace each state from $s_i \in S_\beg$ by states $(s_i,\eps)$, $(s_i,  \ceps)$, so if $M$ has a transition to $s_i$, then $\tilde M$ has corresponding transitions to both states $(s_i,\eps)$, $(s_i,  \ceps)$. We require $\tilde M$ to process only the empty word during each query that starts in $(s_i,\eps)$ and process at least one letter when it starts in $(s_i,  \ceps)$.	
	It is easy to implement these constraints as follows. We additionally mark $\wr$ states by $\eps$ and $\ceps$ and add an extra bit to the NSA state set. To satisfy the first constraint we  throw out all $\ceps$-moves for the states marked $\eps$. To satisfy the second constraint we use the extra bit  to verify that at least one letter have been processed before performing the query in a state marked $\eps$. Such transformation leaves $\tilde M$ in ANF.

		Let $\CL = \{ L_{s_i, s_j} \mid s_i \in S_\beg,  s_j \in S_\ends \text{ states of } \tilde M \}$. It is a family of constant size since the number of $\tilde M$'s states is constant (does not depend on an input word $w$). 

		 We take the closure of $\CL$ under $\cap$ and $\setminus$ operations  and obtain a finite family $\CL_1$. Next, we remove from $\CL_1$ all the languages that are the unions of some other languages and obtain a finite family $\CL_2$. By the construction, each language from $\CL$ is a finite union of some languages of $\CL_2$, but some languages from $\CL_2$ can be finite. Let $F$ be the union of all finite languages from $\CL_2$ and finally let $\CL_3$ consists of all infinite languages from $\CL_2$.

	Now we show how to transform $\tilde M$ to $M^\prime$ in such a way that $$\CL_{3} = \{ L_{s^\prime_i, s^\prime_j} \mid s^\prime_i \in S^\prime_\beg,  s^\prime_j \in S^\prime_\ends \text{ states of } M^\prime \}.$$ Note that $\CL$ is a family of regular languages and so is $\CL_3$. Denote $\CL_{3} = \{A_1, \ldots, A_\alpha, \ldots, A_m\}$.  At first, we transform $\tilde M$ to $M^{\prime\prime}$ by applying Lemma~\ref{ExcludeFiniteness} for $\tilde M$ and $F$. We get that $L_{s_i^{\prime\prime}, s_j^{\prime\prime}} = \bigcup\limits_{\alpha\in I} A_\alpha$, where $I \subseteq \{1,\ldots,m\} $.

Now we construct $M^\prime$ by $M^{\prime\prime}$. We replace a state $s_i^{\prime\prime}$ by states $(s_i^{\prime\prime}, \alpha)$ for each $\alpha \in I$ and demand that $M^\prime$, starting from $(s_i^{\prime\prime}, \alpha)$, writes on the work tape only the words from $A_\alpha$. We implement these constraints in the same way as at the beginning of the proof.

So, $\CL_{3} = \{ L_{s^\prime_i, s^\prime_j} \mid s^\prime_i \in S^\prime_\beg,  s^\prime_j \in S^\prime_\ends \text{ states of } M^\prime \}$ and all the constraints of AANF on $A_\alpha$s are satisfied by the construction and all the states are marked according to the definition of AANF.  
\qed\end{proof}

\subsection{The main part of the proof}

From now on we fix an NSA $M$ in AANF and the family of regular
languages $\CL_M = \{A_1, A_2, \ldots, A_m\}$ from the definition of
AANF. Recall that $|\CL_M| =m=O(1)$ (the family construction uses NSA, not
inputs of NSA).   We say that a
query block $p_i$ is an \emph{$\eps$-block} of a protocol if the
corresponding query run has the form $s_i \xrightarrow[u_i]{\eps}
s_{i+1} $ (the query generator of the block is empty).  We say that an
$\eps$-block $p_i$ has type $\alpha$ if the relation $s_{i}
\xhookrightarrow[A_\alpha]{\eps} s_{i+1}$ holds for $A_\alpha \in
\CL_M$.

Now we are ready to describe the main steps of the proof. Each step is a transformation 
of an accepting run partition to some other accepting run partition of $M$ on the same input~$w$ (a~valid transformation).

With compare to the protocol transformations used  in the analysis of the emptiness problem in Section~\ref{EmptinessSection}, valid transformation in this proof are more sophisticated. 

To define them we need to introduce  \emph{chains} of different kinds.

We say that query blocks $v_1, v_2, \ldots, v_k$ form a
\emph{chain}~$C$ if $v_1$ supports $v_i$, $2\leq i \leq k$.  Each
standalone query block forms a \emph{standalone chain}.  Let $v_i =
\#u\#\op_i$ be the blocks of a chain. We denote the chain by $C(u)$
and we call the word $u$ the \emph{pivot} of $C(u)$. Also we denote the chain
$C^{+}$ if $\op_1 = \ins$ and $C^{-}$ if $\op_1 \in \{\out, \tm\}$.

We call a chain $C(u)$ of $\eps$-blocks an \emph{$\eps$-chain}. We
denote $\eps$-chain $C_\alpha$ if all blocks in the chain have type
$\alpha$. Note that different blocks of an $\eps$-chain can't have different
types due to AANF definition.

This observation leads to  transformations that are defined as follows.
\begin{itemize}
\item Choose in 
 each $A_\alpha \in \CL_M$ a pair of words
 $u^{+}_\alpha \in A_\alpha$ and $u^{-}_\alpha \in A_\alpha$. 
\item Replace
  the query words $u$ in all blocks of $\eps$-chains $C^{+}_\alpha(u)$ and $C^{-}_\alpha(u)$
by $u^{+}_\alpha$ and $u^{-}_\alpha$ respectively.
\item Do not change the generators $x_i$ and the states $s_i$ in the partition.
\end{itemize}

Transformations of this kind will be called \emph{$\eps$-chain unifications}.
Note that any $\eps$-chain unification keeps the relations $s_i \xrightarrow[u^\pm_\al]{\eps} s_j$ in the transformed run partition.

\begin{lemma}\label{TwoWordsLemma}
 There exist short words $u^{+}_\alpha \in A_\alpha$ and $u^{-}_\alpha \in A_\alpha$ for any $A_\alpha \in \CL_M$
such that the $\eps$-chain unification using these pairs is a valid transformation 
of the run.
Moreover, during the transformed run
	\begin{itemize}
		\item  $M$ never adds 
the words $u^{-}_\alpha$
 to ${\mathbb S}$;
		\item $M$ never removes
the words $u^{+}_\alpha$
 from ${\mathbb S}$;
		\item all blocks of $\eps$-chains $C^{-}$ become standalone;
		\item the length 
of any word $u^{\pm}_\alpha$
 is $O(|w|)$.
	\end{itemize}
\end{lemma}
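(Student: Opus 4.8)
The plan is to choose the words $u^{+}_\alpha, u^{-}_\alpha$ so that, after the unification, every operation touching one of them comes from an $\eps$-chain of the matching type and sign, while every other query word in the run is left intact. To set this up I would first split the query blocks of the accepting run partition~\eqref{SArun} into \emph{touched} blocks, those lying in an $\eps$-chain $C^{+}_\alpha$ or $C^{-}_\alpha$ (which the unification rewrites), and \emph{untouched} blocks, lying in chains that contain at least one block with a non-empty query generator. The decisive structural observation is that the pivot of every untouched chain is the query word of some block with a non-empty generator: a chain has a single pivot word shared by all its blocks, and if the chain is not an $\eps$-chain then one of its blocks carries a non-empty generator and has that pivot as its query word. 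Since $w = x_0x_1\cdots x_{n-1}$ and each non-empty generator contributes at least one letter, there are at most $|w|$ such blocks, hence at most $O(|w|)$ untouched pivots; let $B$ denote this finite set of query words.

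Next I would make the choice. For each $A_\alpha \in \CL_M$ I pick two distinct words $u^{+}_\alpha, u^{-}_\alpha \in A_\alpha \setminus B$. This is possible and yields short words: each $A_\alpha$ is infinite and regular, so pumping a fixed accepting word produces $\Omega(|w|)$ distinct members of $A_\alpha$ of length $O(|w|)$, while $B$ forbids only $O(|w|)$ of them, so for a large enough constant at least two admissible words of length $O(|w|)$ survive. Because the languages of $\CL_M$ are pairwise disjoint by the AANF definition, the chosen words are automatically distinct across different~$\alpha$; within a fixed $\alpha$ I simply take $u^{+}_\alpha \neq u^{-}_\alpha$. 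Thus all $2m$ chosen words are pairwise distinct and none of them lies in $B$.

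With these words fixed I would perform the $\eps$-chain unification and verify validity. Since $u^{\pm}_\alpha \in A_\alpha$ and each $\eps$-block of type $\alpha$ realizes $s_i \xhookrightarrow[A_\alpha]{\eps} s_{i+1}$, the rewritten object is again a run of $M$ with the same states and empty generators, so it remains only to check that the transformed protocol is correct, i.e. that the test results match the fixed operations (Lemma~\ref{ProtocolCorrectnessLemma}). For touched blocks this is immediate: a $C^{+}_\alpha$ chain now inserts $u^{+}_\alpha$ and tests it afterwards, and no $\out$ operation ever acts on $u^{+}_\alpha$ (the touched $\out$-operations use words $u^{-}_\beta \neq u^{+}_\alpha$, and untouched $\out$-blocks act on pivots in $B$, which excludes $u^{+}_\alpha$), so every $\tp$-block is supported; symmetrically $u^{-}_\alpha$ is never inserted, so every block of a $C^{-}_\alpha$ chain is standalone. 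For untouched blocks the point is that every untouched pivot lies in $B$ and is therefore disjoint from all $u^{\pm}_\alpha$; hence after the rewrite the $\ins$/$\out$ operations on any untouched pivot are exactly the untouched ones, unchanged from the original run. As the original protocol is correct, between an untouched supporter and its testee there is no $\ins$/$\out$ on the pivot at all, so deleting the touched operations on that pivot cannot affect the interval; likewise a standalone untouched $\tm$-block had no preceding $\ins$ on its word and still has none. This gives correctness of the transformed protocol, hence validity, and the four asserted properties follow directly from the choice of the words.

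The step I expect to be the main obstacle is this last one, the non-interference with untouched chains. The delicate case is a pivot word shared by an all-$\eps$ chain and a chain with a non-empty generator: the unification removes the $\eps$-chain's $\ins$/$\out$ operations on that word, and one must argue that this removal neither unsupports a surviving $\tp$-block nor revives a word at a surviving $\tm$-block. The argument above resolves this by combining the correctness of the original protocol (no $\ins$/$\out$ on a pivot strictly inside any support interval) with the disjointness of the chosen words from~$B$, but making it fully rigorous requires careful bookkeeping of which blocks are touched together with the precise claim that every untouched pivot is a non-$\eps$ query word.
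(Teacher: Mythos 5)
Your proposal is correct and follows essentially the same route as the paper's proof: bound the pivots of non-$\eps$-chains by $|w|$, pick $u^{\pm}_\alpha$ from the infinite regular language $A_\alpha$ avoiding those pivots (with length $O(|w|)$ via pumping, which is the paper's arithmetic-progression-of-lengths argument), and then check protocol correctness chain by chain via Lemma~\ref{ProtocolCorrectnessLemma}. Your explicit treatment of the non-interference with untouched chains (the set $B$ and the shared-pivot case) is just a more detailed spelling-out of the paper's remark that the transformation does not affect non-$\eps$-chains.
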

\begin{proof}
In each run~\eqref{SArun}
there is no more than $|w|$ nonempty $x_i$s and therefore no more than $|w|$ pivots $u_i$ of non-$\eps$-chains.

 Since each $A_\alpha\in\CL_M$ is an infinite regular language, it contains an infinite sequence of words, the lengths of which form an arithmetic progression. The common difference of the progression is $O(1)$ since $A_\alpha$ is constant (does not depend on the input word $w$). We choose words $u^{+}_\alpha \neq u^{-}_\alpha$ from this sequence such that both of them differs from any pivot $u_i$ of a non-$\eps$-chain.

Perform the  $\eps$-chain unification using the pairs $u^{+}_\alpha$, $ u^{-}_\alpha$. 

For $i$th block of the type $\al$
the relation  $s_i \xrightarrow[A_\alpha]{\eps} s_j$ holds, so the relation $s_i \xrightarrow[u^{\pm}_\alpha]{\eps} s_j$ also holds since $u^{\pm}_\alpha \in A_\alpha$. Thus the replacement $u_i$ by $u^{\pm}_\alpha$ in each $\eps$-block preserves the relations  $s_i \xhookrightarrow[A_\alpha]{\eps} s_{i+1} $  for $\eps$-blocks.  The rest of the blocks do not changed. So to prove that the transformation is valid we shall prove that the protocol is transformed to a correct one. We prove it 
using Lemma~\ref{ProtocolCorrectnessLemma}:
show that each query block is either supported, or a standalone one.
		
	 Note that, by the choice of $u^{\pm}_\alpha$, according to the transformed protocol the SA $M$ never adds a word $u^{-}_\alpha$ to ${\mathbb S}$ and never removes a word $u^{+}_\alpha$ from ${\mathbb S}$. Therefore all query blocks of $\eps$-chains $C_\alpha^{-}$ become standalone: $\tm$ queries by the words that never occur in ${\mathbb S}$ requires no support and $\out$-queries do not support any $\tm$-query anymore.
All query blocks of each $\eps$-chain $C^{+}_\alpha$ are remain supported since $u_\alpha^{+}$ is never removed from the set after the first addition. Each test result in an $\eps$-chain $C^{+}_\alpha$ is positive, since~$u_\alpha^{+} \in {\mathbb S}$.  
	   
	Since the transformation do not affect non-$\eps$-chains,
blocks in these chains remain either supported or standalone, as it
was before the transformation.  
\qed\end{proof}

From now on we assume w.l.o.g. that all run partitions into consideration
have the properties described in Lemma~\ref{TwoWordsLemma}. We denote
by  $U$  the set of words $u_\alpha^{\pm}$.  All words in this set are
short as it stated in Lemma~\ref{TwoWordsLemma}.

To prove the main result we need two more transformations. 

The first one makes the lengths of $\eps$-segments  in a partition
short. (The length of a segment is the number of query runs in it.) It does not affect query words in  non-$\eps$ query blocks.

More exactly, consider an $\eps$-segment $s_i \xrightarrow{\eps} s_j$
of a run partition. Recall that $$s_i \xrightarrow{\eps} s_j = s_i \xrightarrow[u_i]{\eps} s_{i+1} \xrightarrow[u_{i+1}]{\eps}\ldots \xrightarrow[u_{j-1}]{\eps} s_j.$$  We call the segment \emph{pure} if no query run $s_k \xrightarrow[u_i]{\eps} s_{k+1}$, $i\leq k<j$, in the segment supports a  non-$\eps$-chain.

\emph{$\eps$-shrinkage}  is  a transformation of a run partition such
that (i) it replaces pure $\eps$-segments by pure $\eps$-segments 
of constant length; (ii) it does not   affect the rest of the partition.

\begin{lemma}\label{EpsSegmentLemma}
If a run partition is obtained by an $\eps$-chain unification of an
accepting run partition, then  there exists a valid  $\eps$-shrinkage
of it. Moreover, in the transformed partition the lengths of all $\eps$-segments are short. 
\end{lemma}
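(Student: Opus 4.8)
The plan is to shrink every maximal \emph{pure} $\eps$-segment independently by a repeated-configuration (loop-cutting) argument, and then to bound the total length by counting the few blocks that must be kept. First I would fix the notion of configuration. By Lemma~\ref{TwoWordsLemma}, after the $\eps$-chain unification every query word inside an $\eps$-segment is one of the $2m = O(1)$ words $u^{+}_\alpha, u^{-}_\alpha$; moreover $u^{-}_\alpha$ is never inserted and $u^{+}_\alpha$ is never removed, and each $u^{\pm}_\alpha$ differs from every pivot of a non-$\eps$-chain. Hence the only part of ${\mathbb S}$ that a pure $\eps$-segment can read or change is recorded by the subset $T\subseteq\{1,\dots,m\}$ of indices $\alpha$ with $u^{+}_\alpha\in{\mathbb S}$. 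I would therefore define the configuration at a point of a pure $\eps$-segment as the pair $(s,T)$ of the current state $s$ and this subset $T$. The number of configurations is $|S|\cdot 2^m = O(1)$, and since $u^{+}_\alpha$ is never removed, $T$ grows monotonically along the segment.

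Second comes the shrinkage itself. Along a maximal pure $\eps$-segment, if the number of query runs exceeds $|S|\cdot 2^m$, some configuration $(s,T)$ repeats; I cut out the query runs between the two occurrences and splice the remaining parts at the common state $s$, iterating until the length is $O(1)$. I would then verify validity on three counts. (a)~Each retained query run is unchanged, so the relations $s_i \xrightarrow[u^{\pm}_\alpha]{\eps} s_{i+1}$ still hold, and the splice is consistent because the two occurrences share the state $s$. (b)~Purity is preserved, as we delete only $\eps$-blocks supporting no non-$\eps$-chain. (c)~Correctness is preserved: since the cut loop leaves $T$ unchanged and $T$ is monotone, the subset $T$ at every retained step equals the one in the original run; therefore every $\tp$-block of type $\alpha$ that was supported (i.e.\ had $\alpha\in T$) is still supported, every $\tm$-block on $u^{-}_\alpha$ is still standalone, and by Lemma~\ref{ProtocolCorrectnessLemma} the protocol stays correct.

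Third, I would assemble the global bound. A maximal $\eps$-segment need not be pure: it may contain \emph{support blocks}, that is $\ins$/$\out$ $\eps$-blocks that are pivots of non-$\eps$-chains. These must be kept, and they split the maximal segment into maximal pure sub-segments, each reduced to length $O(1)$ by the previous step. Since every non-$\eps$-chain contains a non-$\eps$ block and each block lies in exactly one chain, the number of support blocks is at most the number of non-$\eps$-blocks, hence at most $|w|$. Consequently each maximal $\eps$-segment has length bounded by a constant times the number of support blocks in it plus one, so the total length over all $\eps$-segments is $O(|w|)$; in particular every $\eps$-segment is short.

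The main obstacle is item~(c): guaranteeing that deleting a long loop never breaks the support of a later $\tp$-query. This is exactly what forces the configuration to remember $T$ rather than merely the NSA state, and the three properties granted by Lemma~\ref{TwoWordsLemma} ($u^{-}_\alpha$ never inserted, $u^{+}_\alpha$ never removed, and the $u^{\pm}_\alpha$ avoiding all non-$\eps$ pivots) are precisely what make $T$ a \emph{constant-size} and \emph{monotone} invariant, keeping the configuration space $O(1)$ while preserving correctness. The remaining delicate point is the interface between pure sub-segments and the preserved support blocks: one must check that shrinking a pure sub-segment cannot disturb the set words read by those support blocks, which involve only non-$\eps$ pivots and are thus untouched by operations on the $u^{\pm}_\alpha$.
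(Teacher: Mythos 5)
Your proof is correct in substance but takes a genuinely different route for the core step. The paper does not cut loops in the existing pure segment: it marks each state with a vector of \emph{two} bits per language $A_\alpha$ (whether $u^{+}_\alpha\in{\mathbb S}$ \emph{and} whether ${\mathbb S}$ meets $A_\alpha\setminus U$), builds a constant-size auxiliary NFA over the alphabet $\{1,\dots,m\}$ whose accepted words encode admissible sequences of block types, and then \emph{reconstructs} a constant-length pure segment from a shortest accepted word, choosing fresh query words ($u^{\pm}_\alpha$, or a word of $({\mathbb S}_0\setminus U)\cap A_\alpha$ for certain $\tp$-blocks). You instead keep the original blocks and splice out the query runs between two occurrences of a configuration $(s,T)$ with $T\subseteq\{1,\dots,m\}$ monotone; this is more elementary and avoids both the NFA and the re-selection of query words. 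The one point your invariant does not capture --- and the precise reason the paper carries a second bit per component --- is that a pure $\eps$-segment may contain $\tp$- (and $\tm$-) blocks belonging to non-$\eps$-chains, whose query words are non-$\eps$ pivots outside $U$; your statement that the segment can only ``read'' ${\mathbb S}$ through $T$ is therefore not literally true, and ``supported iff $\alpha\in T$'' does not cover these blocks. Your construction survives anyway, but only because of an argument you should make explicit: inside a pure segment all $\ins$/$\out$ operations act on words of $U$, so ${\mathbb S}\setminus U$ is constant along the segment, and since you retain such test blocks verbatim with an unchanged prefix of the run, their outcomes are preserved under the cut. With that observation spelled out, your (a)--(c) verification and the global $O(|w|)$ count via support blocks match the paper's bound, and both approaches yield constant-length pure sub-segments; the paper's NFA formulation buys a cleaner ``shortest accepted word'' packaging at the cost of the extra bit and the re-chosen query words, while yours buys a shorter, pigeonhole-style argument at the cost of this extra case analysis.
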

\begin{proof}
  It is easy to show that any $\eps$-shrinkage makes the lengths of all $\eps$-segments short. 

   Fix the partition of an  $\eps$-segment $s_i\xrightarrow{\eps} s_j$ 
in the transformed run partition  in the form
	$$s_i \xrightarrow{\eps} s_{l_1} \xrightarrow[u_{l_1}]{\eps} s_{r_1} \xrightarrow{\eps} s_{l_2} \xrightarrow[u_{l_2}]{\eps} s_{r_2} \xrightarrow{\eps} \dots  s_{l_k} \xrightarrow[u_{l_k}]{\eps} s_{r_k} \xrightarrow{\eps}  s_j,$$ where  $s_{l_t} \xrightarrow[u_{l_t}]{\eps} s_{r_{t}} $ are all query runs  in the segment  that support   non-$\eps$-chains. So, all segments $s_{r_t} \xrightarrow{\eps} s_{l_{t+1}}$ are pure. There are at most $|w|$ query runs in the whole run partition having non-$\eps$ query generators. Thus there are at most $|w|$  non-$\eps$-chains and  $k\leq|w|$. 

Since all pure $\eps$-segments in the transformed run partition are constant, we conclude from $k\leq |w|$ that  the length of the segment $s_i\xrightarrow{\eps} s_j$ is~$O(|w|)$.

We define a valid $\eps$-shrinkage separately for each pure $\eps$-segment
\begin{equation}\label{pure-eps-seg}
  s_0 \xrightarrow{\eps} s_\ell = s_0 \xrightarrow[u_0]{\eps} s_{1} \xrightarrow[u_{1}]{\eps}\ldots \xrightarrow[u_{\ell-1}]{\eps} s_\ell\,.
\end{equation}
(For the sake of simplicity we change the indexes of states in the segment.)

Recall that  $\CL_M = \{A_1, A_2, \ldots, A_m\}$, $m=O(1)$, is the family of
  regular languages from the definition of AANF. 
  We mark each state $s_i$ of the segment~\eqref{pure-eps-seg} by a
  vector $\vec a = (a_1, \ldots, a_m)$ according to the content of ${\mathbb S}$ at this point of the run. A component $a_\alpha$ consists of two bits: the first one marks whether ${\mathbb S}$ contains a word from $A_\alpha\setminus U$ and the second marks whether $u^{+}_\alpha \in {\mathbb S}$:
	 \begin{itemize}
	   	 	\item  $a_\alpha = 00$ if $u^{+}_\alpha \not\in {\mathbb S}$ and $ {\mathbb S}\setminus U \cap A_\alpha = \es $;
			\item  $a_\alpha = 01$ if $u^{+}_\alpha  \in {\mathbb S}$ and $ {\mathbb S}\setminus U \cap A_\alpha = \es $;
			\item  $a_\alpha = 10$ if $u^{+}_\alpha  \not\in {\mathbb S}$ and $ {\mathbb S}\setminus U \cap A_\alpha \neq \es $;
			\item  $a_\alpha = 11$ if $u^{+}_\alpha  \in {\mathbb S}$ and $ {\mathbb S}\setminus U \cap A_\alpha \neq \es $.
	 \end{itemize}
Let $\vec a^k$ be the mark of $s_k$ in the
segment~\eqref{pure-eps-seg}. 
	
Note that query words in $\ins$, $\out$ and $\tm$ query runs of the
segment are the words from $U$ due to conditions of
Lemma~\ref{TwoWordsLemma}. But query words in $\tp$ query runs may be
out of~$U$ (these ones are in non-$\eps$-chains). Due to this reason
we need to keep two bits  in each component of vectors
$\vec a$.

It appears that vectors $\vec a$ bear enough information to produce
the transformed run. To build it we use an auxiliary
NFA $\A$ with the input alphabet $\{1,\dots,m\}$. The states of $\A$ are  
 $Q\times \{00,01,10,11\}^m = V $, where $Q = S_\beg \cup
S_\ends$. The initial state of $\A$ is $(s_0,\vec a^0)$, the only
accepting state is $(s_\ell,\vec a^\ell)$.

The transition relation $\delta_\A$ is defined by the rule:
$((q_l,\vec a),\al,(q_r,
\vec a^\prime)) \in\delta_\A$ iff  the relation $q_l \xhookrightarrow[A_\alpha]{\eps} q_r$ holds for
  the states $q_l$, $q_r$ and  
			\begin{itemize}

				\item $q_r \in S_\out \cup S_\tm $ and $\vec a = \vec a^\prime$;
				\item $q_r \in S_\tp $ and  $\vec a = \vec a^\prime$, $a_\alpha \neq 00$;
				\item $q_r \in S_\ins $ and $\vec a_\beta = \vec  a^\prime_\beta$, $\beta \neq \alpha$,   and $\vec a^\prime_\alpha = 01$ if $ \vec a_\alpha 
				\in \{00,01\} $; $\vec a^\prime_\alpha = 11$ if $ \vec a_\alpha 
				\in \{10,11\} $.
			\end{itemize}

It is easy to see from the definition that the
segment~\eqref{pure-eps-seg} generates a word
$\al_0\al_1\dots\al_{\ell-1}$, where $u_i\in A_{\al_i}$, 
accepted by $\A$. We are going to prove a relation in the opposite direction.

\textsc{Claim:} If $\al_0\al_1\dots\al_{t-1}$ is accepted by the NFA
$\A$, then there exists a segment
\begin{equation}\label{pure-eps-seg-transformed}
  s_0=s'_0 \xrightarrow[u'_0]{\eps} s'_{1} \xrightarrow[u'_{1}]{\eps}\ldots \xrightarrow[u'_{t-1}]{\eps} s'_{t}=s_\ell
\end{equation}
such that  $u'_i\in A_{\al_i}$ and 
a substitution of the
segment~\eqref{pure-eps-seg} by the
segment~\eqref{pure-eps-seg-transformed} gives an accepting run
partition.

The lemma easily follows from the Claim. Note that the NFA $\A$ is
constant (its construction does not depend on NFAs input) and $L(\A)\ne\es$. Therefore there exists a word
$\al_0\al_1\dots\al_{t-1}$ of the length $O(1)$ that is accepted
by~$\A$. Applying the Claim we obtain the valid transformation of the pure
segment~\eqref{pure-eps-seg} into a segment~\eqref{pure-eps-seg-transformed} of constant length.

\textsc{Proof of the Claim.} 
Note that there are no $\eps$-transitions in the
transition relation of $\A$.
Let 
\[
(s'_0,\vec a^0),\ (s'_1,\vec a^1),\ \dots,\ (s'_{t-1},\vec a^{t-1}),\ (s_\ell,\vec a^\ell)
\]
be the state sequence along an accepting run of NFA $\A$ on the input
$\al_0\al_1\dots\al_{\ell-1}$. These states is used to form the
segment~\eqref{pure-eps-seg-transformed}. 

Let ${\mathbb S}_{0}$ denotes the set content determined at the beginning of
the segment~\eqref{pure-eps-seg} of an accepting run partition.
Define query words $u'_i$ by the rules
\begin{itemize}
\item if $s'_{i+1}\in  S_\out \cup S_\tm $, then $u'_i= u^{-}_{\al_i}$;
\item if $s'_{i+1}\in  S_\ins $, then $u'_i= u^{+}_{\al_i}$;
\item if $s'_{i+1} \in S_\tp $ and $a^{i}_{\al_i}\ne 10$, then $u'_i=u^{+}_{\al_i}$;
\item if $s'_{i+1} \in S_\tp $ and $a^{i}_{\al_i}= 10$, then $u'_i$ is
  a word from $({\mathbb S}_{0}\setminus U)\cap A_{\al_i}$.
\end{itemize}

The last line should be justified. If $a^{i}_{\al_i}= 10$,
then $a^{0}_{\al_i}= 10$, 
since the transitions do not change the first bit in each
component of vectors $\vec a^{i}$
and the second bits in each components of vectors $\vec a^{i}$
form a nondecreasing sequence. But $\vec a^0$ corresponds to the
set content ${\mathbb S}_0$. Therefore $({\mathbb S}_{0}\setminus U)\cap A_{\al_i}\ne
\es$.

From the above rules it is easy to conclude that the transformed partition
is an accepting one. Indeed, $\out$-operations
on the segment do not change the set content. Thus query words in
$\tm$-queries  does not belong to the set and the test result is
negative as required. 

Note  that  the second bit in a vector component is increased only on
transitions to a state in $S_\ins$. In this case a word $u^+_{\al_i}$
is added to the set according to the rules. 

A~transition to the state in $S_\tp $ is possible iff the
 component $\vec a^i_{\al_i}$ is not $00$.

If  $\vec a^i_{\al_i}=10$, then a tested word belongs to
$({\mathbb S}_{0}\setminus U)\cap A_{\al_i}$. The test result is positive.

If  $\vec a^i_{\al_i}\in\{ 01,11\}$, then from the above observations
we conclude that either $a^{0}_{\al_i}=1$ and $u^+_{\al_i}$ is in the
set from the very beginning of the segment, or $a^{0}_{\al_i}=0$,
$a^{j}_{\al_i}=1$, $j<i$, and $u^+_{\al_i}$ is
added to the set on the $j$th operation with the set. In both cases
the test result is positive.

It completes the proof of the Claim.
\qed\end{proof}

Look at a run partition produced from an accepting run partition by
an $\eps$-chain unification followed by an $\eps$-shrinkage. The query
words in $\eps$-chains are short. The lengths of all $\eps$-segments are also short
but these segments may contain long query words in the blocks from
non-$\eps$-chains. The number of non-$\eps$-chains is $O(|w|)$.

Thus, to get a desired short accepting run we need a final touch: 
make query words in non-$\eps$-chains short.

\begin{lemma}\label{ChainReplacementLemma}
  Let  a partition
\begin{equation}\label{2ndSArun}
s_0 \xrightarrow[u_0]{x_0} s_1 \xrightarrow[u_1]{x_1} s_2 \xrightarrow[u_2]{x_2} \cdots \xrightarrow[u_{n-1}]{x_{n-1}} s_n, \quad s_n\in F	
\end{equation}
is produced from an accepting run partition by
an $\eps$-chain unification followed by an $\eps$-shrinkage. 

  Assume that $C(u)$ is a non-$\eps$-chain of query blocks $v_1, \ldots v_k$, $v_i = \#u\#\op_i$. Then there is a short word $u^\prime$ such that the replacement the word $u$ by the word $u'$ in the chain
is a~valid transformation of run partitions. 
\end{lemma}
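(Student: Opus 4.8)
The plan is to reduce the statement to finding a single word $u'$ that is short, lies in the query class common to the whole chain, is writable along every block of the chain, and does not interfere with the other query words of the protocol. First I would pin down the class: all blocks of $C(u)$ carry the same query word $u$, so $u$ belongs to the producible language of each of them; since by the AANF definition the languages $A_\alpha\in\CL_M$ are pairwise disjoint, the single word $u$ forces all blocks of the chain to share one class $A_\alpha$, and each query run of the chain satisfies $s_i\xhookrightarrow[A_\alpha]{\Sigma^*}s_{i+1}$. Next I would dispose of the $\eps$-blocks of the chain: for an $\eps$-block AANF guarantees $s_i\xhookrightarrow[A_\alpha]{\eps}s_{i+1}$, so any $u'\in A_\alpha$ is writable on the empty generator and such blocks impose no constraint beyond $u'\in A_\alpha$. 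Thus it remains to control the non-$\eps$-blocks $v_{i_1},\dots,v_{i_r}$ with non-empty generators $x_{i_1},\dots,x_{i_r}$ (note $r\geq1$, as otherwise $C(u)$ would be an $\eps$-chain).

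For each such block put $W_j=\{\,w\in\Gamma^*\mid s_{i_j}\xhookrightarrow[w]{x_{i_j}}s_{i_j+1}\,\}$, the set of work-tape words writable on reading the fixed string $x_{i_j}$ between the two query states. Each $W_j$ is a regular language, recognized by an NFA obtained from $M$ and $x_{i_j}$ by the standard product with the positions of $x_{i_j}$, and $u\in\bigcap_j W_j$. The goal is a short word $u'\in A_\alpha\cap\bigcap_j W_j$. I would obtain it by a synchronized pumping argument: fix writable runs of all $r$ blocks producing the common output $u$, and read $u$ letter by letter while recording, for every block simultaneously, the configuration of its run (its state together with the already-consumed prefix of $x_{i_j}$). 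Whenever two output positions give the same combined configuration, the output segment between them can be excised in all blocks at once, yielding a shorter common output that is still writable in each block and hence still lies in $\bigcap_j W_j\subseteq A_\alpha$. Iterating removes all repetitions and leaves a word $u'$ whose length is bounded by the number of reachable combined configurations.

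The main obstacle is exactly this length bound: the naive count of combined configurations is a product over the $r\leq|w|$ blocks and is only exponential. To keep $u'$ short I would exploit the AANF structure — the constant number $m=|\CL_M|$ of classes, their pairwise disjointness and infiniteness, and the fact that on a non-empty generator the writing is, up to the normal form of Lemma~\ref{ExcludeFiniteness}, essentially driven by the input — so as to replace the full product by a polynomially bounded amount of bookkeeping and force the excision above to terminate with $|u'|=\poly(|w|)$; this is the technical heart of the lemma. Finally I would secure correctness of the transformed protocol via Lemma~\ref{ProtocolCorrectnessLemma}: since $A_\alpha$ is infinite and the pumping supplies a whole family of short candidates, I can take $u'$ distinct from $U$ and from the finitely many query words occurring elsewhere in the protocol; then no other block inserts or removes $u'$, so in a $C^{+}$-chain the insertion keeps $u'$ in ${\mathbb S}$ and all its $\tp$-blocks remain supported, while in a $C^{-}$-chain the word $u'$ never enters ${\mathbb S}$ and all its $\out$/$\tm$-blocks become standalone. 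Hence replacing $u$ by $u'$ throughout $C(u)$ (leaving generators and states untouched) yields an accepting run partition, i.e. a valid transformation, as asserted.
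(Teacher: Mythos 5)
Your setup is sound and matches the paper's: you reduce correctly to finding a short word $u'\in A_\alpha\cap\bigcap_j W_j$ that avoids a forbidden set of size $O(|w|)$, you dispose of the $\eps$-blocks of the chain via the AANF class structure, and you correctly identify that the whole difficulty is bounding the length of a word in the intersection of up to $|w|$ regular languages. But at exactly that point the proposal stops being a proof: you write that you would ``exploit the AANF structure\dots to replace the full product by a polynomially bounded amount of bookkeeping'' and call this ``the technical heart of the lemma''. That heart is missing. The synchronized pumping you describe runs over combined configurations that are tuples with one coordinate per non-$\eps$-block of the chain, so the configuration space is a product of up to $|w|$ factors and is exponential; nothing in the AANF definition (constantly many classes, their disjointness and infiniteness) by itself collapses this product, and you do not say how it would.

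The paper's resolution is a concrete structural decomposition that your sketch does not contain. Fix, for each non-$\eps$-block of the chain, the intermediate states visited while reading its generator $x_i=w_{l_i}\cdots w_{r_i}$; then $R_i\supseteq R(w_{l_i})\cdot R(w_{l_i+1})\cdots R(w_{r_i})$, a concatenation of per-input-symbol languages $R(w_k)$ drawn from a \emph{constant-size} family (indexed by a state pair and a letter of $w$). Since the generators are disjoint subwords of $w$, the total number of concatenation factors over all blocks is at most $|w|$. Taking the common refinement of the induced factorizations of $u$ gives $u=\tilde u_1\cdots\tilde u_K$ with $K\leq|w|$, and the positionwise intersections $R'_j=\bigcap_i L(\B^i_j)$ are nonempty and, being intersections of members of a constant-size family of automata, have constant state complexity. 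The short $u'$ is then assembled as a concatenation of shortest representatives of the $R'_j$, with one factor taken from an arithmetic progression of lengths inside an infinite $R'_j$ so as to dodge the $O(|w|)$ forbidden words. Without this (or an equivalent) mechanism your argument does not establish the polynomial bound on $|u'|$, which is the entire content of the lemma.
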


\begin{proof}
A~replacement of query words in the chain $C(u)$ by a word $u^\prime$
is a valid transformation  iff relations $s_i \xhookrightarrow[u^\prime]{x_i} s_{i+1}$ hold for all blocks in the chain and the replacement of $u$ by $u^\prime$ does not break any other chain.  

It is sufficient to  satisfy relations $s_i
\xhookrightarrow[u^\prime]{x_i} s_{i+1}$ for non-$\eps$ blocks
only. Indeed, the query word  $u$ of the chain belongs to some language $ A_\al$ from the family $\CL_M$.  Due to this definition  the relation $s_i \xhookrightarrow[A_\alpha]{\Sigma^*} s_{i+1}$ holds for all blocks of the chain and the relation $s_i \xhookrightarrow[A_\alpha]{\eps} s_{i+1}$  holds for all $\eps$-blocks of the chain. 
Thus, if  $s_i \xhookrightarrow[u^\prime]{x_i} s_{i+1}$ holds for  $x_i\ne \eps$, then $u'\in A_\alpha$ and 
 the relation $s_i \xhookrightarrow[u^\prime]{\eps} s_{i+1}$ also holds for $x_i=\eps$.

To preserve all other chains, it is sufficient that a word
$u'$ does not belong to a short finite set $F$ of forbidden words. The forbidden words are $u^{\pm}_\alpha$ and all the words $u_i$ (except of~$u$) such that there is a chain $C(u_i)$ in the partition run~\eqref{2ndSArun}. Note that $|F|$ is $O(|w|)$ since the number of $u^{\pm}_\alpha$ is $O(1)$ by Lemma~\ref{TwoWordsLemma} and the number of all non-$\eps$-chains is $O(|w|)$.

	Denote by $R_i$ a regular language such that the relation $s_i
	\xhookrightarrow[R_i]{x_i } s_{i+1}$ holds, $x_i\neq\eps$. To
	prove the lemma it is sufficient to  prove that there is a
	short word in $\bigcap R_i \setminus F$.
	
Note that the number of $R_i$ is bounded by $|w|$, but it may grow with $|w|$. So if $R_i$ were arbitrary regular languages, the intersection $\bigcap R_i$ could contain no short words. 

To overcome this difficulty we exploit a specific form of languages $R_i$. Let $x_i = w_{l_i}w_{l_{i}+1}\cdots w_{r_i}, w_k \in \Sigma$. Let $q_k$ be the states such that during the run specified by~\eqref{2ndSArun} there is a transition from $q_k$ to $q_{k+1}$ processing the symbol $w_k  $ on the input tape; if $k = l_i$, then $q_k = s_i$; if $k+1 = r_i$, then $q_{k+1} = s_{i+1}$. There could be many choices for $q_k$ inside the block $s_i \xrightarrow{x_i} s_{i+1}$ -- there are no restrictions on that choice. Define $R(w_k)$ to be the language such that the relation  $q_k \xrightarrow[R(w_k)]{w_k} q_{k+1}$ determined by the extractor $T_M$ holds. In other words, $R(w_k)$ consists of all words that can be written on the work tape while NSA processes the symbol $w_k$ starting from $q_k$ and finishing at $q_{k+1}$ (all intermediate states should be in $S_\wr$). 
All these languages are regular and the number of them is constant (does not depend on an input $w$). So, we get that $R_i \supseteq R(w_{l_i})\cdot R(w_{l_i+1})\cdots R(w_{r_i}) = R^i_1\cdot R^i_2 \cdots R^i_{m_i}$, where $R^i_j = R(w_{l_i+j-1}),\, m_i = r_i-l_i+1$.
        
Since the number of $x_i \neq \eps$ does not exceed $|w|$, we get that $\sum_i m_i \leq |w| $. Note that $u \in \bigcap\limits_{i=1}^{m} R^i_1\cdot R^i_2\cdots R^i_{m_i} \setminus F$ by the definition of $R^i_j$.
It implies that for each $i$ there is a factorization of $u = v_1\cdot v_2\cdots v_{m_i}$ such that $v_j \in R^i_j$.  So, there exists a partition $u = \tilde u_1\tilde u_2\cdots \tilde u_K$, $\tilde u_k \in \Gamma^*$, such that all factors of the above factorizations have the form $\tilde u_l \tilde u_{l+1}\cdots \tilde u_{r} \in R^i_j$. It is easy to see that the length $K$ of the partition satisfies inequality $K \leq \sum_i m_i \leq |w|$.   
          
Let NFA $\A^i_j$ recognizes $R^i_j$. For each $\A^i_{j}$ there exists a successful run $$q_l \xrightarrow{\tilde u_l} q_{l+1} \xrightarrow{\tilde u_{l+1}} q_{l+2}\to \cdots \to q_k \xrightarrow{\tilde u_k} q_{k+1} \to \cdots   \xrightarrow{\tilde u_{r}} q_{r+1},$$ 
where $q_l$ is the initial state and $q_{r+1}$ is an accepting state of $\A^i_j$. Define NFA $\B^i_k$ as a modification of $\A^i_{j}$ by replacement  of the initial state to $q_k$ and the set of accepting states to $\{q_{k+1}\}$. Then $L(\A^i_{j}) \supseteq L(\B^i_{l})\cdot L(\B^i_{l+1})\cdots L(\B^i_{r})$ and
                  $$ R_i \supseteq L(\B^i_{1})\cdot L(\B^i_{2}) \cdots L(\B^i_{K}).$$

Note that for each $j$ the intersection $\bigcap\limits_i L(\B^i_j) = R^\prime_j$ is nonempty and contains~$\tilde u_j$.
  
 We prove that required $u^\prime$ exists and belongs to $R^\prime_1\cdot R^\prime_2 \cdots R^\prime_K $. If so, then $u^\prime$ evidently belongs to $\bigcap\limits_i R_i$, since $\bigcap\limits_i R_i \supseteq R^\prime_1\cdot R^\prime_2 \cdots R^\prime_K$ .

As was already mentioned for other objects, the number of all possible $\B^i_j$ is constant (does not  depend on $w$). The same is true for the state complexities of~$R^\prime_j$. If all the $R^\prime_j$ are finite languages, then the length of the longest word from $R^\prime_j$ is $O(1)$  and therefore $u^\prime = u$ is a short word: it belongs to $R^\prime_1\cdot R^\prime_2 \ldots R^\prime_K $ by the construction. If at least one language $R^\prime_j$ is infinite, then, as in the proof of Lemma~\ref{TwoWordsLemma}, it contains a sequence of words with linear length's growth. Fix all the shortest words  $u_k^\prime \in R^\prime_k$, $k \neq j$, and choose the first word $u^\prime_j \in R^\prime_j$ from this sequence such that $u^\prime = u^\prime_1\cdot u^\prime_2 \cdots u^\prime_j \cdots u^\prime_K \not\in F$. The word $u^\prime$ is short since $|F|$ is $O(|w|)$.
\qed\end{proof}

Tying up loose ends we get the main result.

\begin{theorem}\label{SAinNP}
 There is an NP-algorithm 
verifying for an input word $w$ whether $w \in L(M) $, where $M$ is an NSA.

\end{theorem}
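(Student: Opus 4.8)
The plan is to combine the three valid transformations established above into a single procedure that turns any accepting run partition of $M$ on $w$ into a \emph{short} one, and then to let the NP-algorithm guess and verify such a short partition. Recall that $w\in L(M)$ holds precisely when $M$ (which we may assume to be in AANF by Lemma~\ref{AANFlemma}) admits an accepting run on $w$, recorded by an accepting run partition of the form~\eqref{SArun}. Throughout, the asymptotics are in $|w|$, so the state set, the transition relation and the family $\CL_M$ are of constant size.

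First I would fix an arbitrary accepting run partition and apply the transformations in the order in which they were prepared. Lemma~\ref{TwoWordsLemma} ($\eps$-chain unification) replaces all query words occurring in $\eps$-chains by short words $u^{+}_\alpha,u^{-}_\alpha\in A_\alpha$, keeping the partition accepting; afterwards every $\eps$-chain $C^{-}_\alpha$ is standalone and the lengths of the words $u^{\pm}_\alpha$ are $O(|w|)$. Next Lemma~\ref{EpsSegmentLemma} ($\eps$-shrinkage) applies to this result and makes every $\eps$-segment short while leaving non-$\eps$ blocks untouched. Finally I would apply Lemma~\ref{ChainReplacementLemma} once to each non-$\eps$-chain $C(u)$, replacing its pivot by a short word $u'$; the forbidden set used there contains the words $u^{\pm}_\alpha$ together with the pivots of all the other chains, so successive applications neither collide with one another nor disturb the $\eps$-chains, and the partition remains accepting throughout.

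The resulting partition is short. Indeed, a block is a non-$\eps$ block only if its generator $x_i$ is non-empty, and since the generators are disjoint subwords that concatenate to $w$ there are at most $|w|$ of them; hence there are $O(|w|)$ non-$\eps$-chains. Each $\eps$-segment has short length by Lemma~\ref{EpsSegmentLemma}, and the number of $\eps$-segments is at most the number of non-$\eps$ blocks plus one, so the total number of query blocks is $\poly(|w|)$. After the three transformations every query word---those in $\eps$-chains and those in non-$\eps$-chains alike---has length $O(|w|)$, and the generators $x_i$ are disjoint subwords of $w$. Thus the whole run partition admits a description of size $\poly(|w|)$.

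The NP-algorithm then guesses such a short partition, i.e.\ the state sequence $s_0,\dots,s_n$, the generators $x_i$, the query words $u_i$ and the operations $\op_i\in\Ops$, and verifies in polynomial time that \textup{(a)}~$x_0x_1\cdots x_{n-1}=w$; \textup{(b)}~each relation $s_i\xhookrightarrow[u_i]{x_i}s_{i+1}$ holds in $M$, which is checkable directly since $x_i,u_i$ are short and $M$ is fixed; \textup{(c)}~the induced protocol $\#u_0\#\op_0\cdots\#u_{n-1}\#\op_{n-1}$ is correct, using the criterion of Lemma~\ref{ProtocolCorrectnessLemma} by scanning the $\poly(|w|)$ blocks while maintaining the (short) set content; and \textup{(d)}~$s_n\in F$. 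Each check runs in polynomial time, so $w\in L(M)$ is decided in $\NP$, which together with Theorem~\ref{ConeTheorem} yields $\SA\subseteq\NP$. The genuine difficulty of the whole argument lies in the three transformation lemmas; the point to watch in the present assembly is that Lemma~\ref{ChainReplacementLemma} is stated for a single chain, so one must confirm that treating the $O(|w|)$ non-$\eps$-chains one after another---each time enlarging the forbidden set by the newly chosen pivot---keeps all previously shortened chains intact and the partition accepting.
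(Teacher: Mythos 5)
Your proposal is correct and follows essentially the same route as the paper: apply the $\eps$-chain unification (Lemma~\ref{TwoWordsLemma}), then the $\eps$-shrinkage (Lemma~\ref{EpsSegmentLemma}), then Lemma~\ref{ChainReplacementLemma} once per non-$\eps$-chain to obtain a short accepting run partition, which the NP-algorithm guesses and verifies. Your added remarks on the verification step and on why successive chain replacements do not interfere (via the forbidden set $F$) are consistent with, and slightly more explicit than, the paper's own argument.
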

\begin{proof}
W.l.o.g.  $M$ is NSA in AANF.
We will show that for any $w \in L(M)$  there exists a short accepting run. Thus, an NP-algorithm guesses a short run and checks its correctness. 

By definition, there is a accepting run for $w$.
At first, 
we apply to it an $\eps$-chain unification. By 
 Lemma~\ref{TwoWordsLemma} it gives an accepting run on the input $w$ such that all $\eps$-chains contain only short query words. 

Then we apply an $\eps$-shrinkage. By Lemma~\ref{EpsSegmentLemma} it gives an accepting run on the input $w$ such that all $\eps$-chains contain only short query words and all $\eps$-segments are short. 

After that we  transform all query words in non-$\eps$-segments to short ones. Note that there is no more than $|w|$ non-$\eps$-segments in any run and therefore there is no more than $|w|$ non-$\eps$-chains. We apply Lemma~\ref{ChainReplacementLemma} no more than $|w|$ times to get a run in which  
all query words are short and all $\eps$-segments are also short. Therefore this run is short.
\qed\end{proof}

\section*{Acknowledgments}
The authors are thankful to Dmitry Chistikov for the feedback and a discussion of the text's results,   and for suggestions for improvements. We are also much indebted to Mikhail Volkov and to anonymous referees for their helpful comments.



\bibliographystyle{splncs03}
\bibliography{sa_paper_springer}

\begin{thebibliography}{10}
\providecommand{\url}[1]{\texttt{#1}}
\providecommand{\urlprefix}{URL }

\bibitem{BerstelBook}
Berstel, J.: Transductions and context-free languages. Ed. Teubner (1979)

\bibitem{Kozen:1977:LBN:1382431.1382559}
Kozen, D.: Lower bounds for natural proof systems. In: Proceedings of the 18th
  Annual Symposium on Foundations of Computer Science. pp. 254--266. SFCS '77,
  IEEE Computer Society, Washington, DC, USA (1977),
  \url{http://dx.doi.org/10.1109/SFCS.1977.16}

\bibitem{KutribDLT2014}
Kutrib, M., Malcher, A., Wendlandt, M.: Deterministic set automata. In: Shur,
  A.M., Volkov, M.V. (eds.) DLT 2014, LNCS, vol. 8633, pp. 303--314. Springer,
  Cham (2014)

\bibitem{KutribDCFS2014}
Kutrib, M., Malcher, A., Wendlandt, M.: Regularity and size of set automata.
  In: J{\"u}rgensen, H., Karhum{\"a}ki, J., Okhotin, A. (eds.) DCFS 2014, LNCS,
  vol. 9118, pp. 282--293. Springer, Cham (2014)

\bibitem{KutribSApaper2016}
Kutrib, M., Malcher, A., Wendlandt, M.: Set automata. International Journal of
  Foundations of Computer Science  27(02),  187--214 (2016)

\bibitem{Ladner:1975:CVP}
Ladner, R.E.: The circuit value problem is log space complete for {P}. SIGACT
  News  7(1),  18--20 (Jan 1975)

\bibitem{Lange1992}
Lange, K.J., Rossmanith, P.: The emptiness problem for intersections of regular
  languages, pp. 346--354. Springer Berlin Heidelberg, Berlin, Heidelberg
  (1992), \url{https://doi.org/10.1007/3-540-55808-X_33}

\bibitem{Lange96setautomata}
Lange, K.J., Reinhardt, K.: Set automata. In: Combinatorics, Complexity and
  Logic; Proceedings of the DMTCS'96. pp. 321--329. Springer (1996)

\bibitem{RVRR2015DCFS}
Rubtsov, A.A., Vyalyi, M.N.: Regular realizability problems and context-free
  languages. In: Shallit, J., Okhotin, A. (eds.) DCFS 2015, LNCS, vol. 9118,
  pp. 256--267. Springer (2015)

\bibitem{RV17}
Rubtsov, A.A., Vyalyi, M.N.: On Computational Complexity of Set Automata, pp.
  332--344. Springer International Publishing, Cham (2017),
  \url{https://doi.org/10.1007/978-3-319-62809-7_25}

\end{thebibliography}

\end{document}